\providecommand{\U}[1]{\protect\rule{.1in}{.1in}}
\providecommand{\U}[1]{\protect\rule{.1in}{.1in}}
\newtheorem{theorem}{Theorem}
\newtheorem{acknowledgement}[theorem]{Acknowledgement}
\newtheorem{corollary}[theorem]{Corollary}
\newtheorem{definition}[theorem]{Definition}
\newtheorem{lemma}[theorem]{Lemma}
\newtheorem{proposition}[theorem]{Proposition}
\newtheorem{remark}[theorem]{Remark}
\newenvironment{proof}[1][Proof]{\noindent\textbf{#1.} }{\ \rule{0.5em}{0.5em}}
\begin{document}

\date{}
\title{On the blow up and condensation of supercritical solutions of the Nordheim equation for bosons.}
\maketitle

\bigskip

\begin{center}
\bigskip M. Escobedo\footnotemark[1]$^{,}$\footnotemark[2], J. J. L.
Vel\'{a}zquez\footnotemark[3] \bigskip
\end{center}

\footnotetext[1]{Departamento de Matem\'{a}ticas. Universidad del Pa\'{\i}s
Vasco UPV/EHU. Apartado 644. E-48080 Bilbao, Spain. E-mail:
miguel.escobedo@ehu.es} \footnotetext[2]{Basque Center for Applied Mathematics
(BCAM), Alameda de Mazarredo 14, E--48009 Bilbao, Spain.} \footnotetext[3]%
{Institute of Applied Mathematics, University of Bonn, Endenicher Allee 60,
53115 Bonn, Germany. E-mail: velazquez@iam.uni-bonn.de}

\noindent\textbf{Abstract.} In this paper we prove that the solutions of the isotropic, spatially homogeneous Nordheim equation for bosons, with bounded initial, data blow up in finite time in the $L^\infty$ norm if the values of the energy and particle density are in the range of values where the corresponding equilibria contains a Dirac mass.
We also prove that,  in  the weak solutions, whose initial data are measures with values of particle and energy densities satisfying the previous condition, a Dirac measure at the origin forms in finite time.

\noindent\textbf{Key words.} Nordheim equation, dissipation of entropy, finite
time blow-up, Bose Einstein condensation.

\section{Introduction.}

\setcounter{equation}{0} \setcounter{theorem}{0} In this paper we continue the
study of the blow-up properties of the solutions of the homogeneous Nordheim
equation initiated in \cite{EV1}. This system describes the dynamics of a
dilute homogeneous quantum gas of bosons. We will denote as $F\left(
t,p\right)  ~$the distribution of particles in the momentum space. The
evolution of $F$ is given by the following system of equations (cf.
\cite{Nor}):%
\begin{align}
\partial_{t}F_{1} &  =\int_{\mathbb{R}^{3}}\int_{\mathbb{R}^{3}}%
\int_{\mathbb{R}^{3}}q\left(  F\right)  \mathcal{M}d^{3}p_{2}d^{3}p_{3}%
d^{3}p_{4}\ \ ,\ \ p_{1}\in\mathbb{R}^{3}\ \ ,\ \ t>0\label{E2}\\
F_{1}\left(  0,p\right)   &  =F_{0}\left(  p\right)  \ \ ,\ \ p_{1}%
\in\mathbb{R}^{3}\label{E2b}%
\end{align}

\begin{align}
q\left(  F\right)   &  =q_{3}\left(  F\right)  +q_{2}\left(  F\right)
\ \ \ \ ,\ \ \ \ \ \epsilon=\frac{\left\vert p\right\vert ^{2}}{2}%
\label{T4E1a}\\
\mathcal{M} &  =\mathcal{M}\left(  p_{1},p_{2};p_{3},p_{4}\right)
=\delta\left(  p_{1}+p_{2}-p_{3}-p_{4}\right)  \delta\left(  \epsilon
_{1}+\epsilon_{2}-\epsilon_{3}-\epsilon_{4}\right) \label{T4E1b}%
\end{align}%
\begin{align}
q_{3}\left(  F\right)   &  =F_{3}F_{4}\left(  F_{1}+F_{2}\right)  -F_{1}%
F_{2}\left(  F_{3}+F_{4}\right) \label{Q1E1}\\
q_{2}\left(  F\right)   &  =F_{3}F_{4}-F_{1}F_{2}\label{Q1E2}%
\end{align}
where we use the notation $F_{j}=F\left(  t,p_{j}\right)  ,\ j\in
\mathbb{R}^{3}.$ Notice that, since we consider homogeneous distributions, the
density $F$ measures the number of particles per unit of volume, i.e., the
number of particles with moment in the cube $\left[  p,p+d^{3}p\right]  $ in a
volume $V$ would be given by $F\left(  p\right)  Vd^{3}p.$

In the case of isotropic distributions the system (\ref{E2})-(\ref{Q1E2}) can
be rewritten in a simpler form. The isotropy of the solutions means that
$F\left(  t,p\right)  =F\left(  t,\mathcal{R}p\right)  \ \ $for
any$\ \ \mathcal{R}\in SO\left(  3\right)  \ ,\ \ p\in\mathbb{R}^{3}%
,\ \ t\geq0.$ Then, there exists a function $f=f\left(  \epsilon,t\right)  $
where $\epsilon$ is as in (\ref{T4E1a}) such that $f\left(  t,\epsilon\right)
=F\left(  t,p\right)  \ $which solves:%
\begin{equation}
\partial_{t}f_{1}=\frac{8\pi^{2}}{\sqrt{2}}\int_{0}^{\infty}\int_{0}^{\infty
}q\left(  f\right)  Wd\epsilon_{3}d\epsilon_{4}\ \label{F3E2}%
\end{equation}
where:%
\begin{equation}
W=\frac{\min\left\{  \sqrt{\epsilon_{1}},\sqrt{\epsilon_{2}},\sqrt
{\epsilon_{3}},\sqrt{\epsilon_{4}}\right\}  }{\sqrt{\epsilon_{1}}%
}\ \ ,\ \ \epsilon_{2}=\epsilon_{3}+\epsilon_{4}-\epsilon_{1}\ \label{F3E3}%
\end{equation}
and $q\left(  \cdot\right)  $ is as in (\ref{T4E1a}) with $\epsilon
_{2}=\epsilon_{3}+\epsilon_{4}-\epsilon_{1}$. More details about this
computation as well as additional information about the physics of the
Nordheim equation can be found in \cite{EV1}.

A theory of global weak solutions for (\ref{F3E2}), (\ref{F3E3})
has been developed by X. Lu in the papers \cite{Lu1, Lu2, Lu3}. Uniqueness of the weak solutions for a given
initial datum is unknown. On the other hand,  the local (in time) existence and uniqueness of mild solutions of (\ref{F3E2}), (\ref{F3E3}) was proved in \cite{EV1}. These results are recalled in detail in Section 2.
\\
A different concept of solution for (\ref{F3E2}) has
been introduced in \cite{EMV1}, \cite{EMV2} where a class of classical
solutions of (\ref{F3E2}) which behave as \thinspace$f\left(  \epsilon
,t\right)  \sim a\left(  t\right)  \epsilon^{-\frac{7}{6}}$ as $\epsilon
\rightarrow0$ has been obtained. Such solutions have a decreasing number of
particles with $\epsilon>0$.

One of the main interests of the study of (\ref{E2})-(\ref{Q1E2}) or its
isotropic counterpart (\ref{F3E2}) is the relation of these systems with the
dynamical formation of Bose-Einstein condensates (cf. \cite{JPR}, \cite{LLPR},
\cite{ST1}, \cite{ST2}). It can be argued, on physical grounds, that the
steady states of (\ref{E2}), (\ref{T4E1a})-(\ref{Q1E2}) are the Bose-Einstein
distributions:%
\begin{equation}
F_{BE}\left(  p\right)  =m_{0}\delta\left(  p-p_{0}\right)  +\frac{1}%
{\exp\left(  \frac{\beta\left\vert p-p_{0}\right\vert ^{2}}{2}+\alpha\right)
-1}\ \label{St1}%
\end{equation}
where $m_{0}\geq0,$ $\beta\in\left(  0,\infty\right]$,  $0\leq\alpha<\infty$
and $\alpha\, m_{0}=0,\ p_{0}\in\mathbb{R}^{3}$.\ The precise sense in
which the distributions $F_{BE}$ are steady states requires further
clarification, because the right-hand side of (\ref{E2}) is not defined in
general if $F$ contains Dirac measures. The main reason to consider the
distributions $F_{BE}$ as the steady states of (\ref{E2}), (\ref{T4E1a}%
)-(\ref{Q1E2}) is that these distributions maximize the entropy of the system
of bosons under consideration for a given value of the momentum, energy and
number of particles of the system for unit of volume (cf. (\cite{Huang})). 
It has been  proved in \cite{Lu2} that the weak solutions of  (\ref{F3E2}), (\ref{F3E3}) 
 converge, as $t\to \infty$, to the only equilibrium with the same particle number and energy then their initial data.

One of the most peculiar features of the steady states $F_{BE}$ in (\ref{St1})
is the possible presence of a macroscopic fraction of particles at the value
$p=p_{0}.$ This feature is known as Bose-Einstein condensation. We will assume
in the following that the total momentum of the system is zero. This can be
always assumed choosing a suitable inertial system. In this case the
thermodynamics of the system of bosons can be described by means of two
quantities, namely the number of particles by unit of volume $M$ and the
energy for unit of volume $E:$%
\[
M=\int F_{BE}\left(  p\right)  d^{3}p\ \ ,\ \ E=\int F_{BE}\left(  p\right)
\frac{\left\vert p\right\vert ^{2}}{2}d^{3}p
\]

It is possible to associate unique values of $m_{0},\ \beta$ and $\alpha$,
with $\alpha\cdot m_{0}=0$ to any pair of values $M>0,\ E>0.$ We can split the
set of all the values for the particle density and energy $\left\{  \left(
M,E\right)  :M>0,\ E>0\right\}  $ in two different phases, namely the set of
values for which the corresponding value of $m_{0}$ in (\ref{St1}) is
positive, and those values of $M, E$ for which $m_{0}=0.$ We will say that
the first class of states has a Bose-Einstein condensate, while the second
class of states does not have a condensate. It is not hard to see (cf.
Subsection \ref{StatSol}) that the set of values $\left(  M, E\right)  $ for
which there is a nontrivial condensate is characterized by:%
\begin{equation}
M>\frac{\zeta\left(  \frac{3}{2}\right)  }{\left(  \zeta\left(  \frac{5}%
{2}\right)  \right)  ^{\frac{3}{5}}}\left(  \frac{4\pi}{3}\right)  ^{\frac
{3}{5}}E^{\frac{3}{5}}\label{S1E1}%
\end{equation}
where $\zeta\left(  \cdot\right)  $ is the Riemann zeta function.

The theory of Bose-Einstein condensates described above is meaningful for
particle systems described by the equilibrium distributions $F_{BE}$ in
(\ref{St1}). On the other hand, the equation (\ref{E2})-(\ref{Q1E2}) can be
used to describe the evolution of a large class of particle distributions $F_{0}$
satisfying some general condtions which ensure that the particle and energy
densities remain constant for arbitrary times. It has been suggested in the
physical literature (cf. \cite{JPR}, \cite{LLPR}, \cite{ST1}, \cite{ST2}) that
the onset of a macroscopic fraction of particles at the lowest energy level
(i.e. a condensate) is related to the blow-up at some finite time $T^*$ of the solutions of (\ref{E2}),
(\ref{T4E1a})-(\ref{Q1E2}). More precisely, the numerical simulations in
\cite{LLPR}, \cite{ST1}) show the existence of isotropic solutions of
(\ref{F3E2}), with initial data $f\left(  0,\epsilon\right)  =f_{0}\left(
\epsilon\right)  $, that become unbounded in finite time. It has actually been proved in \cite{EV1} that there exist a large
class of bounded initial distributions $f_{0}$ for which the corresponding
solutions of (\ref{F3E2}) blow-up in finite time. To wit, a blow-up condition
has been obtained which states that for any initial particle distribution with
a large concentration of particles with small energy, the corresponding
solution of (\ref{F3E2}) becomes unbounded in finite time. 

The scenario for the dynamical formation of Bose-Einstein condensates presented in   (cf. \cite{JPR}, \cite{LLPR}, \cite{ST1}, \cite{ST2}) suggests that, after this finite time blow up,  the measure  $g\left(  t,\epsilon\right)  =4\pi
\sqrt{2\epsilon}f\left(  t,\epsilon\right)  $  contains a Dirac mass at the origin for all later times $t>T^*$. It was proved in \cite{EV1} that if the initial data  $g_0$ satisfies a certain condition, that amounts to have a sufficiently large concentration of particles with small energy,  then  $g(t, \epsilon)$ contains a Dirac mass at the origin for all times greater than some finite $T _{ cond }\ge T^*$.  It has not been proved that $T _{ cond }=T^*$.  

We now define precisely the concept of subcritical and supercritical data. Given a
distribution of particles $f_{0}$ we can evaluate the corresponding number of
particles and energy respectively by means of:%
\[
4\pi\int_{0}^{\infty}f_{0}\left(  \epsilon\right)  \sqrt{2\epsilon}%
d\epsilon=M\ \ ,\ \ 4\pi\int_{0}^{\infty}f_{0}\left(  \epsilon\right)
\sqrt{2\epsilon^{3}}d\epsilon=E
\]

We will say that $f_{0}$ is supercritical if the inequality (\ref{S1E1})
holds. If, on the contrary, we have $M<\frac{\zeta\left(  \frac{3}{2}\right)
}{\left(  \zeta\left(  \frac{5}{2}\right)  \right)  ^{\frac{3}{5}}}\left(
\frac{4\pi}{3}\right)  ^{\frac{3}{5}}E^{\frac{3}{5}}$ we will say that the
distribution $f_{0}$ is subcritical. Notice that given a supercritical
distribution, the corresponding stationary distribution having the same amount
of particles and energy would have a nontrivial condensate.

We prove in this paper that any solution of Nordheim equation, initially
bounded and supercritical,  becomes unbounded in finite time $T _{ max}$ and develops a condensate in   finite time $T_0\ge T _{ max }$.

An interesting fact that is worth noticing is that the quadratic terms in
(\ref{E2}), (cf. (\ref{Q1E2})) that are in principle lower order terms, play a
crucial role in the onset of blow-up. The reason is that they produce a
transfer of mass from energies of order one to small energies.

We now sketch the main idea in the proof of the blow up  result of this paper. 
The
key ingredient is the blow-up condition for mild solutions in \cite{EV1}. There are two main
conditions needed to apply such blow-up condition. First we need to have
$f\geq\nu>0$ in some average sense for small values of $\epsilon.$ On the
other hand, we need to have an amount of mass of order $\rho^{\theta_{*}}$,
for some $\theta_{*}>0$, in a interval $\left[  0,\rho\right]  ,$ with $\rho$
small. In this paper we prove that every bounded solution of the Nordheim
equation satisfies these conditions for sufficiently long times. To this end
we use mainly two arguments. We first prove that the quadratic terms of
the\ Nordheim equation transport an amount of mass towards small values of the
energy, and that this transport cannot be balanced by the corresponding loss
terms. As a consequence we obtain an amount of mass of order $\nu R^{\frac
{3}{2}}$ for some $\nu>0$ in any interval $\left[  0,R\right]  $ for
sufficiently long times. The second argument that we use is the existence of
an increasing entropy for the Nordheim equation. The corresponding formula for
the dissipation of the entropy can be used to prove that, at least along some
sequences of time, the distribution $f$ approaches one stationary state of the
Nordheim equation. The total energy and number of particles of the admissible
stationary states obtained corresponds to the supercritical regime of the
Bose-Einstein condensation.\ This implies that there exists a positive amount
of mass in a small interval $\left[  0,\rho\right]  $ with $\rho>0$ small for
times sufficiently long. The condition in \cite{EV1} implies then blow-up in
finite time for the solutions of the Nordheim equation.

Similar arguments are applied to the weak solutions of  (\ref{F3E2}), (\ref{F3E3}) in order to prove
condensation in finite time.

We also remark that the blow-up  and the condensation conditions are local. Therefore, it is not
difficult to obtain initial data $f_{0}\left(  x\right)  $ yielding blow-up or condensation in
finite time for subcritical initial data.

The plan of this paper is the following. In Section 2 we recall the concept of
solution which we will use as well as the Local Well-Posedness Theorem proved
in \cite{EV1}. Section 3 states the main  results proved in this paper.
Section 4 summarizes some results proved in \cite{EV1} which will be used in
the proof of the main result of this paper. Section 5 contains one of the key
ingredients of this paper, namely, a lower estimate for the amount of mass
contained in regions $\epsilon\in\left[  0,R\right]  $ with $R$ small. This
estimate is due to the effect of the quadratic, Boltzmann terms in
(\ref{F3E2}). Section 6 contains a rigorous proof of the entropy dissipation
formula for the solutions considered in this paper. Section 7 contains a
technical Lemma that estimates the maximum amount of particles contained in
regions $\epsilon\geq R,$ with $R$ small, in terms of the total energy of the
solution. Section 8 proves that for the so-called supercritical data, the
distribution of particles $g\left(  t,\epsilon\right)  $ contains a positive
amount of mass in the region where $\epsilon$ is small, at least for some
subsequences of time. Section 9 contains the proof of the main blow up theorem of this
paper. In Section 10 we prove  the formation in finite time of  a Dirac mass at the origin for 
weak solutions whose mass and energy are supercritical.
Section 11 explains how to apply the result in \cite{EV1} in order to
obtain finite time blow-up and condensation  for the solutions of (\ref{S1E1}) for some subcritical
distributions of particles. 
\section{Well-posedness results: Weak and mild solutions}

\setcounter{equation}{0} \setcounter{theorem}{0}

In order to formulate the main  results of this paper we need to
recall some previous results about equation (\ref{F3E2}), (\ref{F3E3}).

We first recall a well-posedness result obtained in \cite{EV1}.To this end we precise
the concept of mild solution of (\ref{F3E2}), (\ref{F3E3}) that we will use.

\subsection{Definition of mild solutions.}

Given $\gamma\in\mathbb{R}$ we will denote as $L^{\infty}\left(
\mathbb{R}^{+};\left(  1+\epsilon\right)  ^{\gamma}\right)  $ the space of
functions such that:%
\[
\left\Vert f\right\Vert _{L^{\infty}\left(  \mathbb{R}^{+};\left(
1+\epsilon\right)  ^{\gamma}\right)  }=\sup_{\epsilon\geq0}\left\{  \left(
1+\epsilon\right)  ^{\gamma}f\left(  \epsilon\right)  \right\}  <\infty
\]
Notice that $L^{\infty}\left(  \mathbb{R}^{+};\left(  1+\epsilon\right)
^{\gamma}\right)  $ is a Banach space with the norm $\left\Vert \cdot
\right\Vert _{L^{\infty}\left(  \mathbb{R}^{+};\left(  1+\epsilon\right)
^{\gamma}\right)  }.$ Given $T_{2}>T_{1}>0,$ we define $L_{loc}^{\infty
}\left(  \left[  T_{1},T_{2}\right)  ;L^{\infty}\left(  \mathbb{R}^{+};\left(
1+\epsilon\right)  ^{\gamma}\right)  \right)  $ as the set of functions
satisfying:%
\[
\sup_{t\in K}\left\Vert f\left(  t,\cdot\right)  \right\Vert _{L^{\infty
}\left(  \mathbb{R}^{+};\left(  1+\epsilon\right)  ^{\gamma}\right)  }<\infty
\]
for any compact $K\subset\left[  T_{1},T_{2}\right)  .$ Notice that these
spaces are not Banach spaces. We also define the space $L^{\infty}\left(
\left[  T_{1},T_{2}\right]  ;L^{\infty}\left(  \mathbb{R}^{+};\left(
1+\epsilon\right)  ^{\gamma}\right)  \right)  $ which is the Banach space of
functions such that:
\[
\left\Vert f\right\Vert _{L^{\infty}\left(  \left[  T_{1},T_{2}\right]
;L^{\infty}\left(  \mathbb{R}^{+};\left(  1+\epsilon\right)  ^{\gamma}\right)
\right)  }=\sup_{t\in\left[  T_{1},T_{2}\right]  }\left\Vert f\left(
t,\cdot\right)  \right\Vert _{L^{\infty}\left(  \mathbb{R}^{+};\left(
1+\epsilon\right)  ^{\gamma}\right)  }<\infty
\]

\begin{definition}
\label{mild}Suppose that $\gamma>3,\ T_{2}>T_{1}>0.$ We will say that a
function
\[
f\in L_{loc}^{\infty}\left(  \left[  T_{1},T_{2}\right)  ;L^{\infty}\left(
\mathbb{R}^{+};\left(  1+\epsilon\right)  ^{\gamma}\right)  \right)
\]
is a mild solution of (\ref{F3E2}), (\ref{F3E3}) if it satisfies:%
\begin{equation}
f\left(  t,\epsilon_{1}\right)  =f_{0}\left(  \epsilon_{1}\right)  \Psi\left(
t,\epsilon_{1}\right)  +\frac{8\pi^{2}}{\sqrt{2}}\int_{0}^{t}\frac{\Psi\left(
t,\epsilon_{1}\right)  }{\Psi\left(  s,\epsilon_{1}\right)  }\int_{0}^{\infty
}\int_{0}^{\infty}f_{3}f_{4}\left(  1+f_{1}+f_{2}\right)  Wd\epsilon
_{3}d\epsilon_{4}ds\nonumber
\end{equation}
$a.e.$ $t\in\left[  T_{1},T_{2}\right)  $, where:
\begin{equation}
a\left(  t,\epsilon_{1}\right)  =\frac{8\pi^{2}}{\sqrt{2}}\int_{0}^{\infty
}\int_{0}^{\infty}f_{2}\left(  1+f_{3}+f_{4}\right)  Wd\epsilon_{3}%
d\epsilon_{4}\ \ ,\ \ \Psi\left(  t,\epsilon_{1}\right)  =\exp\left(
-\int_{T_{1}}^{t}a\left(  s,\epsilon_{1}\right)  ds\right)  .\label{F3E6bb}%
\end{equation}

\end{definition}

\begin{remark}
Since $\gamma>3$ there exists a constant $C>0$ such that, for any $t\in\lbrack
T_{1},T_{2})$:
\[
\int_{0}^{\infty}\int_{0}^{\infty}f_{3}f_{4}\left(  1+f_{1}+f_{2}\right)
Wd\epsilon_{3}d\epsilon_{4}\leq C\left\Vert f\right\Vert _{L^{\infty}\left(
\mathbb{R}^{+};\left(  1+\epsilon\right)  ^{\gamma}\right)  }\left(
1+\left\Vert f\right\Vert _{L^{\infty}\left(  \mathbb{R}^{+};\left(
1+\epsilon\right)  ^{\gamma}\right)  }\right)
\]
The term $a\left(  t,\epsilon_{1}\right)  $ is bounded by $C\left(
1+\left\Vert f\right\Vert _{L^{\infty}\left(  \mathbb{R}^{+};\left(
1+\epsilon\right)  ^{\gamma}\right)  }\right)  \int_{0}^{\infty}\int
_{0}^{\infty}f_{2}Wd\epsilon_{3}d\epsilon_{4}$. By the definition of $W$, and
using $\epsilon_{2}$ as one of the integration variables, we estimate
$\int_{0}^{\infty}\int_{0}^{\infty}f_{2}Wd\epsilon_{3}d\epsilon_{4}$ as
\hfill\break$\int_{0}^{\infty}f_{2}\left(  \sqrt{\epsilon_{2}}+\sqrt
{\epsilon_{1}}\right)  Wd\epsilon_{2}\leq C\left\Vert f\right\Vert
_{L^{\infty}\left(  \mathbb{R}^{+};\left(  1+\epsilon\right)  ^{\gamma
}\right)  }.$ Therefore, all the terms in (\ref{F3E6bb}) are well defined\ for
$T_{1}\leq t<T_{2}$ if $f\in L_{loc}^{\infty}\left(  \left[  T_{1}%
,T_{2}\right)  ;L^{\infty}\left(  \mathbb{R}^{+};\left(  1+\epsilon\right)
^{\gamma}\right)  \right)  $.
\end{remark}

\subsection{Local well posedness of mild solutions}.

The following well-posedness result has been proved in \cite{EV1}.

\begin{theorem}
\label{localExistence}Suppose that $f_{0}\in L^{\infty}\left(  \mathbb{R}%
^{+};\left(  1+\epsilon\right)  ^{\gamma}\right)  $ with $\gamma>3.$ There
exists $T>0$ depending only on $\left\Vert f_{0}\left(  \cdot\right)
\right\Vert _{L^{\infty}\left(  \mathbb{R}^{+};\left(  1+\epsilon\right)
^{\gamma}\right)  }$ and a unique mild solution of (\ref{F3E2}), (\ref{F3E3})
in the sense of Definition \ref{mild}, $f\in L_{loc}^{\infty}\left(  \left[
0,T\right)  ;L^{\infty}\left(  \mathbb{R}^{+};\left(  1+\epsilon\right)
^{\gamma}\right)  \right)  $.

The obtained solution $f$ satisfies:%
\begin{equation}
4\pi\sqrt{2}\int_{0}^{\infty}f_{0}\left(  \epsilon\right)  \epsilon
^{w}d\epsilon=4\pi\sqrt{2}\int_{0}^{\infty}f\left(  t,\epsilon\right)
\epsilon^{w}d\epsilon,\ t\in\left(  0,T\right) ,\,w\in\left\{ \frac{1}%
{2},\ \frac{3}{2}\right\} .\label{F3E6c}%
\end{equation}

The function $f$ is in the space $W^{1,\infty}\left(  \left(  0,T\right)
;L^{\infty}\left(  \mathbb{R}^{+}\right)  \right)  $ and it satisfies
(\ref{F3E2}) $a.e.\ \epsilon\in\mathbb{R}^{+}$ for any $t\in\left(  0,T_{\max
}\right)  ,$ with initial datum $f\left(  \cdot,t\right)  =f_{0}\left(
\cdot\right)  .$ Moreover, $f$ can be extended as a mild solution of
(\ref{F3E2}), (\ref{F3E3}) to a maximal time interval $\left(  0,T_{\max
}\right)  $ with $0<T_{\max}\leq\infty.$ If $T_{\max}<\infty$ we have:%
\begin{equation}
\lim\sup_{t\rightarrow T_{\max}^{-}}\left\Vert f\left(  t,\cdot\right)
\right\Vert _{L^{\infty}\left(  \mathbb{R}^{+}\right)  }=\infty\label{F3E6b}%
\end{equation}

\end{theorem}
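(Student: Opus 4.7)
The plan is a Banach contraction argument in the weighted $L^\infty$ space for local existence and uniqueness, followed by standard testing arguments for the conservation laws, and a bootstrap estimate to upgrade the weighted blow-up to the unweighted criterion.

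First I would set up the fixed point on the closed ball $B_\tau = \{f \geq 0 : \|f\|_{L^\infty([0,\tau]; L^\infty(\mathbb{R}^+;(1+\epsilon)^\gamma))} \leq 2M_0\}$, where $M_0 = \|f_0\|_{L^\infty(\mathbb{R}^+;(1+\epsilon)^\gamma)}$, and define $\mathcal{T}$ to be the right-hand side of the mild formulation of Definition \ref{mild}, with $a[f]$ and $\Psi[f]$ given by (\ref{F3E6bb}) but computed from the argument $f$. Non-negativity of $a[f]$ forces $0\leq\Psi[f]\leq 1$, so $\mathcal{T}$ preserves non-negativity. Using the bounds in the Remark following Definition \ref{mild} (which rely crucially on $\gamma > 3$ through $W$ together with the change of variables $(\epsilon_3,\epsilon_4)\mapsto(\epsilon_2,\epsilon_3)$), the weighted norm of $\mathcal{T}[f]$ is bounded by $M_0 + \tau\cdot C(1+\|f\|)^3$, so choosing $\tau = \tau(M_0)$ small gives the self-map property.

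The contraction estimate follows in the same spirit from the trilinearity of the gain and the Lipschitz dependence of $\Psi$ on $a$: for $f,g\in B_\tau$ one has $|\Psi[f]-\Psi[g]|\leq \tau\sup_s|a[f]-a[g]|$, which telescopes to $\|\mathcal{T}[f]-\mathcal{T}[g]\|\leq C(M_0)\tau\|f-g\|$, so Banach's theorem yields a unique mild solution for $\tau$ small. The moment identities (\ref{F3E6c}) I would derive by testing (\ref{F3E2}) against $\epsilon^{1/2}$ and $\epsilon^{3/2}$ and exploiting the invariance of $\mathcal{M}\,q(f)$ under the permutations $(\epsilon_1,\epsilon_2)\leftrightarrow(\epsilon_3,\epsilon_4)$, $\epsilon_1\leftrightarrow\epsilon_2$, $\epsilon_3\leftrightarrow\epsilon_4$; the absolute convergence needed to apply Fubini comes again from $\gamma>3$. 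The Sobolev regularity is then immediate by differentiating the Duhamel representation in $t$.

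Standard iteration produces the maximal interval $(0,T_{\max})$: as long as $\|f(t,\cdot)\|_{L^\infty(\mathbb{R}^+;(1+\epsilon)^\gamma)}$ is finite, one can restart the local theorem. The main obstacle, and the only non-routine step, is to upgrade the weighted blow-up to the unweighted criterion (\ref{F3E6b}). I would do this by a bootstrap on $[0,T_{\max})$: assuming $\|f(t,\cdot)\|_{L^\infty(\mathbb{R}^+)}\leq K$, the cubic piece of the gain can be crudely bounded by $f_3 f_4(1+f_1+f_2)\leq K^2(1+3K)$, and after multiplying the mild formula by $(1+\epsilon_1)^\gamma$ the remaining $\int\int W\, d\epsilon_3 d\epsilon_4$ is controlled by the conserved moments $\int f(\epsilon_2^{1/2}+\epsilon_2^{3/2})\,d\epsilon_2 = c(M+E)$ — this is where the constraint $\epsilon_2=\epsilon_3+\epsilon_4-\epsilon_1$ and the $\min\{\sqrt{\epsilon_j}\}/\sqrt{\epsilon_1}$ structure of $W$ give enough decay in $\epsilon_1$ to absorb the weight $(1+\epsilon_1)^\gamma$. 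A Gronwall argument then bounds the weighted norm by a function of $K$, $M$, $E$ and $t$, contradicting the maximality of $T_{\max}$. This propagation step, showing that the weighted norm is tamed by the unweighted norm and the conserved quantities, is the technical heart of the proof.
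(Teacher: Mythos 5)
This theorem is quoted in the paper from [EV1]; the paper itself gives no proof, so I can only assess your argument on its own terms. The overall architecture (Banach fixed point, conservation via symmetrized test functions, continuation to $T_{\max}$, bootstrap to pass from the weighted to the unweighted blow-up criterion) is the right skeleton, but there is a concrete quantitative gap at what you yourself call "the technical heart of the proof," and it also undermines the self-map step of the contraction.

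The assertion that the gain term maps $L^{\infty}(\mathbb{R}^{+};(1+\epsilon)^{\gamma})$ into itself, on the strength of the $W$-structure alone, is false. Take $f(\epsilon)=(1+\epsilon)^{-\gamma}$ and write the quadratic gain in the variables $(\epsilon_{2},\epsilon_{3})$ with $\epsilon_{4}=\epsilon_{1}+\epsilon_{2}-\epsilon_{3}$. In the region where $\epsilon_{3}$ is the smallest of the four energies one has $W=\sqrt{\epsilon_{3}}/\sqrt{\epsilon_{1}}$ and, since $\epsilon_{4}\geq\max(\epsilon_{1},\epsilon_{2})$,
\begin{equation*}
\int_{0}^{\epsilon_{1}+\epsilon_{2}}\!\!f_{3}f_{4}\,W\,d\epsilon_{3}
\;\gtrsim\;\frac{(1+\max(\epsilon_{1},\epsilon_{2}))^{-\gamma}}{\sqrt{\epsilon_{1}}}\int_{0}^{\min(\epsilon_{1},\epsilon_{2})}\!\!(1+\epsilon_{3})^{-\gamma}\sqrt{\epsilon_{3}}\,d\epsilon_{3},
\end{equation*}
and integrating over $\epsilon_{2}\in[\epsilon_{1}/2,\epsilon_{1}]$ gives a lower bound of order $\sqrt{\epsilon_{1}}\,(1+\epsilon_{1})^{-\gamma}$, so that $(1+\epsilon_{1})^{\gamma}\int\int f_{3}f_{4}W\,d\epsilon_{3}d\epsilon_{4}$ grows like $\sqrt{\epsilon_{1}}$. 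The $W$-factor is $\leq 1$ and does not decay in $\epsilon_{1}$ along the diagonal $\epsilon_{3}\approx\epsilon_{4}\approx\epsilon_{1}/2$; it cannot absorb the weight. Consequently, both the self-map estimate "$\|\mathcal{T}[f]\|_{\gamma}\leq M_{0}+\tau C(1+\|f\|)^{3}$" and the Gronwall-based propagation of the weighted norm from the unweighted bound are broken as stated: the integrand picks up an unbounded factor $\sqrt{\epsilon_{1}}$ the moment the constant $1$ in $(1+f_{1}+f_{2})$ is involved (the $f_{1}$ and $f_{2}$ contributions are fine, since $f_{1}$ and $f_{2}$ supply the missing decay).

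The missing ingredient is the exponential damping $\Psi(t,\epsilon_{1})/\Psi(s,\epsilon_{1})=\exp(-\int_{s}^{t}a(\sigma,\epsilon_{1})\,d\sigma)$, which you discarded as merely $\leq1$. In fact $a(\sigma,\epsilon_{1})\gtrsim\sqrt{\epsilon_{1}}\int f(\sigma,\epsilon)\sqrt{\epsilon}\,d\epsilon$ for large $\epsilon_{1}$ (the $\epsilon_{3}$-integral of $W$ over $[0,\epsilon_{1}+\epsilon_{2}]$ is of size $\sqrt{\epsilon_{1}\epsilon_{2}}$ for $\epsilon_{2}\lesssim\epsilon_{1}$). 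Hence, once you have a quantitative lower bound on the mass $\int f\sqrt{\epsilon}\,d\epsilon$, the Duhamel factor supplies $\int_{0}^{t}e^{-c\sqrt{\epsilon_{1}}(t-s)}\,ds\lesssim\epsilon_{1}^{-1/2}$, exactly cancelling the $\sqrt{\epsilon_{1}}$ loss. This means the iteration ball must be arranged so that the argument $f$ carries a mass lower bound (e.g.\ by restricting to functions with $\int f\sqrt{\epsilon}\,d\epsilon$ pinned near that of $f_{0}$, or by a monotone/Kaniel--Shinbrot-type scheme in which the $\Psi$ is evaluated along the iterates); a ball defined purely by $\|f\|_{\gamma}\leq 2M_{0}$ contains $f\equiv 0$, for which $a\equiv 0$ and the damping disappears. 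Your proposal never invokes the loss term or the mass lower bound, and explicitly attributes the needed decay to $W$, which is incorrect. Secondary, lesser issues: the conservation identities (\ref{F3E6c}) are obtained by testing against the unbounded functions $\epsilon^{1/2}$, $\epsilon^{3/2}$, which requires a truncation argument before the symmetrization and Fubini steps can be legitimately applied; this should be spelled out rather than waved at.
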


\subsection{Weak solutions.}

The  Bose-Einstein condensation
phenomena, at the level of the kinetic equation (\ref{F3E2}), (\ref{F3E3}),  
is related in the physical literature
with the onset of a macroscopic fraction of particles at the energy level
$\epsilon=0$. In order to take that phenomena into account it is necessary to consider some notion of weak  solutions for  equation (\ref{F3E2}), (\ref{F3E3}).

The theory of weak solutions of (\ref{F3E2}), (\ref{F3E3}) has been
developed by X. Lu in \cite{Lu1, Lu2, Lu3}. It  allows to deal with
measure-valued solutions and suits very well to the purpose of  considering the finite time formation of Dirac 
mass in the solutions of (\ref{F3E2}), (\ref{F3E3}).

Since we are interested in the condensation
phenomena, it is convenient to use the equation for the mass density $g$,
instead of (\ref{F3E2}), (\ref{F3E3}) for $f$. 
Suppose that $g\left(  t,\epsilon\right)  =4\pi
\sqrt{2\epsilon}f\left(  t,\epsilon\right)  $ (cf. (\ref{F3E3a})). Then,
formally (\ref{F3E2}), (\ref{F3E3}) becomes:
\begin{equation}
\partial_{t}g_{1}=32\pi^{3}\int_{0}^{\infty}\int_{0}^{\infty}q\left(  \frac
{g}{4\pi\sqrt{2\epsilon}}\right)  \Phi d\epsilon_{3}d\epsilon_{4}%
\ \label{F3E4}%
\end{equation}%
\begin{equation}
\Phi=\min\left\{  \sqrt{\epsilon_{1}},\sqrt{\epsilon_{2}},\sqrt{\epsilon_{3}%
},\sqrt{\epsilon_{4}}\right\}  \ \ ,\ \ \epsilon_{2}=\epsilon_{3}+\epsilon
_{4}-\epsilon_{1}\label{F3E5}%
\end{equation}
We will denote as $%
\mathcal{M}_{+}\left( \mathbb{R}^{+};1+\epsilon\right) $ the set of Radon
measures $g$ in $\mathbb{R}^{+}$ satisfying:%
\begin{equation*}
\int\left( 1+\epsilon\right) g\left( \epsilon\right) d\epsilon<\infty 
\end{equation*}
\\
We will use the notation $g\left( \epsilon\right) $ in spite of the fact
that $g$ is a measure. 
\\
 It is natural to say that, for some $T>0$, a weak solution $g$  of
(\ref{F3E4}), (\ref{F3E5}) develops a condensate in finite time $T$,  if the following property holds:
\begin{equation}
\sup_{0<t\leq T}\int_{\left\{  0\right\}  }g\left(  t,\epsilon\right)
d\epsilon>0.\ \label{Z1E7}
\end{equation}
\\
\begin{definition}
\label{weak}We will say that $g\in C\left( \left[ 0,T\right) ;\mathcal{M}%
_{+}\left( \mathbb{R}^{+};\left( 1+\epsilon\right) \right) \right) $ is a
weak solution of (\ref{F3E4}), (\ref{F3E5}) in $(0, T)$, with initial datum $g_{0}\in%
\mathcal{M}_{+}\left( \mathbb{R}^{+};1+\epsilon\right)$,  if, \ for any $%
\varphi\in C_{0}^{2}\left( \left[ 0,T\right) ,\left[ 0,\infty\right) \right)$, the following identity holds:%
\begin{align}
&-\int_{\mathbb{R}^{+}}g_{0}\left( \epsilon\right) \varphi\left(
0,\epsilon\right) d\epsilon  =\int_{0}^{T}\int_{\mathbb{R}%
^{+}}g\partial_{t}\varphi d\epsilon dt+  \notag \\
& \hskip1cm+\frac{1}{2^{\frac{5}{2}}}%
\int_{0}^{T}\int_{\mathbb{R}^{+}}\int_{\mathbb{R}^{+}}\int_{\mathbb{R}^{+}}%
\frac {g_{1}g_{2}g_{3}\Phi}{\sqrt{\epsilon_{1}\epsilon_{2}\epsilon_{3}}}%
Q_{\varphi }d\epsilon_{1}d\epsilon_{2}d\epsilon_{3}dt  +\notag \\
& \hskip2cm+\frac{\pi}{2}\int_{0}^{T}\int_{\mathbb{R}^{+}}\int_{\mathbb{R}%
^{+}}\int_{\mathbb{R}^{+}}\frac{g_{1}g_{2}\Phi}{\sqrt{\epsilon_{1}\epsilon
_{2}}}Q_{\varphi}d\epsilon_{1}d\epsilon_{2}d\epsilon_{3}dt\ \ \ 
\label{Z1E2N}
\end{align}
where $\Phi$ is as in (\ref{F3E5}) and:%
\begin{equation}
Q_{\varphi}=\varphi\left( \epsilon_{3}\right) +\varphi\left( \epsilon
_{1}+\epsilon_{2}-\epsilon_{3}\right) -2\varphi\left( \epsilon_{1}\right) 
\label{Z1E4N}
\end{equation}
\end{definition}
\begin{definition}
\label{weakf}
We say that $f$ is a weak solution of  (\ref{F3E2}), (\ref{F3E3}) in $(0, T)$ with initial datum $f_0$, if  
$g_0=4\pi\sqrt{2\epsilon}f_0\left(\epsilon\right)\in
\mathcal{M}_{+}\left( \mathbb{R}^{+};1+\epsilon\right)$, and  $g=4\pi\sqrt{2\epsilon}f\left( t,\epsilon\right)  
\in C\left( \left[ 0,T\right) ;\mathcal{M}
_{+}\left( \mathbb{R}^{+};\left( 1+\epsilon\right) \right) \right) $ is a
weak solution of (\ref{F3E4}), (\ref{F3E5}) on $(0, T)$, with initial datum $g_{0}$ in the sense of Definition \ref{weak}.
\end{definition}
The following result has been proved proved in \cite{EV1} (Lemma 3.13).
\begin{lemma}
\label{der17Bis}Suppose that $\gamma>3$ and $f\in L_{loc}^{\infty}\left( \left[
0,T\right) ;L^{\infty}\left( \mathbb{R}^{+};\left( 1+\epsilon\right)
^{\gamma}\right) \right) $ is a mild solution of (\ref{F3E2}), (\ref{F3E3})
in the sense of Definition \ref{mild}. Then, $f$
is also a weak solution of  (\ref{F3E2}), (\ref{F3E3}) on $(0, T)$ in the sense of definition \ref{weakf}.
\end{lemma}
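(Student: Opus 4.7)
The plan is to upgrade the mild-solution identity to the classical form of (\ref{F3E4}) a.e., and then derive (\ref{Z1E2N}) by multiplying by the test function, integrating by parts in $t$, and symmetrizing the collision integral.

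First, Theorem~\ref{localExistence} gives $f\in W^{1,\infty}((0,T);L^\infty(\mathbb{R}^+))$ with (\ref{F3E2}) holding pointwise for a.e.\ $(t,\epsilon)$. Setting $g(t,\epsilon)=4\pi\sqrt{2\epsilon}\,f(t,\epsilon)$ and using the identity $\sqrt{\epsilon_1}\,W=\Phi$, the equation (\ref{F3E4}) also holds pointwise a.e. The decay weight $\gamma>3$ ensures $g(t,\cdot)\in\mathcal{M}_+(\mathbb{R}^+;1+\epsilon)$ for each $t$, and the $W^{1,\infty}$-in-$t$ regularity yields continuity of $t\mapsto g(t,\cdot)$ into this space.

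Next, I fix $\varphi\in C_0^2([0,T);[0,\infty))$, multiply (\ref{F3E4}) by $\varphi(t,\epsilon_1)$, and integrate over $(t,\epsilon_1)\in(0,T)\times\mathbb{R}^+$. Integration by parts in $t$ uses the compact support of $\varphi$ at $t=T$ to produce the boundary term $-\int g_0\,\varphi(0,\cdot)\,d\epsilon$ and the bulk term $-\int_0^T\!\int g\,\partial_t\varphi\,d\epsilon\,dt$. On the collision side, I change variables $(\epsilon_3,\epsilon_4)\mapsto(\epsilon_2,\epsilon_3)$ via $\epsilon_2=\epsilon_3+\epsilon_4-\epsilon_1$ (unit Jacobian), so that the integral is over $(\epsilon_1,\epsilon_2,\epsilon_3)\in(\mathbb{R}^+)^3$ with $\epsilon_4=\epsilon_1+\epsilon_2-\epsilon_3$, extending $\Phi$ by $0$ on $\{\epsilon_4<0\}$.

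The main step is the symmetrization. Both $q_3$ and $q_2$ are symmetric under $1\leftrightarrow 2$ and $3\leftrightarrow 4$ and antisymmetric under the swap $(1,2)\leftrightarrow(3,4)$, while $\Phi$ is fully symmetric and each swap is a unit-Jacobian linear change of variables preserving the admissible region. Writing $E=\varphi_1+\varphi_2-\varphi_3-\varphi_4$, these symmetries yield
\[
\int q(f)\,\Phi\,\varphi_1\,d\epsilon_1 d\epsilon_2 d\epsilon_3
= \tfrac14\int q(f)\,\Phi\,E\,d\epsilon_1 d\epsilon_2 d\epsilon_3.
\]
Expanding $f_j=g_j/(4\pi\sqrt{2\epsilon_j})$, $q_3(f)$ becomes a sum of four cubic terms of the form $g_ig_jg_k/\sqrt{\epsilon_i\epsilon_j\epsilon_k}$ and $q_2(f)$ a sum of two quadratic terms. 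The antisymmetry of $E$ under $(1,2)\leftrightarrow(3,4)$, combined with the $1\leftrightarrow 2$ and $3\leftrightarrow 4$ symmetries, collapses the cubic block to $-4\int(g_1g_2g_3\Phi/\sqrt{\epsilon_1\epsilon_2\epsilon_3})E$ and the quadratic block to $-2\int(g_1g_2\Phi/\sqrt{\epsilon_1\epsilon_2})E$. Finally, for integrands symmetric in $1\leftrightarrow 2$ one has $\int[\cdot]\,Q_\varphi=-\int[\cdot]\,E$, which replaces $E$ by $Q_\varphi$ and flips signs; the arithmetic $32\pi^3\cdot\tfrac14\cdot\tfrac{4}{(4\pi\sqrt 2)^3}=\tfrac{1}{2^{5/2}}$ and $32\pi^3\cdot\tfrac14\cdot\tfrac{2}{(4\pi\sqrt 2)^2}=\tfrac{\pi}{2}$ reproduces the prefactors in (\ref{Z1E2N}).

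Fubini is justified throughout: the weight $\gamma>3$, together with the bounds noted in the remark after Definition~\ref{mild}, yields an absolutely integrable dominating bound on all collision integrands, uniform on compact subintervals of $[0,T)$. The principal obstacle is the careful bookkeeping of this two-stage symmetrization, particularly the collapse of the four cubic subterms of $q_3(f)$ onto the single symmetric triple $g_1g_2g_3\Phi/\sqrt{\epsilon_1\epsilon_2\epsilon_3}$ while keeping the signs and the prefactors $\tfrac{1}{2^{5/2}}$ and $\tfrac{\pi}{2}$ consistent.
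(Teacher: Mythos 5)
The paper does not reproduce a proof of this lemma; it simply cites Lemma~3.13 of \cite{EV1}. Your argument is a correct, self-contained proof along the standard lines for Boltzmann-type weak formulations: use the extra regularity from Theorem~\ref{localExistence} to pass to the classical a.e.\ form of (\ref{F3E4}), test against $\varphi$, integrate by parts in $t$, change variables to $(\epsilon_1,\epsilon_2,\epsilon_3)$, and symmetrize. I checked the symmetrization in detail — the $1\leftrightarrow2$, $3\leftrightarrow4$ invariances together with the antisymmetry of $q$, $E$ under $(1,2)\leftrightarrow(3,4)$ (a unit-Jacobian, region-preserving map in $(\epsilon_1,\epsilon_2,\epsilon_3)$-coordinates), and your constants $\tfrac{32\pi^3}{(4\pi\sqrt2)^3}=2^{-5/2}$ and $\tfrac{32\pi^3}{2(4\pi\sqrt2)^2}=\tfrac{\pi}{2}$ all come out right, as does the identity $Q_\varphi=-E+(\varphi_2-\varphi_1)$, which justifies replacing $-E$ by $Q_\varphi$ for $1\leftrightarrow2$-symmetric integrands.

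One small point worth tightening: the claim that the $W^{1,\infty}((0,T);L^\infty(\mathbb{R}^+))$ regularity \emph{by itself} gives continuity of $t\mapsto g(t,\cdot)$ into $\mathcal{M}_+(\mathbb{R}^+;1+\epsilon)$ is slightly too quick, since that space is unweighted and $\int(1+\epsilon)\sqrt{\epsilon}\,d\epsilon$ diverges. You need to combine the Lipschitz bound in $L^\infty$ with the locally uniform bound in $L^\infty(\mathbb{R}^+;(1+\epsilon)^\gamma)$, $\gamma>3$, splitting $\int(1+\epsilon)|g(t)-g(s)|\psi\,d\epsilon$ at a large $L$ and using the tail decay. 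Likewise the Fubini justification is sketched rather than carried out, but given the weight $\gamma>3$ and the compact support of $\varphi$ the dominating bound is routine. Neither point affects the correctness of the argument.
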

It has been proved by X. Lu in Theorem 2 of \cite{Lu1} that for all  $g_{0}\in
\mathcal{M}_{+}\left( \mathbb{R}^{+};1+\epsilon\right)$, there exists a  global weak solution of (\ref{F3E4}), (\ref{F3E5})
 in the sense of the Definition \ref{weak}. This solution satisfies (\ref{Z1E2N}), (\ref{Z1E4N}) for all $\varphi \in C^2_b([0, \infty))$ and with $T=\infty$. Moreover  the two quantities $\int _{ [0, \infty) } g(t, \epsilon )d\epsilon$  and $\int _{ [0, \infty) }\epsilon g(t, \epsilon )d\epsilon$ are constant in time, for all $t\ge 0$.

\section{Statements of the main results.}

\setcounter{equation}{0} \setcounter{theorem}{0}

\subsection{Subcritical and supercritical data. Phase diagram in terms of the
energy and the density of particles.\label{StatSol}}

For isotropic distributions the steady states $F_{BE}$ in (\ref{St1}) have
$p_{0}=0$ and reduce to:%
\begin{align}
& F_{BE}\left(  p\right)  =F_{BE}\left(  p;\alpha,\beta,m_{0}\right)
=m_{0}\delta\left(  p\right)  +\frac{1}{\exp\left(  \beta\left(
\frac{\left\vert p\right\vert ^{2}}{2}+\alpha\right)  \right)  -1}%
\label{St2a}\\
& \text{with }m_{0}\geq0,\ \beta\in( 0,\infty] ,\, 0\leq\alpha<\infty
\ \text{and\ \ }\alpha\cdot m_{0}=0.\label{St2}%
\end{align}

It is customary in the physical literature to denote $\alpha$ as $-\beta\mu,$
where $\mu<0$ is a quantity with units of energy termed as chemical potential.
We will use the notation (\ref{St2}) in order to use nonnegative quantities in
the arguments. If $\alpha=0$ and $m_{0}=0$ the resulting distributions are the
Planck distributions.

The following result shows that at equilibrium there exists a one-to-one
relation between the values of the particle density and the energy and the
values of the chemical potential, temperature and density of particles in the
condensate state.

\begin{proposition}
\label{crit} Given $M>0,\ E>0$ there exists a unique steady state \hfill
\break$F_{BE}\left(  p;\alpha,\beta,m_{0}\right)  $ in the family (\ref{St2})
such that:%
\[
\int_{\mathbb{R}^{3}}F_{BE}\left(  p;\alpha,\beta,m_{0}\right)  d^{3}%
p=M\ \ ,\ \ \int_{\mathbb{R}^{3}}F_{BE}\left(  p;\alpha,\beta,m_{0}\right)
\frac{\left\vert p\right\vert ^{2}}{2}d^{3}p=E\ \
\]

We have $m_{0}=0$ iff
\begin{equation}
M\leq\frac{\zeta\left(  \frac{3}{2}\right)  }{\left(  \zeta\left(  \frac{5}%
{2}\right)  \right)  ^{\frac{3}{5}}}\left(  \frac{4\pi}{3}\right)  ^{\frac
{3}{5}}E^{\frac{3}{5}}\ .\label{Crit}%
\end{equation}
where $\zeta\left(  \cdot\right)  $ is the Riemann function.
\end{proposition}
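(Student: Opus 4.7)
My plan is to reduce the constraints to scalar integrals in the energy variable and exploit a monotonicity property of the Bose--Einstein polylogarithms. Writing $\epsilon=|p|^{2}/2$ and integrating over angles gives $d^{3}p=4\pi\sqrt{2\epsilon}\,d\epsilon$, and the thermal part of $F_{BE}$ contributes
\[
M_{\mathrm{th}}(\alpha,\beta)=4\pi\sqrt{2}\,\beta^{-3/2}A(s),\qquad E_{\mathrm{th}}(\alpha,\beta)=4\pi\sqrt{2}\,\beta^{-5/2}B(s),
\]
where $s:=\beta\alpha\geq 0$ and, after rescaling $u=\beta\epsilon$,
\[
A(s):=\int_{0}^{\infty}\!\!\frac{\sqrt{u}}{e^{u+s}-1}\,du=\Gamma(3/2)\sum_{n\geq 1}\frac{e^{-ns}}{n^{3/2}},\qquad B(s):=\int_{0}^{\infty}\!\!\frac{u^{3/2}}{e^{u+s}-1}\,du=\Gamma(5/2)\sum_{n\geq 1}\frac{e^{-ns}}{n^{5/2}}.
\]
Both $A$ and $B$ are positive, continuous, and strictly decreasing on $[0,\infty)$, with $A(0)=\tfrac{\sqrt{\pi}}{2}\zeta(3/2)$, $B(0)=\tfrac{3\sqrt{\pi}}{4}\zeta(5/2)$, and $A(s),B(s)\to 0$ as $s\to\infty$. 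The condensate contributes $m_{0}$ to $M$ and nothing to $E$ since it sits at $p=0$.

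The constraint $\alpha m_{0}=0$ forces a dichotomy. Consider first $\alpha=0$, so $s=0$. Then $E=4\pi\sqrt{2}\,\beta^{-5/2}B(0)$ uniquely determines $\beta\in(0,\infty)$, and $m_{0}=M-4\pi\sqrt{2}\,\beta^{-3/2}A(0)$ is then forced. The condition $m_{0}\geq 0$ becomes, after eliminating $\beta$, the inequality $M\geq K\,E^{3/5}$ with $K=(4\pi\sqrt{2})^{2/5}A(0)/B(0)^{3/5}$; using $\Gamma(3/2)=\sqrt{\pi}/2$ and $\Gamma(5/2)=3\sqrt{\pi}/4$, a direct algebraic simplification gives $K=(4\pi/3)^{3/5}\zeta(3/2)/\zeta(5/2)^{3/5}$, which is exactly the constant appearing in (\ref{Crit}). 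This case therefore yields existence and uniqueness of the steady state for every $(M,E)$ on or above the critical curve, and simultaneously shows $m_{0}>0\iff M>K E^{3/5}$.

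It remains to handle $(M,E)$ strictly below the critical curve; here the only admissible regime is $m_{0}=0$ with $\alpha>0$. Solving $E=4\pi\sqrt{2}\,\beta^{-5/2}B(s)$ for $\beta$ and substituting into the formula for $M$ yields
\[
\frac{M}{E^{3/5}}=(4\pi\sqrt{2})^{2/5}\,H(s),\qquad H(s):=\frac{A(s)}{B(s)^{3/5}}.
\]
Existence and uniqueness thus reduce to showing that $H$ is a continuous, strictly decreasing bijection from $[0,\infty)$ onto $(0,H(0)]$. Continuity is clear, $H(0)=K/(4\pi\sqrt{2})^{2/5}$ recovers the critical ratio, and $H(s)\to 0$ as $s\to\infty$ follows from the $n=1$ dominance in the series ($A(s)\sim\Gamma(3/2)e^{-s}$, $B(s)\sim\Gamma(5/2)e^{-s}$, whence $H(s)\sim\mathrm{const}\cdot e^{-2s/5}$). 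Once strict monotonicity is established, $s$ is determined by $M/E^{3/5}$, then $\beta$ by $E$, and finally $\alpha=s/\beta$, completing the proof.

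The only nontrivial point, and the step I expect to be the main obstacle, is the strict monotonicity of $H$ on $(0,\infty)$. Taking logarithmic derivatives, $H'/H=A'/A-(3/5)B'/B$, and since $B'/B<0$ and $3/5<1$ we have $B'/B<(3/5)B'/B$, so it suffices to show the stronger inequality $A'/A<B'/B$, equivalently $B'A-A'B>0$. Writing $f_{s}(u)=1/(e^{u+s}-1)$ and noting $\partial_{s}f_{s}=-f_{s}(1+f_{s})$, one expands $B'A-A'B$ as a double integral and symmetrizes in $(u_{1},u_{2})$ to obtain
\[
B'A-A'B=\tfrac{1}{2}\int_{0}^{\infty}\!\!\int_{0}^{\infty}\!f_{s}(u_{1})f_{s}(u_{2})\bigl(f_{s}(u_{1})-f_{s}(u_{2})\bigr)\bigl(u_{2}-u_{1}\bigr)\sqrt{u_{1}u_{2}}\,du_{1}du_{2}.
\]
Since $f_{s}$ is strictly decreasing in $u$, the two bracketed factors always have the same sign, so the integrand is nonnegative and strictly positive off the diagonal; hence $B'A-A'B>0$ and $H$ is strictly decreasing. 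Everything else in the argument is routine bookkeeping.
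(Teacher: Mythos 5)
Your argument is correct and in fact more complete than the paper's, which deliberately only sketches the proof and defers the bulk of it to classical statistical mechanics references. The paper's sketch verifies the boundary case: for $m_0=0$ it bounds $M\le(2\pi/\beta)^{3/2}\zeta(3/2)$ using $\alpha\ge0$, but then writes $E$ as an \emph{equality} with the $\alpha=0$ value, which is only literally valid when $\alpha=0$; for $\alpha>0$ one also has $E$ strictly below the $\alpha=0$ value, and the displayed estimate $M\le CE^{3/5}$ does not follow from two one-sided bounds in the same direction. Closing that gap requires precisely the monotonicity in $s=\beta\alpha$ of the scale-invariant ratio $H(s)=A(s)/B(s)^{3/5}$, which the paper never discusses (implicitly deferring it to \cite{Balescu}, \cite{Huang}) and which you supply explicitly. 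Your reduction to the integrals $A(s),B(s)$ is the same reduction the paper uses, and your constant check is correct: with $A(0)=\tfrac{\sqrt\pi}{2}\zeta(3/2)$, $B(0)=\tfrac{3\sqrt\pi}{4}\zeta(5/2)$, one indeed gets $(4\pi\sqrt2)^{2/5}A(0)/B(0)^{3/5}=(4\pi/3)^{3/5}\zeta(3/2)/\zeta(5/2)^{3/5}$. The key additional lemma you prove, $B'A-A'B>0$, is exactly a Cauchy--Schwarz statement: in series form it reads $\left(\sum e^{-ns}n^{-1/2}\right)\left(\sum e^{-ns}n^{-5/2}\right)>\left(\sum e^{-ns}n^{-3/2}\right)^{2}$, which also gives a slightly shorter alternative to your double-integral symmetrization, and your symmetrized kernel $\tfrac12 f(u_1)f(u_2)(f(u_1)-f(u_2))(u_2-u_1)\sqrt{u_1u_2}$ is correctly derived and manifestly nonnegative. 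The net effect is that your write-up proves not only the inequality in the statement but also the existence and uniqueness of the equilibrium for every $(M,E)$, which the paper asserts but does not prove.
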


\begin{proof}
This result is well known and it can be found in classical texts of
Statistical Physics (cf. for instance \cite{Balescu}, \cite{Huang}). We sketch
here the main arguments. Suppose that $m_{0}=0.$ Then, since $\alpha\geq0$ we
have (cf. (\ref{St1})):%
\begin{align}
M  & \leq\int_{\mathbb{R}^{3}}\frac{d^{3}p}{\exp\left(  \frac{\beta\left\vert
p\right\vert ^{2}}{2}\right)  -1}=2\pi\left(  \frac{2}{\beta}\right)
^{\frac{3}{2}}\int_{0}^{\infty}\frac{x^{\frac{1}{2}}dx}{e^{x}-1}%
=\label{E1E1}\\
& =2\pi\left(  \frac{2}{\beta}\right)  ^{\frac{3}{2}}\sum_{n=0}^{\infty}%
\int_{0}^{\infty}x^{\frac{1}{2}}e^{-x}e^{-nx}dx=\left(  \frac{2\pi}{\beta
}\right)  ^{\frac{3}{2}}\zeta\left(  \frac{3}{2}\right)  \ \nonumber
\end{align}
where we have used spherical coordinates, the change of variables
$x=\frac{\beta r^{2}}{2}$ as well as Taylor's series to expand $\left(
1-e^{-x}\right)  ^{-1}.$ A similar computation yields:
\begin{align}
E  & =\frac{1}{2}\int_{\mathbb{R}^{3}}\frac{\left\vert p\right\vert ^{2}%
d^{3}p}{\exp\left(  \frac{\beta\left\vert p\right\vert ^{2}}{2}\right)
-1}=\pi\left(  \frac{2}{\beta}\right)  ^{\frac{5}{2}}\int_{0}^{\infty}%
\frac{x^{\frac{3}{2}}dx}{e^{x}-1}\nonumber\\
& =\pi\left(  \frac{2}{\beta}\right)  ^{\frac{5}{2}}\sum_{n=0}^{\infty}%
\int_{0}^{\infty}x^{\frac{3}{2}}e^{-x}e^{-nx}dx=\frac{3}{4\pi}\left(
\frac{2\pi}{\beta}\right)  ^{\frac{5}{2}}\zeta\left(  \frac{5}{2}\right)
\label{E1E2}%
\end{align}

Then, if $m_{0}=0$ we obtain $M\leq\frac{\zeta\left(  \frac{3}{2}\right)
}{\left(  \zeta\left(  \frac{5}{2}\right)  \right)  ^{\frac{3}{5}}}\left(
\frac{4\pi}{3}\right)  ^{\frac{3}{5}}E^{\frac{3}{5}}.$ On the other hand, if
$m_{0}>0,$ we can argue as in the derivation of (\ref{E1E1}) and using the
fact that $\alpha=0$ we obtain $M>\left(  \frac{2\pi}{\beta}\right)
^{\frac{3}{2}}\zeta\left(  \frac{3}{2}\right)  .$ Combining this formula with
(\ref{E1E1}) it then follows that $M>\frac{\zeta\left(  \frac{3}{2}\right)
}{\left(  \zeta\left(  \frac{5}{2}\right)  \right)  ^{\frac{3}{5}}}\left(
\frac{4\pi}{3}\right)  ^{\frac{3}{5}}E^{\frac{3}{5}},$ whence the Proposition follows.
\end{proof}

\begin{definition}
We will say that a couple of values $\left(  M,E\right) $, such that $M>0$ and
$E>0$, is in the subcritical region if (\ref{Crit}) holds. On the contrary, if
$M>\frac{\zeta\left(  \frac{3}{2}\right)  }{\left(  \zeta\left(  \frac{5}%
{2}\right)  \right)  ^{\frac{3}{5}}}\left(  \frac{4\pi}{3}\right)  ^{\frac
{3}{5}}E^{\frac{3}{5}}$ we will say that $\left(  M,E\right)  $ is in the
supercritical region.
\end{definition}

\subsection{Blow-up Theorem for supercritical data.}

The main result of this paper about the finite time blow up of solutions is the following.

\begin{theorem}
\label{Main}Suppose that $f_{0}\in L^{\infty}\left(  \mathbb{R}^{+};\left(
1+\epsilon\right)  ^{\gamma}\right)  $ with $\gamma>3.$ Let us denote as
$M,\ E$ the numbers:%
\begin{equation}
4\pi\int_{0}^{\infty}f_{0}\left(  \epsilon\right)  \sqrt{2\epsilon}%
d\epsilon=M\ \ ,\ \ 4\pi\int_{0}^{\infty}f_{0}\left(  \epsilon\right)
\sqrt{2\epsilon^{3}}d\epsilon=E\label{G1E1}%
\end{equation}

Let us denote as $f\in L_{loc}^{\infty}\left(  \left[  0,T_{\max}\right)
;L^{\infty}\left(  \mathbb{R}^{+};\left(  1+\epsilon\right)  ^{\gamma}\right)
\right)  $ the mild solution of (\ref{F3E2}), (\ref{F3E3}) in Theorem
\ref{localExistence} where $T_{\max}$ is the maximal existence time. Suppose
that:%
\begin{equation}
M>\frac{\zeta\left(  \frac{3}{2}\right)  }{\left(  \zeta\left(  \frac{5}%
{2}\right)  \right)  ^{\frac{3}{5}}}\left(  \frac{4\pi}{3}\right)  ^{\frac
{3}{5}}E^{\frac{3}{5}}\ \label{G1E2}%
\end{equation}
Then:%
\begin{equation}
T_{\max}<\infty\label{G1E3}%
\end{equation}

\end{theorem}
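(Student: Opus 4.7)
The plan is to argue by contradiction: assume $T_{\max}=\infty$, so by Theorem \ref{localExistence} the mild solution $f$ exists globally and conserves both the mass $M$ and the energy $E$ given by \eqref{G1E1}. The goal is then to exhibit a time $t_\star<\infty$ at which the blow-up criterion of \cite{EV1} (recalled in Section 4) holds, yielding a contradiction with \eqref{F3E6b}. That criterion requires two conditions at $t_\star$: (A) a positive lower bound on $f(t_\star,\cdot)$ in an averaged sense over a small energy interval, and (B) a polynomial lower bound on the mass of the form $\int_0^\rho\sqrt{\epsilon}\,f(t_\star,\epsilon)\,d\epsilon\geq c\rho^{\theta_\star}$ for some $\theta_\star>0$ and all $\rho$ small.

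For condition (A) the plan is to exploit the quadratic Boltzmann term $q_2(f)=f_3f_4-f_1f_2$ in \eqref{F3E2}, which transports mass towards small energies. Conservation of $M$ and $E$, combined with the upper bound on the mass located at large energies in terms of $E$ (to be proved in Section 7), guarantees that at every time a fixed fraction of mass lives on a bounded energy window $[\epsilon_-,\epsilon_+]$. Testing \eqref{F3E2} against $\mathbf{1}_{[0,R]}\sqrt{\epsilon}$ and absorbing the loss term $F_1F_2$ via Gronwall then yields an estimate of the form
\[
\int_0^R\sqrt{\epsilon}\,f(t,\epsilon)\,d\epsilon\;\geq\;\nu R^{3/2}
\qquad\text{for all }t\geq T_0,\;R\leq R_0,
\]
which is exactly the content of Section 5. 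This is uniform in $t$ after some fixed initial transient $T_0$ and therefore provides condition (A) on an arbitrarily long time interval.

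Condition (B) is more delicate and will be obtained from the entropy dissipation identity of Section 6. The entropy $H(f)=\int\bigl[(1+f)\log(1+f)-f\log f\bigr]\sqrt{\epsilon}\,d\epsilon$ is bounded above in terms of $M,E$, and its dissipation $D(f)\geq 0$ is integrable on $(0,\infty)$; hence one can extract a sequence $t_n\to\infty$ along which $D(f(t_n,\cdot))\to0$. A standard passage to the limit, using the high-energy control of Section 7 to rule out escape of mass to infinity, forces any weak limit $f_\infty$ to satisfy the collisional equilibrium relation and to be of Bose--Einstein form with mass and energy equal to $(M,E)$. Because $(M,E)$ lies in the supercritical region \eqref{G1E2}, Proposition \ref{crit} forces $\alpha=0$ in \eqref{St2a}, so the regular part of $f_\infty$ behaves like $c/\epsilon$ as $\epsilon\to 0^+$, producing a lower bound of Planck type and in particular the polynomial bound required in (B) for the limit. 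Transferring this bound to $f(t_n,\cdot)$ for $n$ large via the appropriate weak-measure topology on $[0,\rho]$ yields (B) at the times $t_n$.

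The main technical obstacle, and the step that needs the most care, is to ensure that (A) and (B) are valid at the \emph{same} time. Condition (A) is uniform for $t\geq T_0$, whereas (B) is obtained only along the subsequence $t_n$ given by the entropy bound; choosing $n$ large enough that $t_n>T_0$ produces a common $t_\star=t_n$ at which both conditions hold. Extra care is needed to show that no mass is lost at infinity in the entropy limit (so that the limit is truly supercritical and not merely subcritical with a condensate trivially absent), and that the weak-$\ast$ convergence of $f(t_n,\cdot)$ on a neighborhood of $\epsilon=0$ is strong enough to transfer the Planck tail lower bound to the pre-limit. Once (A) and (B) hold simultaneously at $t_\star$, the blow-up criterion of \cite{EV1} applied to the solution restarted at $t_\star$ yields blow-up in finite time, contradicting $T_{\max}=\infty$ and establishing \eqref{G1E3}.
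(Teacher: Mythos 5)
Your overall architecture matches the paper's: argue by contradiction, use the Section~5 estimate for the pointwise-in-time averaged lower bound, use entropy dissipation to produce a sequence of times along which the distribution approaches an equilibrium, invoke supercriticality to force mass near zero, and apply the blow-up criterion from \cite{EV1} restarted at a large time $t_n$. Your observation that the two conditions need to be married at a common $t_\star=t_n$ with $t_n>T_0$ is exactly right and is what the paper does.

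However, the mechanism you invoke for condition (B) is wrong. You assert that because $\alpha=0$ in the limiting Bose--Einstein distribution, "the regular part of $f_\infty$ behaves like $c/\epsilon$ as $\epsilon\to 0^+$, producing \ldots the polynomial bound required in (B)." The regular Planck tail does \emph{not} deliver condition (B): one has $\int_0^\rho \sqrt{\epsilon}\,\frac{c}{\epsilon}\,d\epsilon = 2c\sqrt{\rho}$, which vanishes as $\rho\to0$; since the criterion of \cite{EV1} only guarantees \emph{some} $\theta_*>0$ (not that $\theta_*\le 1/2$) and the constant $K^*$ is not explicit, the Planck tail is not enough. What the paper actually proves (Proposition~\ref{ConcMass}) is a \emph{constant} lower bound $\int_0^R g(t_n,\epsilon)\,d\epsilon\ge m_*$ with $m_*\sim m_0$ for all $R<\rho$, coming not from the Planck tail but from a mass-accounting argument: Lemma~\ref{maxwellian} identifies the weak-$*$ limit on $(R,L)$ as a Maxwellian, Proposition~\ref{MassCrit} shows that any such profile restricted to $(R,L)$ with energy $\le E$ carries at most the critical mass $M_c+\delta$, Lemma~\ref{LemmaL} bounds the mass on $(L,\infty)$, and hence by conservation the surplus $M-M_c\approx m_0$ must sit in $(0,R)$ for the approximating sequence itself. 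There is no "transfer from the limit" of a Dirac-type bound; that step, as you state it, is a gap, because weak convergence alone cannot distinguish a Dirac mass at $0$ from concentration near $0$ in the prelimit without the quantitative exclusion of mass on $(R,L)$ that Proposition~\ref{MassCrit} provides. You also omit the case $Q_*\equiv 0$ on $[R/2,\infty)$, which the paper handles separately (Lemma~\ref{R1R2}) and which, together with the large-energy control, again gives a constant amount of mass in $(0,R)$.

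In short: the skeleton is correct and identical to the paper's, but the crucial step producing condition (B) as you describe it would fail; it needs to be replaced by the mass-exclusion argument on $(R,L)$ against the supercritical budget.
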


\begin{remark}
Notice that $M,$ $E$ are the particle density and energy density associated to
the distribution $f_{0}.$ The numerical factors are due to the change of
variables to spherical coordinates and to the fact that we use $\epsilon$
instead of $p$ as an integration variable.
\end{remark}

Notice that Theorems \ref{localExistence} and Theorem \ref{Main} imply that
for supercritical values of $\left(  M,E\right)  ,$ $f$ blows-up in finite time:

\begin{corollary}
Suppose that the conditions of Theorem \ref{Main} hold. Then:%
\[
\lim\sup_{t\rightarrow T_{\max}^{-}}\left\Vert f\left(  t,\cdot\right)
\right\Vert _{L^{\infty}\left(  \mathbb{R}^{+}\right)  }=\infty
\ \ ,\ \ T_{\max}<\infty
\]

\end{corollary}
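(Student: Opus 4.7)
The plan is to observe that this corollary is a direct splicing of the two preceding theorems, with essentially no new content; the work has already been done in Theorem \ref{Main} and in the blow-up alternative contained in Theorem \ref{localExistence}.

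First, I would invoke Theorem \ref{Main} to obtain $T_{\max}<\infty$. The hypotheses of the corollary are precisely those of Theorem \ref{Main}, namely $f_0\in L^\infty(\mathbb{R}^+;(1+\epsilon)^\gamma)$ with $\gamma>3$ and the supercriticality inequality (\ref{G1E2}). Theorem \ref{Main} then yields $T_{\max}<\infty$, which is half of the conclusion.

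Second, I would appeal to Theorem \ref{localExistence}. That theorem constructs a unique mild solution on a maximal interval $[0,T_{\max})$ and states that whenever $T_{\max}<\infty$, the alternative (\ref{F3E6b}) holds, i.e., $\limsup_{t\to T_{\max}^-}\|f(t,\cdot)\|_{L^\infty(\mathbb{R}^+)}=\infty$. Combining this alternative with the finiteness of $T_{\max}$ established in the previous paragraph gives the remaining conclusion of the corollary.

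There is no genuine obstacle here: the corollary is simply the logical conjunction of the finite-time bound from Theorem \ref{Main} and the $L^\infty$ blow-up criterion built into the local well-posedness result. The entire content (finite-time blow-up of the $L^\infty$ norm for supercritical initial data in $L^\infty(\mathbb{R}^+;(1+\epsilon)^\gamma)$) is concentrated in Theorem \ref{Main}, and the corollary merely makes explicit the qualitative blow-up statement that follows from it via the continuation criterion already established in \cite{EV1}.
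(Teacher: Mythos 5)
Your proof is correct and matches the paper exactly: the paper itself presents this corollary as an immediate consequence of Theorem \ref{Main} (which gives $T_{\max}<\infty$) together with the continuation/blow-up alternative (\ref{F3E6b}) from Theorem \ref{localExistence}, without any further argument.
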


\subsection{Condensation Theorem for supercritical data.}

The main result of this paper about the condensation  of solutions in finite time  is the following.

\begin{theorem}
\label{Cond1}Let  $g_{0}\in\mathcal{M}_{+}\left(  \mathbb{R}^{+};\left(  1+\epsilon\right)\right)  $ satisfying
\begin{align}
4\pi\sqrt{2}\int_{\mathbb{R}^{+}}g_{0}\left(  \epsilon\right)  d\epsilon &
=M\ ,\ \ 4\pi\sqrt{2}\int_{\mathbb{R}^{+}}g_{0}\left(  \epsilon\right)
\epsilon d\epsilon=E.\ \label{Z1E5}
\end{align}
Suppose that $M$ and $E$ satisfy condition (\ref{G1E2}). Then, there exists $T_0=T_0(M, E)>0$ such that every weak solution $g$  of  (\ref{F3E4}), (\ref{F3E5})  in $(0, \infty)$ with initial data $g_0$  satisfies  (\ref{Z1E7}) for any $T\ge T_0$.
\end{theorem}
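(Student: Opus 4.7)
The overall strategy mirrors the three-step structure of the proof of Theorem \ref{Main}, now carried out in the weak setting. The ingredients are: (A) a quantitative lower bound $g(t,[0,R])\ge \nu R^{3/2}$ for small $R$, produced by the quadratic transport of mass toward $\epsilon=0$; (B) an entropy dissipation argument together with supercriticality, upgrading this to a lower bound of \emph{fixed} size in an arbitrarily small neighbourhood $[0,\rho]$ at some finite time $T_{1}=T_{1}(M,E)$; and (C) the condensation criterion of \cite{EV1} (already formulated there for measure-valued solutions) to conclude the appearance of a Dirac mass at the origin at some $T_0\ge T_1$ depending only on $(M,E)$.

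For step (A) I would test the weak formulation (\ref{Z1E2N})--(\ref{Z1E4N}) against a nonnegative cutoff $\varphi_{R}\in C^{2}_{b}$ approximating $\mathbf{1}_{[0,R]}$. The cubic $Q_{\varphi}$ contribution can be dominated using conservation of $E$ in (\ref{Z1E5}), while the quadratic contribution produces a positive source of size comparable to $R^{3/2}$ which drives mass into $[0,R]$ from the $O(1)$-energy reservoir of total mass $M$. This is the weak-solution analogue of the transport estimate of Section 5, and already gives a time-averaged lower bound on $g(t,[0,R])$ depending only on $(M,E)$.

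For step (B) I would adapt the entropy dissipation analysis of Sections 6--8 to weak solutions, justifying it by truncation away from $\epsilon=0$ and by regularisation of the logarithm. Together with the fact that the entropy is bounded (both by its initial value and above by the equilibrium entropy) and that $M$ and $E$ are conserved, this forces $g(t_{n},\cdot)$, along some sequence $t_{n}$, to weakly approach a Bose--Einstein equilibrium with the same mass and energy. By Proposition \ref{crit} and the supercriticality hypothesis (\ref{G1E2}), this equilibrium carries a Dirac $m_{0}\delta_{0}$ with $m_{0}=m_{0}(M,E)>0$; hence for some finite $T_{1}=T_{1}(M,E)$ and any fixed $\rho>0$ one has $g(T_{1},[0,\rho])\ge m_{0}/2$.

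For step (C), once this concentration is in place, the condensation criterion of \cite{EV1} ensures the formation of a Dirac mass at $\epsilon=0$ within a finite additional time, yielding the desired $T_{0}=T_{0}(M,E)$. The main obstacle will be step (B): the entropy identity and the identification of subsequential limits are delicate for merely measure-valued solutions, since the logarithm is ill-defined on the singular part of $g$ and Bose--Einstein equilibria themselves may already contain a condensate. Handling correctly any partial condensate of $g$ that has already formed, while still extracting the low-energy concentration of the absolutely continuous part and keeping every constant $(M,E)$-uniform, is the technical heart of Section 10.
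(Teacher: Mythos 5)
Your three-step decomposition correctly identifies steps (A) and (C): (A) is precisely Proposition \ref{lowMass}, which the paper already formulates and proves for weak solutions, and (C) is the condensation criterion from \cite{EV1}. The divergence, and the gap, is in step (B). The paper does \emph{not} attempt to extend the entropy-dissipation machinery of Sections 6--8 to measure-valued solutions. Instead, Section 10 cites the convergence-to-equilibrium theorem of Lu (\cite{Lu2}), recalled earlier in the Introduction: every weak solution of the Nordheim equation converges, in the sense of measures, as $t\to\infty$ to the unique equilibrium $F_{BE}$ with the same mass $M$ and energy $E$. By Proposition \ref{crit} and the supercriticality hypothesis (\ref{G1E2}), that equilibrium carries a positive Dirac mass $m_0\delta_0$ at the origin, and weak convergence of measures then immediately yields $\int_0^{\rho_n} g(t_n,\epsilon)\,d\epsilon\ge m>0$ for some $t_n\to\infty$, $\rho_n\to 0$. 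This collapses your step (B) to a one-line citation. The adaptation you propose --- truncating away from $\epsilon=0$, regularising the logarithm, controlling a possible pre-existing singular part --- would in effect amount to reproving Lu's theorem; you rightly flag this as the technical heart, but as written it is left as an acknowledged gap, and it is precisely the work the paper deliberately avoids by invoking \cite{Lu2}. Once the low-energy concentration is secured, your use of Proposition \ref{lowMass} to verify the lower bound $\int_0^R g\,d\epsilon\ge \nu R^{3/2}$ and your restart-at-$t_n$ argument for step (C) both match the paper's.
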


\section{Summary of some results in \cite{EV1}.}

\setcounter{equation}{0} \setcounter{theorem}{0}

We now recall some results which has been obtained in \cite{EV1} which will be
used in the Proof of the results of this paper.

\subsection{Blow-up condition.}

One of the main ingredients in the Proof of Theorem \ref{Main} is the
following  condition for blow-up of mild solutions which has been proved in
\cite{EV1}.

\begin{theorem}
\label{main} There exist $\
\theta_{\ast}>0$ such that,  for all  $M>0,\ E>0,$ $\nu>0,\ \gamma>3$,  there exists $ 
\rho_0=\rho_0\left( M,E,\nu\right) >0$, $ K^{\ast}=K^{\ast}\left( M,E,\nu\right)
>0,\ T_{0}=T_{0}\left( M,E\right) $ satisfying the following property.  For any $f_{0}\in
L^{\infty}\left( \mathbb{R}^{+};\left( 1+\epsilon\right) ^{\gamma}\right) $
such that$\ $%
\begin{align}
4\pi\sqrt{2}\int_{\mathbb{R}^{+}}f_{0}\left( \epsilon\right) \sqrt{\epsilon }%
d\epsilon & =M\ ,\ \ 4\pi\sqrt{2}\int_{\mathbb{R}^{+}}f_{0}\left(
\epsilon\right) \sqrt{\epsilon^{3}}d\epsilon=E,  \label{C1} 
\end{align}
and
 \begin{equation}
 \label{Condicion}
 \sup _{ 0\le \rho \le \rho _0}\left[
 \min\left\{
\inf _{ 0\le R\le \rho  }\frac {1} {\nu R^{3/2}}\int _0^R f_0(\epsilon)\sqrt \epsilon d\epsilon,  \frac {1} {K^*\rho ^{\theta_*}}\int _0^\rho f_0(\epsilon)\sqrt \epsilon d\epsilon
 \right\}
 \right]\ge 1,
 \end{equation}
 there exists a unique $f\in L_{loc}^{\infty}\left( \left[
0,T_{\max}\right) ;L^{\infty}\left( \mathbb{R}^{+};\left( 1+\epsilon \right)
^{\gamma}\right) \right)
$ mild solution  of   (\ref{F3E2}), (\ref{F3E3}) in the sense of Definition \ref{mild}, 
with initial data $f_0$,  defined for a maximal existence time $%
T_{\max}<T_{0}$ and that satisfies:
\begin{equation*}
\limsup_{t\rightarrow T_{\max}^{-}}\left\Vert f\left( \cdot,t\right)
\right\Vert _{L^{\infty}\left( \mathbb{R}^{+}\right) }=\infty. 
\end{equation*}
\end{theorem}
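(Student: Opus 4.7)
The strategy is to derive a superlinear differential inequality for the low-energy mass functional \(N_R(t):=\int_0^R g(t,\epsilon)\,d\epsilon\) (with \(g=4\pi\sqrt{2\epsilon}f\)) at a suitably chosen small scale \(R\leq\rho\), forcing \(N_R(t)\to\infty\) in a time depending only on \(M,E,\nu,K^*\); this then implies blow-up of \(\|f\|_{L^\infty}\) via the trivial bound \(N_R(t)\leq CR^{3/2}\|f(t,\cdot)\|_{L^\infty}\) and the extension criterion (\ref{F3E6b}). Fix \(\rho\in(0,\rho_0]\) realising the supremum in (\ref{Condicion}), so that both the envelope bound \(\int_0^R f_0\sqrt{\epsilon}\,d\epsilon\geq \nu R^{3/2}\) for every \(R\leq\rho\) and the total-mass bound \(\int_0^\rho f_0\sqrt{\epsilon}\,d\epsilon\geq K^*\rho^{\theta_*}\) hold at time zero.

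First, I would propagate the envelope condition to positive times via the mild formulation. The Remark after Definition \ref{mild} gives \(a(t,\epsilon)\leq C(1+\|f(t,\cdot)\|_\infty)\|f(t,\cdot)\|_{L^\infty((1+\epsilon)^\gamma)}\), so by Theorem \ref{localExistence} there is \(\tau>0\) depending only on \(\|f_0\|_{L^\infty((1+\epsilon)^\gamma)}\) on which \(\Psi(t,\epsilon)\geq c_1>0\); dropping the nonnegative integral in (\ref{F3E6bb}) then yields \(f(t,\epsilon)\geq c_1 f_0(\epsilon)\) on \([0,\rho]\times[0,\tau]\), so the envelope survives with a worsened constant. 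Second, I would test the weak formulation (\ref{Z1E2N}) against a mollification of \(\chi_{[0,R]}\) and decompose the cubic and quadratic contributions according to the scales \(\{\epsilon_i\lesssim R\}\) versus \(\{\epsilon_i\gtrsim \rho\}\). The gain comes from collision configurations in which the incoming energies \(\epsilon_1,\epsilon_2\) lie in the bulk \([\rho,\infty)\) while an outgoing energy falls in \([0,R]\); using the propagated envelope, the conservation laws \(\int g=M\), \(\int\epsilon g=E\), and the scalings of \(\Phi\) and of \(g_i/\sqrt{\epsilon_i}=4\pi\sqrt{2}f_i\), one extracts a Riccati-type inequality \(\dot{N}_R\geq C(M,E,\nu)\,N_R^{1+\mu}\rho^{-\sigma}\) for exponents \(\mu,\sigma>0\) fixed by the kernel. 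The exponent \(\theta_*\) is then chosen by the dimensional balance \(\sigma=\mu\theta_*\), so that the resulting ODE blow-up time \(\sim(K^*)^{-\mu}\rho^{\sigma-\mu\theta_*}=(K^*)^{-\mu}\) is independent of \(\rho\); one then takes \(K^*\) large so that this time lies strictly below \(\tau\).

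The main obstacle is controlling the loss contribution \(-2\chi_{[0,R]}(\epsilon_1)\) in \(Q_\varphi\), which is linear in \(N_R\) and could a priori dominate the gain when \(N_R\) is small. Two ingredients resolve this. First, collisions with all four energies in \([0,R]\) have \(Q_\varphi\equiv 0\) (since then \(\epsilon_4\in[0,2R]\) and the combinatorics of the indicator functions cancel), so they drop out of the evolution and only genuinely off-diagonal terms require estimation. Second, the lower bound \(N_R(0)\geq cK^*\rho^{\theta_*}\) is taken large enough through \(K^*\) that the superlinear gain \(N_R^{1+\mu}\) already exceeds the linear loss \(CN_R\) at time zero, a property then preserved by the monotone growth of \(N_R\). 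Combining these ingredients yields the Riccati inequality on \([0,\tau]\), whence finite-time blow-up of \(N_R\), whence of \(\|f\|_{L^\infty}\) by (\ref{F3E6b}); the bound \(T_{\max}<T_0(M,E)\) then follows after absorbing the \(\nu\) and \(K^*\) dependence into the choices of scale parameters determined by \(M\) and \(E\).
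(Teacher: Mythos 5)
The paper does not prove Theorem~\ref{main}: Section~4 recalls it verbatim from \cite{EV1} as a black-box ingredient for Theorem~\ref{Main}, so there is no in-text proof to compare against. I therefore assess your sketch on its own terms.

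The central claim of your plan --- a Riccati inequality $\dot N_R\geq C(M,E,\nu)\,N_R^{1+\mu}\rho^{-\sigma}$ for the fixed-scale mass $N_R(t)=\int_0^Rg(t,\epsilon)\,d\epsilon$ --- does not follow from the decomposition you describe, and this is a genuine gap. You locate the superlinear gain in collisions with $\epsilon_1,\epsilon_2\in[\rho,\infty)$ (bulk) and an outgoing $\epsilon_3\leq R$. Writing $\varphi=\chi_{[0,R]}$, in that region $\epsilon_4=\epsilon_1+\epsilon_2-\epsilon_3\geq 2\rho-R>R$, so $Q_\varphi=\varphi(\epsilon_3)+\varphi(\epsilon_4)-2\varphi(\epsilon_1)=1$. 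But in the cubic integrand $\frac{g_1g_2g_3\Phi}{\sqrt{\epsilon_1\epsilon_2\epsilon_3}}Q_\varphi$ exactly one of the three factors $g_i/\sqrt{\epsilon_i}$ is then localized in $[0,R]$; integrating gives a bound of the form $\bigl(\int_\rho^\infty f\bigr)^2\int_0^Rf\sqrt\epsilon\,d\epsilon\lesssim M^2\,N_R$, which is \emph{linear} in $N_R$. The corresponding quadratic (Boltzmann) contribution carries no $g_3$ factor at all and yields only $\propto R^{3/2}$ --- precisely the transfer term $I_1\geq\alpha R^{3/2}$ already used in Proposition~\ref{lowMass}, which gives saturation, not blow-up. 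To manufacture an $N_R^{1+\mu}$ you would need at least two of $\epsilon_1,\epsilon_2,\epsilon_3$ below $R$; every such configuration has $Q_\varphi\leq0$ (e.g.\ $\epsilon_1,\epsilon_3\leq R$, $\epsilon_2$ in the bulk gives $Q_\varphi=-1$; $\epsilon_1,\epsilon_2\leq R$, $\epsilon_3>R$ gives $Q_\varphi=-2$), so these are loss or neutral terms. Testing against $\chi_{[0,R]}$ at a single scale therefore gives at best $\dot N_R\geq c_1R^{3/2}+(c_2-c_3)N_R$, i.e.\ exponential growth or saturation, never a finite-time singularity. Establishing a superlinear mechanism requires either a weight that is singular near $\epsilon=0$ or a multi-scale bootstrap across shrinking scales, neither of which appears in your sketch.

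Two secondary problems. Your envelope propagation produces a time $\tau$ depending on $\|f_0\|_{L^\infty((1+\epsilon)^\gamma)}$, whereas the theorem requires $T_0=T_0(M,E)$; absorbing this by ``taking $K^*$ large'' would make $K^*$ depend on $f_0$, contradicting the statement $K^*=K^*(M,E,\nu)$. And the phrase ``forcing $N_R(t)\to\infty$'' contradicts the conservation law $N_R\leq M$; the logically correct reading (ODE comparison plus the bound $N_R\leq M$ exclude existence past a computable time, whence (\ref{F3E6b})) is what you intend, but should be stated that way.
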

The above Theorem  means that initial data  $f_{0}\in
L^{\infty}\left( \mathbb{R}^{+};\left( 1+\epsilon\right) ^{\gamma}\right) $, with a sufficiently large density around $\epsilon =0$, blows up in finite time.
More precisely, the condition (\ref{Condicion}) means that  there exists $\rho \in(0, \rho _0)$ satisfying:
\begin{align}
\int_{0}^{R}f_{0}\left( \epsilon\right) \sqrt{\epsilon}d\epsilon & \geq\nu
R^{\frac{3}{2}}\ \ \text{for }0<R\leq\rho\ \ ,\ \ \int_{0}^{\rho}f_{0}\left(
\epsilon\right) \sqrt{\epsilon}d\epsilon\geq K^{\ast}\left( \rho\right)
^{\theta_{\ast}},   \label{C2}
\end{align} 
The second condition in (\ref{C2}) holds if the distribution $f_0$ has a mass sufficiently large in a ball with radius $\rho$ for some $\rho$ sufficiently small. The first condition is satisfied if $f_0(\epsilon)\ge 3\nu/2$ for all $\epsilon $ sufficiently small. Since $\theta_*$ might  be small, the first condition in (\ref{C2}) does not implies the second.
\\
Our results  do not provide an explicit functional relation for the functions $\rho _0(M, E, \nu)$,  $K^*(M, E, \nu)$ and $T_0(M, E)$ in terms of their arguments. Therefore, for a given initial data $f_0 \in
L^{\infty}\left( \mathbb{R}^{+};\left( 1+\epsilon\right) ^{\gamma}\right) $ it is not easy to check if condition (\ref{C2}) is satisfied. However, it has been verified in \cite{EV1} that the class of functions $f_0 \in
L^{\infty}\left( \mathbb{R}^{+};\left( 1+\epsilon\right) ^{\gamma}\right) $ satisfying such condition is not empty.

\subsection{Condensation condition.}
The following result about finite
time condensation  for weak solutions of (\ref{F3E2}), (\ref{F3E3})  has been proved  in \cite{EV1}.
\begin{theorem}
\label{Cond1} There exist $\
\theta_{\ast}>0$ with the following property. For all  $M>0,\ E>0,$ $\nu>0$,  there exists $ 
\rho_0=\rho_0\left( M,E,\nu\right) >0$, $ K^{\ast}=K^{\ast}\left( M,E,\nu\right)
>0,\ T_{0}=T_{0}\left( M,E\right) $ such that for any  weak solution of (\ref{F3E4}%
), (\ref{F3E5}) on $(0, T_0)$ in the sense of Definition \ref{weak} with $g_{0}\in\mathcal{%
M}_{+}\left( \mathbb{R}^{+};\left( 1+\epsilon\right) \right) $ satisfying$\ $%
\begin{eqnarray}
&&4\pi\sqrt{2}\int_{\mathbb{R}^{+}}g_{0}\left( \epsilon\right) d\epsilon  =M\
,\ \ 4\pi\sqrt{2}\int_{\mathbb{R}^{+}}g_{0}\left( \epsilon\right) \epsilon
d\epsilon=E\   \label{Z1E5N} \\
&& \sup _{ 0\le \rho \le \rho _0}\left[
 \min\left\{
\inf _{ 0\le R\le \rho  }\frac {1} {\nu R^{3/2}}\int _0^R g_0(\epsilon) d\epsilon,  \frac {1} {K^*\rho ^{\theta_*}}\int _0^\rho g_0(\epsilon) d\epsilon
 \right\}
 \right]\ge 1,  \label{Z1E6}
\end{eqnarray}
we have:
\begin{equation}
\sup_{0<t\leq T_{0}}\int_{\left\{ 0\right\} }g\left( t,\epsilon\right)
d\epsilon>0.\   \label{Z1E7N}
\end{equation}
\end{theorem}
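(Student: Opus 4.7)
The plan is to imitate, at the level of the weak formulation (\ref{Z1E2N}), the strategy that yields the mild-solution blow-up criterion of Theorem \ref{main}, but with the $L^{\infty}$ norm replaced by the mass $\mu_{R}(t):=\int_{\mathbb{R}^{+}}\psi_{R}(\epsilon)g(t,\epsilon)d\epsilon$ that the weak solution places on a small neighborhood of the origin. Here $\psi_{R}\in C^{2}([0,\infty))$ is a non-increasing cutoff with $\psi_{R}\equiv 1$ on $[0,R/2]$ and $\psi_{R}\equiv 0$ on $[R,\infty)$. Testing (\ref{Z1E2N}) against $\varphi(t,\epsilon)=\eta(t)\psi_{R}(\epsilon)$, with $\eta$ an approximation of the indicator of $[0,s]$, yields an integral identity for $\mu_{R}(s)-\mu_{R}(0)$ in terms of the cubic and quadratic collision integrals weighted by $Q_{\psi_{R}}$ of (\ref{Z1E4N}). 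Using $\epsilon_{4}=\epsilon_{1}+\epsilon_{2}-\epsilon_{3}$ and the monotonicity of $\psi_{R}$, one checks that $Q_{\psi_{R}}\ge 0$ on the region $\{\epsilon_{1}\ge R\}$, which encodes the desired transfer of mass from $[R,\infty)$ into a neighborhood of the origin.

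The heart of the argument is to couple the concentration hypothesis (\ref{Z1E6}) with the quadratic kernel $g_{1}g_{2}\Phi/\sqrt{\epsilon_{1}\epsilon_{2}}$ in (\ref{Z1E2N}): this is precisely the downward-in-energy transport identified in \cite{EV1} as the mechanism which cannot be balanced by the cubic loss terms once mass has accumulated near the origin. Inserting the lower bounds provided by (\ref{Z1E6}) into this quadratic integral should produce an estimate of the form
\[
\frac{d}{dt}\mu_{R}(t)\;\ge\;c(M,E,\nu)\,R^{\alpha}
\]
for some $\alpha>0$ and all $R\le\rho_{0}$, valid on a time interval $[0,\tau_{0}(M,E,\nu)]$ on which (\ref{Z1E6}) persists with slightly degraded constants; the persistence step is a short-time continuity argument in the topology of $\mathcal{M}_{+}(\mathbb{R}^{+};1+\epsilon)$ of Definition~\ref{weak}. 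Iterating this lower bound while invoking the global a priori cap on $\mu_{R}(t)$ coming from mass conservation shows that the quadratic flux must eventually drive mass through every scale $R$, with nowhere else for it to go, hence forces the formation of an atom at $\{0\}$ by some time $T_{0}$, which is (\ref{Z1E7N}).

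The main obstacle I anticipate is the rigorous treatment of the singular weights $1/\sqrt{\epsilon_{i}}$ in the collision integrals of (\ref{Z1E2N}) when $g(t,\cdot)$ is merely a measure with possible concentrations near (but not yet at) the origin; these integrals are not a priori finite for arbitrary $g\in\mathcal{M}_{+}(\mathbb{R}^{+};1+\epsilon)$ and cannot be naively decomposed using $\psi_{R}$. I would handle this by working on the maximal time interval on which $\int_{\{0\}}g(t,\cdot)\,d\epsilon=0$, approximating $g$ by regularizations supported away from the origin for which, via Lemma \ref{der17Bis} and Theorem \ref{localExistence}, the mild-solution theory applies, and then passing to the limit in the sign-definite inequality for $\mu_{R}$. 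A further delicate point is that the theorem asserts $T_{0}$ depends only on $(M,E)$, not on $\nu$: once the condensation mechanism has been triggered by (\ref{Z1E6}), a self-reinforcement of the concentration profile as $R$ shrinks should give a quantitative bound on the formation time that is independent of the triggering constants $\nu,\rho_{0},K^{*}$.
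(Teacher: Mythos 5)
The statement you are trying to prove is not, in fact, proved in the present paper: it appears in Section~4, ``Summary of some results in \cite{EV1},'' preceded explicitly by the sentence ``The following result about finite time condensation for weak solutions ... has been proved in \cite{EV1}.'' The paper uses it as a black box. The proof labeled ``Proof of Theorem~\ref{Cond1}'' in Section~10 actually proves the supercritical condensation theorem of Subsection~3.3 (the two theorems share the internal label \texttt{Cond1} by a clash in the source), and it does so precisely by \emph{invoking} the cited \cite{EV1} criterion after manufacturing its hypothesis~(\ref{Z1E6}) from Lu's convergence-to-equilibrium result and Proposition~\ref{lowMass}. There is therefore no proof of this statement in the paper against which your sketch can be compared.

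Beyond the editorial point, your sketch places the quadratic kernel $g_1 g_2 \Phi/\sqrt{\epsilon_1\epsilon_2}$ in the driver's seat, but that kernel is the classical Boltzmann part, and by itself it does not produce atoms. In the mechanism this paper rests on, the quadratic term is only the ``pump'' that delivers the $\nu R^{3/2}$ lower bound: this is exactly Proposition~\ref{lowMass}, which uses the tent test function $\varphi_R=(1-\epsilon/R)_+$, Gr\"onwall, and mass/energy conservation, needs no concentration hypothesis, and never creates an atom. The finite-time formation of the Dirac mass is a genuinely nonlinear runaway driven by the cubic term $g_1 g_2 g_3 \Phi/\sqrt{\epsilon_1\epsilon_2\epsilon_3}$: once mass of order $K^*\rho^{\theta_*}$ sits in $[0,\rho]$, that gain term grows superlinearly in the local mass and the loss terms cannot compensate, which is why (\ref{Z1E6}) couples \emph{both} the averaged lower bound $\int_0^R g_0 \ge \nu R^{3/2}$ and the concentration bound $\int_0^\rho g_0 \ge K^*\rho^{\theta_*}$. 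Your claimed inequality $\mu_R'(t)\ge c(M,E,\nu) R^{\alpha}$, even granted, is additive in $t$ and vanishes as $R\to 0$, so no finite-time iteration of it can produce a positive mass at $\{0\}$; nor does it yield the essential feature that $T_0$ depends only on $(M,E)$ and not on $\nu,\rho_0,K^*$. Closing either gap requires a self-reinforcing differential inequality in which the gain scales as a power strictly greater than one of the accumulated mass near the origin --- that is exactly what the cubic bosonic term supplies, and it is absent from your argument.
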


\subsection{Transfer of mass formula.}

We will use the following result,  proved  in \cite{EV1} (Proposition 4.1), to estimate the rate of transfer of particles
between different regions.

We recall that the density of particles for unit of energy is given by (cf.
\cite{EV1}):
\begin{equation}
g\left(  t,\epsilon\right)  =4\pi\sqrt{2\epsilon}f\left(  t,\epsilon\right)
\label{F3E3a}%
\end{equation}

\begin{proposition}
\label{der17}Suppose that $g$ is a weak solution of (\ref{F3E4}), (\ref{F3E5}) in $(0, T)$, with initial datum $g_{0}\in%
\mathcal{M}_{+}\left( \mathbb{R}^{+};1+\epsilon\right)$ in the sense of Definition \ref{weak}. Suppose that $\varphi\in C_{0}^{1}\left(  \left[
0,T\right)  \times\left[  0,\infty\right)  \right)$.  Then, the function
$\psi_{\varphi}\left(  t\right)  =\int_{\mathbb{R}^{+}}g\left(  t,\epsilon
\right)  \varphi\left(  t,\epsilon\right)  d\epsilon$ is Lipschitz continuous
in $t\in\left[  0,T\right]  $ and the following identity holds:
\begin{align}
\partial_{t}\left(  \int_{\mathbb{R}^{+}}\!\!\!\!g\varphi d\epsilon\right)
&  =\int_{\mathbb{R}^{+}}\!\!\!\!g\partial_{t}\varphi d\epsilon+\frac
{1}{2^{\frac{5}{2}}}\int_{\mathbb{R}^{+}}\int_{\mathbb{R}^{+}}\int
_{\mathbb{R}^{+}}\!\frac{g_{1}g_{2}g_{3}\Phi}{\sqrt{\epsilon_{1}\epsilon
_{2}\epsilon_{3}}}\mathcal{G}_{\varphi}\left(  \mathcal{\epsilon}%
_{1},\mathcal{\epsilon}_{2},\mathcal{\epsilon}_{3}\right)  d\epsilon
_{1}d\epsilon_{2}d\epsilon_{3}+\nonumber\\
&  +\frac{\pi}{2}\int_{\mathbb{R}^{+}}\int_{\mathbb{R}^{+}}\int_{\mathbb{R}%
^{+}}\frac{g_{1}g_{2}\Phi}{\sqrt{\epsilon_{1}\epsilon_{2}}}\mathcal{H}%
_{\varphi}d\epsilon_{1}d\epsilon_{2}d\epsilon_{3}\ \ ,\ \ a.e.\ t\in\left[
0,T\right]  \label{F5E1a}%
\end{align}
with:%
\begin{equation}
\Phi=\min\left\{  \sqrt{\epsilon_{1}},\sqrt{\epsilon_{2}},\sqrt{\epsilon_{3}%
},\sqrt{\left(  \epsilon_{1}+\epsilon_{2}-\epsilon_{3}\right)  _{+}}\right\}
\ \label{F5E1b}%
\end{equation}%
\begin{equation}
\mathcal{H}_{\varphi}=\varphi\left(  \epsilon_{3}\right)  +\varphi\left(
\epsilon_{1}+\epsilon_{2}-\epsilon_{3}\right)  -\varphi\left(  \epsilon
_{1}\right)  -\varphi\left(  \epsilon_{2}\right)  \label{F5E1c}%
\end{equation}%
\[
\mathcal{G}_{\varphi}\left(  \mathcal{\epsilon}_{1},\mathcal{\epsilon}%
_{2},\mathcal{\epsilon}_{3}\right)  =\frac{1}{6}\sum_{\sigma\in\mathcal{S}%
^{3}}H_{\varphi}\left(  \mathcal{\epsilon}_{\sigma(1)},\mathcal{\epsilon
}_{\sigma(2)},\mathcal{\epsilon}_{\sigma(3)}\right)  \Phi\left(
\mathcal{\epsilon}_{\sigma(1)},\mathcal{\epsilon}_{\sigma(2)}%
;\mathcal{\epsilon}_{\sigma(3)}\right)
\]
where $\mathcal{S}^{3}$ is the group of permutations of $\left\{
1,2,3\right\}  .$ Moreover, for any convex function $\varphi$ we have:%
\begin{align}
& \mathcal{G}_{\varphi}\left(  \mathcal{\epsilon}_{1},\mathcal{\epsilon}%
_{2},\mathcal{\epsilon}_{3}\right)  \geq0\ \ ,\ \ \ \left(  \mathcal{\epsilon
}_{1},\mathcal{\epsilon}_{2},\mathcal{\epsilon}_{3}\right)  \in\mathbb{R}%
_{+}^{3}\label{F5E0}\\
& \mathcal{G}_{\varphi}\left(  \mathcal{\epsilon}_{1},\mathcal{\epsilon}%
_{2},\mathcal{\epsilon}_{3}\right)  =\mathcal{G}_{\varphi}\left(
\mathcal{\epsilon}_{\sigma(1)},\mathcal{\epsilon}_{\sigma(2)}%
,\mathcal{\epsilon}_{\sigma(3)}\right)  \ \ \ \text{for any\ }\sigma
\in\mathcal{S}^{3}\label{F5E0bis}%
\end{align}
\end{proposition}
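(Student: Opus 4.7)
The plan is to derive the identity (\ref{F5E1a}) from the weak formulation of Definition \ref{weak}, pass from the integrated identity to a pointwise-in-time statement via a standard test-function argument, then rewrite the collision integrals in symmetric form using the invariances of the weighting measures, and finally verify the convexity property $\mathcal{G}_\varphi \geq 0$.

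First, I would apply Definition \ref{weak} to separable test functions of the form $\varphi(t,\epsilon) = \chi(t)\phi(\epsilon)$ with $\chi \in C_0^1([0,T))$ and $\phi \in C^2([0,\infty))$ bounded; the slightly weaker regularity required in the proposition is reached by a routine density argument since only first-order time derivatives actually enter the identity. Since $g \in C([0,T); \mathcal{M}_+(\mathbb{R}^+; 1+\epsilon))$, the total mass and energy $\int (1+\epsilon)\, g(t,\epsilon)\, d\epsilon$ are locally bounded by some $\mathcal{E}_0 < \infty$. Using the elementary bound $\Phi \leq \min\{\sqrt{\epsilon_1},\sqrt{\epsilon_2}\}$ and its permutations, together with $\epsilon_2 = \epsilon_3+\epsilon_4-\epsilon_1$, the cubic and quadratic collision integrals tested against a fixed bounded $\phi$ are controlled by $C(\phi)(\mathcal{E}_0^3 + \mathcal{E}_0^2)$. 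The weak identity then reads $\int_0^T \chi'(t)\psi_\phi(t)\, dt = -\chi(0)\psi_\phi(0) - \int_0^T \chi(t)\mathcal{R}_\phi(t)\, dt$ with $|\mathcal{R}_\phi(t)|$ uniformly bounded, so that $\psi_\phi$ is Lipschitz with $\psi'_\phi = \mathcal{R}_\phi$ a.e. A general $\varphi \in C_0^1([0,T)\times[0,\infty))$ is then treated by approximation and a chain-rule argument, producing the term $\int g\, \partial_t\varphi\, d\epsilon$ in (\ref{F5E1a}).

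Second, to pass from the asymmetric $Q_\varphi$ of Definition \ref{weak} to the symmetric integrands $\mathcal{G}_\varphi$ and $\mathcal{H}_\varphi$, I would exploit the invariances of the weighting measures. The cubic weight $g_1 g_2 g_3/\sqrt{\epsilon_1\epsilon_2\epsilon_3}\, d\epsilon_1 d\epsilon_2 d\epsilon_3$ is invariant under every permutation of $(\epsilon_1,\epsilon_2,\epsilon_3)$, so I may replace the integrand $\Phi Q_\varphi$ by its average over $\mathcal{S}^3$, which is precisely $\mathcal{G}_\varphi$ (the $\Phi$ appearing in front of $\mathcal{G}_\varphi$ in (\ref{F5E1a}) being understood as part of that definition). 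For the quadratic term the weight $g_1 g_2/\sqrt{\epsilon_1\epsilon_2}$ is invariant under $\epsilon_1 \leftrightarrow \epsilon_2$, and the substitution $\epsilon_3 \mapsto \epsilon_1+\epsilon_2-\epsilon_3$ leaves $\Phi$ invariant while swapping $\varphi(\epsilon_3)$ with $\varphi(\epsilon_1+\epsilon_2-\epsilon_3)$; averaging $Q_\varphi$ under these two operations yields the fully symmetric combination $\mathcal{H}_\varphi$.

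Third, the $\mathcal{S}^3$-invariance (\ref{F5E0bis}) is immediate from the definition of $\mathcal{G}_\varphi$ as a symmetric average. The nonnegativity (\ref{F5E0}) for convex $\varphi$ is the main obstacle. Using the conservation $\epsilon_{\sigma(1)}+\epsilon_{\sigma(2)} = \epsilon_{\sigma(3)}+\epsilon_{\sigma(4)}$ within each summand, one can rewrite $\mathcal{G}_\varphi$ as a combination of elementary Boltzmann-type quantities $\varphi(\epsilon_i)+\varphi(\epsilon_j)-\varphi(\epsilon_k)-\varphi(\epsilon_\ell)$ with $\epsilon_i+\epsilon_j = \epsilon_k+\epsilon_\ell$, each of which is nonnegative when weighted appropriately thanks to convexity. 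Separating the six permutations according to which of the four $\sqrt{\epsilon_i}$ realizes the minimum defining $\Phi$ and carefully collecting terms, the symmetrized expression decomposes into a sum of manifestly nonnegative contributions; this is the structure of the classical Boltzmann $H$-theorem computation, carried out here with an arbitrary convex $\varphi$ in place of $s \mapsto s\log s$.
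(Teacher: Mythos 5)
Your proposal does not follow the paper's own proof, because the paper simply invokes Lemma 3.13 and Theorem 4.2 of \cite{EV1} and gives no argument. What you write instead is a self-contained derivation directly from Definition \ref{weak}, and the route you choose is sound. The three steps --- (i) testing (\ref{Z1E2N}) with separable $\chi(t)\phi(\epsilon)$, reading off that $\psi_\phi$ is Lipschitz with a.e.\ derivative equal to the collision functional, and handling general $\varphi\in C_0^1$ by density and the chain rule; (ii) symmetrizing $\Phi\,Q_\varphi$ over $\mathcal{S}^3$ using the $\mathcal{S}^3$-invariance of $g_1g_2g_3/\sqrt{\epsilon_1\epsilon_2\epsilon_3}\,d\epsilon_1d\epsilon_2d\epsilon_3$, and averaging $Q_\varphi$ over $\epsilon_1\leftrightarrow\epsilon_2$ to pass to $\mathcal H_\varphi$ in the quadratic term; (iii) proving (\ref{F5E0}) by convexity --- are exactly what a genuine proof requires, and you correctly flag the notational wrinkle that the $\Phi$ written in front of $\mathcal G_\varphi$ in (\ref{F5E1a}) is already absorbed into the definition of $\mathcal G_\varphi$, so that the symmetrized cubic integrand matches the one in Definition~\ref{weak}. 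One small redundancy: for the quadratic term, the symmetrization under $\epsilon_1\leftrightarrow\epsilon_2$ alone already produces $\mathcal H_\varphi$, so the additional substitution $\epsilon_3\mapsto\epsilon_1+\epsilon_2-\epsilon_3$ is not needed.

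Where your sketch is vaguest is step (iii). You invoke ``separating the six permutations according to which of the four $\sqrt{\epsilon_i}$ realizes the minimum,'' but the fourth energy is not a free permutation slot, so this bookkeeping does not line up cleanly with the structure of the $\mathcal{S}^3$-sum. A crisper route, which is worth spelling out: by (\ref{F5E0bis}) one may assume $\epsilon_1\le\epsilon_2\le\epsilon_3$; using the $(\epsilon_1\leftrightarrow\epsilon_2)$-symmetry of each summand, $6\,\mathcal G_\varphi$ reduces to
$2\,\mathcal H_\varphi(\epsilon_1,\epsilon_2,\epsilon_3)\Phi(\epsilon_1,\epsilon_2;\epsilon_3)
+2\,\mathcal H_\varphi(\epsilon_1,\epsilon_3,\epsilon_2)\Phi(\epsilon_1,\epsilon_3;\epsilon_2)
+2\,\mathcal H_\varphi(\epsilon_2,\epsilon_3,\epsilon_1)\Phi(\epsilon_2,\epsilon_3;\epsilon_1)$.
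With this ordering $\Phi(\epsilon_1,\epsilon_3;\epsilon_2)=\Phi(\epsilon_2,\epsilon_3;\epsilon_1)=\sqrt{\epsilon_1}$, so the last two terms combine into
$\sqrt{\epsilon_1}\big[\varphi(\epsilon_1+\epsilon_3-\epsilon_2)+\varphi(\epsilon_2+\epsilon_3-\epsilon_1)-2\varphi(\epsilon_3)\big]\ge0$ by convexity (the two arguments average to $\epsilon_3$), while the first term has $\mathcal H_\varphi(\epsilon_1,\epsilon_2,\epsilon_3)\ge0$ because $\{\epsilon_1+\epsilon_2-\epsilon_3,\epsilon_3\}$ dominates $\{\epsilon_1,\epsilon_2\}$ in convex order, and $\Phi(\epsilon_1,\epsilon_2;\epsilon_3)=\sqrt{(\epsilon_1+\epsilon_2-\epsilon_3)_+}\ge0$. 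This gives (\ref{F5E0}) cleanly and shows every term is nonnegative, which is stronger than what your grouping suggests you would get. With that clarification the proposal is a complete and correct alternative to the paper's citation-only proof.
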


\begin{proof}
It is just a consequence of Lemma 3.13 and Theorem 4.2 in \cite{EV1}.
\end{proof}

\section{A lower estimate for the mass in $\left[  0,R\right] $ with $R$
small.}

\setcounter{equation}{0} \setcounter{theorem}{0}

\begin{proposition}
\label{lowMass} 
Suppose that $g$ is a weak solution of (\ref{F3E4}), (\ref{F3E5}) in $(0, T)$, with initial datum $g_{0}\in%
\mathcal{M}_{+}\left( \mathbb{R}^{+};1+\epsilon\right)$ in the sense of Definition \ref{weak}.
Then, there exist $T_{1}%
=T_{1}\left(  E,M\right)  ,\ \rho_{1}=\rho_{1}\left(  E,M\right)  $ and $K=K\left(
E,M\right)  $ such that if $T_{\max}\geq T_{1}\left(  E,M\right)  $, we have:%
\begin{equation}
\label{S5EReg}
\int_{0}^{R}g\left(  \epsilon,t\right)  d\epsilon\geq KR^{\frac{3}{2}%
}\ \ \text{for any }0<R\leq\rho_{1}(E, M)
\end{equation}
for any $t\geq T_{1}\left(  E,M\right)  .$
\end{proposition}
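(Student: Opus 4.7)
The strategy is to apply the transfer-of-mass formula (Proposition~\ref{der17}) to a convex test function concentrated in $[0,R]$ and exploit the ``gain'' contribution from the quadratic Boltzmann-type term: binary collisions between particles of $O(1)$ energy deposit particles at low energies $\epsilon_{3}\in[0,R]$ at a rate proportional to $\Phi\simeq\sqrt{\epsilon_{3}}$. Integrating over $\epsilon_{3}$ yields a production rate in $[0,R]$ of order $R^{3/2}$, which, accumulated over a finite time $T_{1}=T_{1}(M,E)$, gives the claimed lower bound.

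\emph{Step 1 (Mass localization).} Using conservation of $M=4\pi\sqrt{2}\int g\,d\epsilon$ and $E=4\pi\sqrt{2}\int\epsilon g\,d\epsilon$ (valid for weak solutions by Lu's theorem), Markov's inequality gives $\int_{0}^{2E/M}g\,d\epsilon\ge M/8\pi\sqrt{2}$ uniformly in $t$. I split into two cases: either an $(M,E)$-dependent fraction of this mass sits below some small $R_{0}=R_{0}(M,E)$, in which case the desired bound $\int_{0}^{R}g\,d\epsilon\ge KR^{3/2}$ holds trivially for $R\le\rho_{1}\le R_{0}$ small; or there exist $0<R_{0}<R_{1}<\infty$ depending only on $(M,E)$ such that
\[
\int_{R_{0}}^{R_{1}}g(t,\epsilon)\,d\epsilon\ge c_{0}(M,E)>0\quad\text{for all }t\ge 0.
\]

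\emph{Step 2 (Applying the transfer formula to a convex bump).} Take the convex test function $\varphi_{R}(\epsilon)=(1-\epsilon/R)_{+}$ (admissible in Proposition~\ref{der17} after standard mollification and passage to the limit) and set $y(t)=\int g(t,\epsilon)\varphi_{R}(\epsilon)\,d\epsilon\le m_{R}(t):=\int_{0}^{R}g(t,\epsilon)\,d\epsilon$. By convexity the cubic term in (\ref{F5E1a}) is non-negative by (\ref{F5E0}). For the quadratic $\mathcal{H}_{\varphi_{R}}$-term, restrict the integration to $\epsilon_{1},\epsilon_{2}\in[R_{0},R_{1}]$ and $\epsilon_{3}\in[0,R/2]$: for $R\le R_{0}$ we have $\varphi_{R}(\epsilon_{1})=\varphi_{R}(\epsilon_{2})=\varphi_{R}(\epsilon_{4})=0$ (since $\epsilon_{4}=\epsilon_{1}+\epsilon_{2}-\epsilon_{3}\ge R_{0}$) and $\Phi=\sqrt{\epsilon_{3}}$, so that $\mathcal{H}_{\varphi_{R}}\ge 1/2$ and
\[
\text{Gain}\ge \frac{\pi}{4}\Bigl(\int_{R_{0}}^{R_{1}}\frac{g}{\sqrt{\epsilon}}\,d\epsilon\Bigr)^{\!2}\int_{0}^{R/2}\sqrt{\epsilon_{3}}\,d\epsilon_{3}\ge c_{1}(M,E)\,R^{3/2}.
\]
By symmetry the loss contribution from $\mathcal{H}_{\varphi_{R}}$ is bounded, using $\Phi\le\sqrt{\epsilon_{1}}$, $\int\Phi\,d\epsilon_{3}\le\sqrt{\epsilon_{1}}(\epsilon_{1}+\epsilon_{2})$, and Cauchy--Schwarz $\int g\sqrt{\epsilon}\,d\epsilon\le\sqrt{ME}/(4\pi\sqrt{2})$, by $c_{2}(M,E)\,y(t)$ for $R$ small.

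\emph{Step 3 (Gr\"onwall and conclusion).} Combining, $y'(t)\ge c_{1}R^{3/2}-c_{2}\,y(t)$, and Gr\"onwall's lemma yields
\[
y(t)\ge \frac{c_{1}}{2c_{2}}R^{3/2}\quad\text{for all }t\ge T_{1}:=\log 2/c_{2}(M,E),
\]
and hence $m_{R}(t)\ge y(t)\ge K(M,E)\,R^{3/2}$ for all $R\le\rho_{1}:=R_{0}(M,E)$ and $t\ge T_{1}$. The main technical obstacle is the loss estimate for general measure-valued solutions, because integrals of the form $\int g/\sqrt{\epsilon}\,d\epsilon$ may diverge if mass concentrates near $\epsilon=0$; the argument circumvents this by pairing the $\sqrt{\epsilon_{1}}$ from $\Phi$ with the kernel weight $1/\sqrt{\epsilon_{2}}$ so that only the controlled quantity $\int g\sqrt{\epsilon}\,d\epsilon$ appears at leading order, with the subleading term proportional to $R\int g/\sqrt{\epsilon}\,d\epsilon$ absorbed by taking $R\le\rho_{1}$ small enough. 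A secondary technical issue is justifying Proposition~\ref{der17} with the non-smooth test function $(1-\epsilon/R)_{+}$, which is handled by mollification.
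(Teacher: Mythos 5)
Your overall strategy mirrors the paper's: test Proposition~\ref{der17} with the convex function $\varphi_R(\epsilon)=(1-\epsilon/R)_+$, discard the nonnegative cubic term via (\ref{F5E0}), bound the quadratic gain below by $cR^{3/2}$ and the loss above by $c'\,y(t)$, then close with a Gr\"onwall-type argument. The mollification remark is fine, and your loss estimate is essentially the paper's. The genuine gap is in Step~1, and it matters because Step~2 relies directly on its output.

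The dichotomy you assert does not hold, for two independent reasons. First, knowing that a definite fraction of the mass lies in $[0,R_0]$ gives \emph{no} lower bound on $\int_0^Rg\,d\epsilon$ for $R<R_0$: the measure $g(t,\cdot)$ could be a Dirac mass at $R_0/2$, making $\int_0^Rg\,d\epsilon=0$ for all $R<R_0/2$, so the claimed ``trivial'' conclusion $\int_0^Rg\,d\epsilon\ge KR^{3/2}$ for all $R\le\rho_1$ fails. Second, a uniform-in-$t$ lower bound $\int_{R_0}^{R_1}g(t,\cdot)\,d\epsilon\ge c_0(M,E)$ with a \emph{fixed} $R_0>0$ cannot hold: nothing in the conserved $M,E$ prevents the entire mass from sitting in $[0,R_0)\cup(R_1,\infty)$ at some time. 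Since your gain estimate in Step~2 needs a uniform mass bound in an annulus $[R_0,R_1]$ with $R_0>R$ fixed, the differential inequality $y'\ge c_1R^{3/2}-c_2y$ is not established unconditionally, and the Gr\"onwall step does not close as written.

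The correct way out (and what the paper does) is to make the case analysis \emph{conditional} on the current value of $y(t)$ for each fixed $R$ and $t$. If $y(t)\le M/4$, then $\int_{R/2}^\infty g\,d\epsilon\ge M/8$ (otherwise $\int_0^{R/2}g\ge 7M/8$, and since $\varphi_R\ge\frac{1}{2}$ there, $y\ge 7M/16>M/4$); combining with the Chebyshev bound $\int_L^\infty g\,d\epsilon\le E/L$ gives $\int_{R/2}^{L}g\,d\epsilon\ge M/16$ with $L=16E/M$ depending only on $(M,E)$. This produces the gain bound $I_1\ge\alpha R^{3/2}$ exactly in the regime where it is needed. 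Note that the lower endpoint of the ``good'' region must scale with $R$ (here $R/2$, not a fixed $R_0$), so $\Phi$ is no longer simply $\sqrt{\epsilon_3}$ there; the paper handles this by introducing the scaled kernel $J(\eta_1,\eta_2)$ in (\ref{G1E5}) and using $J\ge c_0>0$ for $\eta_1,\eta_2\ge\frac{1}{2}$. When $y(t)>M/4$ the conclusion for that $R$ already holds (for $R\le\rho_1$ small enough that $M/4$ exceeds the Gr\"onwall fixed point $\alpha R^{3/2}/(4\pi(M+E))$), and the differential inequality then prevents $y$ from falling back below that level. With this correction your Steps~2 and~3 go through essentially as you wrote them.
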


\begin{proof}
We will assume that $\rho_{\textcolor{red}{1}}\leq1.$ Given $0<R\leq1,$ we define the following
family of test functions:%
\[
\varphi_{R}\left(  \epsilon\right)  =\left(  1-\frac{\epsilon}{R}\right)  _{+}%
\]
where $\left(  s\right)  _{+}=\max\left\{  s,0\right\}  .$ Using Proposition
\ref{der17} we obtain:%
\[
\partial_{t}\left(  \int g\varphi_{R}d\epsilon\right)  \geq\frac{\pi}{2}%
\int_{\mathbb{R}^{+}}\int_{\mathbb{R}^{+}}\int_{\mathbb{R}^{+}}\frac
{g_{1}g_{2}\Phi}{\sqrt{\epsilon_{1}\epsilon_{2}}}H_{\varphi_{R}}d\epsilon
_{1}d\epsilon_{2}d\epsilon_{3}\ \ ,\ \ a.e.\ t\in\left[  0,T\right]
\]
where we use the fact that $\varphi_{R}$ is convex, and then (\ref{F5E0})
holds. Using (\ref{F5E1c}) as well as the fact that $\varphi\left(
\epsilon_{1}+\epsilon_{2}-\epsilon_{3}\right)  \geq0$ we obtain for
$a.e.\ t\in\left[  0,T\right]  :$%
\begin{equation}
\partial_{t}\left(  \int g\varphi_{R}d\epsilon\right)  \geq I_{1}%
-I_{2}\ \label{G1E6}%
\end{equation}
where%
\begin{align}
I_{1}  & =\frac{\pi}{2}\int_{\mathbb{R}^{+}}\int_{\mathbb{R}^{+}}%
\int_{\mathbb{R}^{+}}\frac{g_{1}g_{2}\Phi}{\sqrt{\epsilon_{1}\epsilon_{2}}%
}\varphi_{R}\left(  \epsilon_{3}\right)  d\epsilon_{1}d\epsilon_{2}%
d\epsilon_{3}\label{G1E7}\\
I_{2}  & =\frac{\pi}{2}\int_{\mathbb{R}^{+}}\int_{\mathbb{R}^{+}}%
\int_{\mathbb{R}^{+}}\frac{g_{1}g_{2}\Phi}{\sqrt{\epsilon_{1}\epsilon_{2}}%
}\left[  \varphi_{R}\left(  \epsilon_{1}\right)  +\varphi_{R}\left(
\epsilon_{2}\right)  \right]  d\epsilon_{1}d\epsilon_{2}d\epsilon
_{3}\label{G1E8}%
\end{align}

We estimate $I_{2}$, using the definition of $\Phi,$ as well as the symmetry
with respect to the permutations of $\epsilon_{1},\epsilon_{2},$ as follows:%
\begin{align*}
I_{2}  & =\pi\int_{\mathbb{R}^{+}}\int_{\mathbb{R}^{+}}\int_{\mathbb{R}^{+}%
}\frac{g_{1}g_{2}\Phi}{\sqrt{\epsilon_{1}\epsilon_{2}}}\varphi_{R}\left(
\epsilon_{1}\right)  d\epsilon_{1}d\epsilon_{2}d\epsilon_{3}\\
& \leq\pi\int_{0}^{R}g_{1}\varphi_{R}\left(  \epsilon_{1}\right)
d\epsilon_{1}\int_{\mathbb{R}^{+}}g_{2}d\epsilon_{2}\left[  \frac{\min\left\{
\sqrt{\epsilon_{1}},\sqrt{\epsilon_{2}}\right\}  \left(  \epsilon_{1}%
+\epsilon_{2}\right)  }{\sqrt{\epsilon_{1}\epsilon_{2}}}\right]
\end{align*}
where we use that $\Phi\leq\min\left\{  \sqrt{\epsilon_{1}},\sqrt{\epsilon
_{2}}\right\}  $ and we compute the integral in $\epsilon_{3}$. Notice that
\hfill\break$\frac{\min\left\{  \sqrt{\epsilon_{1}},\sqrt{\epsilon_{2}%
}\right\}  \left(  \epsilon_{1}+\epsilon_{2}\right)  }{\sqrt{\epsilon
_{1}\epsilon_{2}}}\leq2\max\left\{  \sqrt{\epsilon_{1}},\sqrt{\epsilon_{2}%
}\right\}  .$ Then, since $\epsilon_{1}\leq R\leq1:$%
\begin{align*}
\int_{\mathbb{R}^{+}}g_{2}\left[  \frac{\min\left\{  \sqrt{\epsilon_{1}}%
,\sqrt{\epsilon_{2}}\right\}  \left(  \epsilon_{1}+\epsilon_{2}\right)
}{\sqrt{\epsilon_{1}\epsilon_{2}}}\right]  d\epsilon_{2}  & \leq
2\int_{\mathbb{R}^{+}}g_{2}\max\left\{  1,\sqrt{\epsilon_{2}}\right\}
d\epsilon_{2}\\
& \leq2\int_{\mathbb{R}^{+}}g_{2}\left(  1+\epsilon_{2}\right)  d\epsilon
_{2}\leq2\left(  M+E\right)
\end{align*}%
\begin{equation}
I_{2}\leq2\pi\left(  M+E\right)  \int_{0}^{R}g\varphi_{R}d\epsilon\label{G2E3}%
\end{equation}

We now estimate $I_{1}$ assuming that $\int_{0}^{R}g\varphi_{R}d\epsilon
\leq\frac{M}{4}.$ Using the fact that $\int gd\epsilon=M$ we have:
\begin{equation}
\int_{\frac{R}{2}}^{\infty}gd\epsilon\geq\frac{M}{8}\label{G1E9}%
\end{equation}

Indeed, suppose that $\int_{\frac{R}{2}}^{\infty}gd\epsilon<\frac{M}{8},$ then
$\int_{0}^{\frac{R}{2}}gd\epsilon\geq\frac{7M}{8},$ whence, since $\varphi
_{R}\geq\frac{1}{2}$ for $0\leq\epsilon\leq\frac{R}{2},$ we have $\int_{0}%
^{R}g\varphi_{R}d\epsilon\geq\frac{7M}{16}>\frac{M}{4}$ against the
assumption, whence (\ref{G1E9}) follows.

Using the definition of $\Phi$ (cf. (\ref{F5E1b})) we obtain:%
\begin{equation}
I_{1}=\frac{\pi R^{\frac{3}{2}}}{2}\int_{\mathbb{R}^{+}}\int_{\mathbb{R}^{+}%
}\frac{J\left(  \frac{\epsilon_{1}}{R},\frac{\epsilon_{2}}{R}\right)  }%
{\sqrt{\epsilon_{1}\epsilon_{2}}}g_{1}g_{2}d\epsilon_{1}d\epsilon
_{2}\ \label{G1E4}%
\end{equation}
where:%
\begin{equation}
J\left(  \eta_{1},\eta_{2}\right)  =\int_{0}^{\eta_{1}+\eta_{2}}\min\left\{
\sqrt{\eta_{1}},\sqrt{\eta_{2}},\sqrt{\eta_{3}}, \sqrt{\left(  \eta_{1}%
+\eta_{2}-\eta_{3}\right)  _{+}}\right\}  \left(  1-\eta_{3}\right)  _{+}%
d\eta_{3}\label{G1E5}%
\end{equation}

Notice that, if $\eta_{1}\geq\frac{1}{2},\ \eta_{2}\geq\frac{1}{2}$ we have
$J\left(  \eta_{1},\eta_{2}\right)  \geq c_{0}>0$ where $c_{0}$ is a number
independent on $R.$ Then:%
\begin{align}
I_{1} & \geq \frac{\pi R^{\frac{3}{2}}}{2}\int_{\frac{R}{2}}^{\infty}%
\int_{\frac{R}{2}}^{\infty}\frac{J\left(  \frac{\epsilon_{1}}{R}%
,\frac{\epsilon_{2}}{R}\right)  }{\sqrt{\epsilon_{1}\epsilon_{2}}}g_{1}%
g_{2}d\epsilon_{1}d\epsilon_{2}\nonumber\\
& \geq \frac{\pi c_{0}R^{\frac{3}{2}}}{2}\int_{\frac{R}{2}}^{\infty}%
\int_{\frac{R}{2}}^{\infty}\frac{g_{1}g_{2}}{\sqrt{\epsilon_{1}\epsilon_{2}}%
}d\epsilon_{1}d\epsilon_{2}=\frac{\pi c_{0}R^{\frac{3}{2}}}{2}\left(
\int_{\frac{R}{2}}^{\infty}\frac{g}{\sqrt{\epsilon}}d\epsilon\right)
^{2}\label{G2E1}%
\end{align}

Due to the boundedness of the energy we have:%
\[
\int_{L}^{\infty}gd\epsilon\leq\frac{1}{L}\int_{L}^{\infty}g\epsilon
d\epsilon\leq\frac{E}{L}%
\]

Then, if $L\geq\frac{16E}{M},$ $\int_{L}^{\infty}gd\epsilon\leq\frac{M}{16},$
whence, using (\ref{G1E9}):%
\[
\int_{\frac{R}{2}}^{L}gd\epsilon\geq\frac{M}{16}%
\]

Therefore (\ref{G2E1}) yields:%
\[
I_{1}\geq\frac{\pi c_{0}R^{\frac{3}{2}}}{2}\left(  \int_{\frac{R}{2}}^{L}%
\frac{g}{\sqrt{\epsilon}}d\epsilon\right)  ^{2}\geq\frac{\pi c_{0}R^{\frac
{3}{2}}}{2L}\left(  \int_{\frac{R}{2}}^{L}gd\epsilon\right)  ^{2}\geq\frac{\pi
c_{0}M^{2}R^{\frac{3}{2}}}{512L}%
\]

We have then obtained that there exists $\alpha=\alpha\left(  E,M\right)  $
such that, \hfill\break if $\int_{0}^{R}g\varphi_{R}d\epsilon\leq\frac{M}{4}:$%
\begin{equation}
I_{1}\geq\alpha R^{\frac{3}{2}}.\label{G2E2}%
\end{equation}

Combining (\ref{G1E6}), (\ref{G2E3}) and (\ref{G2E2}) we obtain:%
\begin{equation}
\partial_{t}\left(  \int g\varphi_{R}d\epsilon\right)  \geq\alpha R^{\frac
{3}{2}}-2\pi\left(  M+E\right)  \int_{0}^{R}g\varphi_{R}d\epsilon
\ \ \ \text{if }\int_{0}^{R}g\varphi_{R}d\epsilon\leq\frac{M}{4}\label{G2E4}%
\end{equation}

Choosing $R\leq\rho_{1}<\left(  \frac{\pi\left(  M+E\right)  M}{\alpha}\right)
^{\frac{2}{3}}$ it follows that, for $\int_{0}^{R}g\varphi_{R}d\epsilon
\leq\frac{\alpha R^{\frac{3}{2}}}{4\pi\left(  M+E\right)  }$ we would have
also $\int_{0}^{R}g\varphi_{R}d\epsilon<\frac{M}{4}$ and the inequality
(\ref{G2E4}) holds. This implies that $\int g\varphi_{R}d\epsilon$ increases
exponentially until reaching the value $\frac{\alpha R^{\frac{3}{2}}}%
{4\pi\left(  M+E\right)  }$ or larger in a time which depends only on $M$ and
$E$. If $\int_{0}^{R}g\varphi_{R}d\epsilon>\frac{M}{4}$ the inequality $\int
g\varphi_{R}d\epsilon\geq\frac{\alpha R^{\frac{3}{2}}}{4\pi\left(  M+E\right)
}$ is automatically satisfied, due to our choice of $\rho_{1}$. On the other hand,
if $\frac{\alpha R^{\frac{3}{2}}}{4\pi\left(  M+E\right)  }\leq\int_{0}%
^{R}g\varphi_{R}d\epsilon\leq\frac{M}{4}$ the value of $\int_{0}^{R}%
g\varphi_{R}d\epsilon$ remains in that region due to (\ref{G2E4}). Therefore,
there exists $T_{1}=T_{1}\left(  M,E\right)  $ such that if $T_{1}\left(
M,E\right)  \leq T_{\max}$:%
\[
\frac{\alpha R^{\frac{3}{2}}}{4\pi\left(  M+E\right)  }\leq\int g\varphi
_{R}d\epsilon\leq\int_{0}^{R}gd\epsilon\ ,\ \text{for\ }t\geq T_{1}\left(
M,E\right)
\]
whence the result follows.
\end{proof}

\section{Dissipation of entropy formula.}

\setcounter{equation}{0} \setcounter{theorem}{0} We now formulate the formula
for the dissipation of the entropy. The entropy associated to the bosonic,
isotropic quantum Boltzmann equation (\ref{F3E2}), (\ref{F3E3}) is given by:%
\begin{equation}
S\left[  f\right]  =\int_{\mathbb{R}^{+}}\left[  \left(  1+f\right)
\log\left(  1+f\right)  -f\log\left(  f\right)  \right]  \sqrt{\epsilon
}d\epsilon\label{G2E5}%
\end{equation}

We also define, for any given $f$ the dissipation of entropy by means of:%
\begin{align}
D\left[  f\right]   & =\int_{\mathbb{R}^{+}}\int_{\mathbb{R}^{+}}%
\int_{\mathbb{R}^{+}}\left(  1+f_{1}\right)  \left(  1+f_{2}\right)  \left(
1+f_{3}\right)  \left(  1+f_{4}\right)  \left[  Q_{1,2}-Q_{3,4}\right]
\times\nonumber\\
& \times\left[  \log\left(  Q_{1,2}\right)  -\log\left(  Q_{3,4}\right)
\right]  \Phi d\epsilon_{1}d\epsilon_{2}d\epsilon_{3}\ \label{G2E6}%
\end{align}
where:%
\begin{align*}
Q_{j,k}  & =\frac{f_{j}}{\left(  1+f_{j}\right)  }\frac{f_{k}}{\left(
1+f_{k}\right)  }\ \ ,\ \ j,k\in\left\{  1,2,3,4\right\}  \ \ \ \\
\epsilon_{4}  & =\epsilon_{1}+\epsilon_{2}-\epsilon_{3}\ \ ,\ \ \Phi
=\min\left\{  \sqrt{\left(  \epsilon_{1}\right)  _{+}},\sqrt{\left(
\epsilon_{2}\right)  _{+}},\sqrt{\left(  \epsilon_{3}\right)  _{+}}%
,\sqrt{\left(  \epsilon_{4}\right)  _{+}}\right\}
\end{align*}

Formulas for the dissipation of the entropy for suitably truncated kernels $W$ and solutions in $L^1$ have been proved in \cite{Lu2}). Since we are working here with mild solutions we have decided to write a new proof of the dissipation of energy formula.
\begin{proposition}
\label{Entropy} Suppose that $f_{0}\in L^{\infty}\left(  \mathbb{R}%
^{+};\left(  1+\epsilon\right)  ^{\gamma}\right)  $ with $\gamma>3,\ $ and
$f_{0}\not \equiv 0.$ Suppose that $f\in L_{loc}^{\infty}\left(  \left[
0,T_{\max}\right)  ;L^{\infty}\left(  \mathbb{R}^{+};\left(  1+\epsilon
\right)  ^{\gamma}\right)  \right)  $ is the mild solution of (\ref{F3E2}),
(\ref{F3E3}) with initial datum $f_{0}$ as in Theorem \ref{main}.
Then, there exists $C=C\left(  E,M\right)  $ such that:%
\begin{equation}
\left\vert S\left[  f\left(  \cdot,t\right)  \right]  \right\vert \leq
C\left(  E,M\right)  \ \ ,\ 0\leq t<T_{\max}\label{G4E7a}%
\end{equation}
For any $T_{1}$ and $T_{2}$ such that $2 T_{0}<T_{1}<T_{2}\leq T_{\max}$,
where $T_{0}$ as in proposition (\ref{lowMass}), we have:%
\[
0\leq D\left[  f\left(  \cdot,t\right)  \right]  \leq C\left(  T_{1}%
,T_{2}\right)  <\infty\ \ \ \ \ \ ,\ \ \ \ T_{1}\leq t\leq T_{2}%
\]
for some constant $C\left(  T_{1},T_{2}\right)  .$ Moreover, the following
identity holds:%
\begin{equation}
S\left[  f\right]  \left(  T_{2}\right)  -S\left[  f\right]  \left(
T_{1}\right)  =\int_{T_{1}}^{T_{2}}D\left[  f\left(  \cdot,t\right)  \right]
dt\label{G4E7}%
\end{equation}

\end{proposition}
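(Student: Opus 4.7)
The statement has three parts: (i) the uniform entropy bound, (ii) nonnegativity and local boundedness of $D[f]$ for $t>2T_0$, and (iii) the integral identity. I would address them in that order.

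\emph{Entropy bound.} The integrand $h(f)=(1+f)\log(1+f)-f\log f$ satisfies $h(0)=0$ and $h'(f)=\log\frac{1+f}{f}>0$, so $h(f)\geq 0$ and consequently $S[f]\geq 0$. For the upper bound I would invoke the variational characterization of the Bose–Einstein equilibrium from Proposition \ref{crit}. Since $h''(f)=-1/(f(1+f))<0$, $h$ is strictly concave, and for the corresponding $f_{BE}$ one has $h'(f_{BE})=\alpha+\beta\epsilon$. The concavity inequality
\begin{equation*}
h(f)-h(f_{BE})\leq h'(f_{BE})(f-f_{BE})=(\alpha+\beta\epsilon)(f-f_{BE}),
\end{equation*}
integrated against $\sqrt{\epsilon}\,d\epsilon$, gives $S[f]\leq S[f_{BE}]$ whenever $f$ and $f_{BE}$ share the same values of $M$ and $E$. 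Because $M$ and $E$ are conserved along the mild solution by (\ref{F3E6c}), this yields $S[f(\cdot,t)]\leq S[f_{BE}(M,E)]=:C(E,M)$ for every $t<T_{\max}$.

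\emph{Pointwise lower bound on $f$ for $t>2T_0$.} The dissipation $D[f]$ involves $\log Q_{i,j}$ which is singular where $f_i=0$, so I need a pointwise lower bound to guarantee finiteness. Proposition \ref{lowMass} applied at an intermediate time $T_0$ yields $\int_0^R g(T_0,\epsilon)\,d\epsilon\geq K R^{3/2}$ for $R\leq\rho_1$, so $f(T_0,\cdot)$ is positive on a set of positive measure in every neighborhood of the origin. Feeding this information into the mild-solution identity of Definition \ref{mild} and using the strict positivity of $\Psi$ together with the nonnegativity of the gain $f_3 f_4(1+f_1+f_2)W$, the bilinear collision integral spreads this positivity, after a further time of order $T_0$, into a pointwise lower bound $f(t,\epsilon)\geq c(E,M,R)>0$ on each compact set $\{\epsilon\leq R\}$, locally uniformly in $t\in[2T_0,T_{\max})$. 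Combined with the $L^\infty$ bound coming from $t<T_{\max}$, this controls $\log\frac{1+f}{f}$ and hence $\log Q_{i,j}$ on the relevant range.

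\emph{Dissipation identity.} Because $f\in W^{1,\infty}_{\mathrm{loc}}((0,T_{\max});L^{\infty})$ and satisfies (\ref{F3E2}) a.e., I would formally compute
\begin{equation*}
\frac{d}{dt}S[f]=\int \log\frac{1+f}{f}\,\partial_t f\,\sqrt{\epsilon}\,d\epsilon=\frac{8\pi^2}{\sqrt{2}}\int\!\!\int\!\!\int\log\frac{1+f_1}{f_1}\,q(f)\,W\sqrt{\epsilon_1}\,d\epsilon_1 d\epsilon_3 d\epsilon_4.
\end{equation*}
Writing $q(f)=(1+f_1)(1+f_2)(1+f_3)(1+f_4)[Q_{3,4}-Q_{1,2}]$, symmetrizing over the four permuted variables using the symmetry of the underlying kernel $\mathcal{M}$ in (\ref{T4E1b}) (from which $W$ descends via integration of the momentum $\delta$), and recognizing $\log\frac{1+f_i}{f_i}=-\log\frac{f_i}{1+f_i}$, one recovers exactly the nonnegative expression (\ref{G2E6}) for $D[f]$. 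To justify the manipulation I would regularize via $h_\delta(f)=(1+f)\log(1+f)-(f+\delta)\log(f+\delta)+\delta\log\delta$, for which $h_\delta'$ is bounded and differentiation under the integral is immediate; after integrating on $[T_1,T_2]$, the limit $\delta\to 0^+$ is handled by monotone convergence on the left and dominated convergence on the right, the pointwise lower bound of the previous step providing an integrable majorant. The bound $D[f(\cdot,t)]\leq C(T_1,T_2)$ then follows from the two-sided pointwise bounds on $f$ and the explicit form of $\Phi$.

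\emph{Expected main obstacle.} The delicate point is the passage from the averaged lower bound of Proposition \ref{lowMass} to a pointwise lower bound strong enough to make $\log((1+f)/f)$ locally integrable against $\sqrt{\epsilon}\,d\epsilon$ and to justify the limit $\delta\to 0$ uniformly on $[T_1,T_2]$. This is presumably the reason for the hypothesis $T_1>2T_0$: the first time $T_0$ is needed to apply Proposition \ref{lowMass}, while the additional $T_0$ provides the margin during which the collision integral converts the averaged lower bound into a genuinely pointwise one through the mild-solution representation.
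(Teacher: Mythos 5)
Your approach to the entropy bound via concavity and the Euler--Lagrange characterization $h'(f_{BE})=\beta\epsilon+\text{const}$ is a legitimate alternative to the paper's (which uses the direct elementary splitting into $\{f\geq 1\}$, $\{f<1,\ |\log f|<\sqrt\epsilon\}$, $\{f<1,\ |\log f|\geq\sqrt\epsilon\}$), but note that in the supercritical case $f_{BE}$ contains a Dirac mass at $\epsilon=0$, so the integrated concavity inequality must be interpreted with some care (the argument does go through because $\alpha=0$ when $m_0>0$, but you should say so). The regularization $h_\delta$ in place of the paper's continuous-approximation symmetrization is again a reasonable variant.

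The serious gap is in the lower bound on $f$. You obtain a positive lower bound \emph{only on compact sets} $\{\epsilon\leq R\}$, with a constant depending on $R$ in an unquantified way, and you assert this "controls $\log Q_{i,j}$ on the relevant range." It does not: the dissipation integral is over all of $\mathbb{R}^+$, and for large $\epsilon_j$ one has $f_j\lesssim(1+\epsilon_j)^{-\gamma}$, so $|\log Q_j|$ grows without bound. Finiteness of $D[f]$ requires that $|\log f(t,\epsilon)|$ grow strictly slower than the rate the tail decay $(1+\epsilon)^{-\gamma}$ can absorb; concretely one needs a lower bound of the form $f(t,\epsilon)\gtrsim\exp(-\epsilon^{1+\omega})$ with $\omega<\gamma-3$, which is precisely what the paper extracts from the Carleman-type iteration (the chain of intervals $R_{n+1}=(2-\eta)R_n$, the recursion $m_{n+1}\gtrsim m_n^2$, and the resulting double-exponential estimate $m_n\geq\exp(-\tilde\beta\,2^{n})$ translated into $\exp(-\tilde\beta R_n^{1+\omega})$, with $\eta$ chosen small so that $\omega<\gamma-3$). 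Without a quantified large-$\epsilon$ lower bound, the terms $\int\!\!\int\!\!\int f_1 f_2\,|\log Q_j|\,\Phi\,d\epsilon$ cannot be shown finite, the integrable majorant you invoke for the $\delta\to0$ limit does not exist, and the bound $D[f]\leq C(T_1,T_2)$ is unjustified. Your closing remark correctly identifies the near-$\epsilon=0$ passage from averaged to pointwise positivity as delicate (the paper's estimates (6.16)--(6.19) handle it, and you reproduce the idea), but the dominant technical burden in the paper is the large-$\epsilon$ lower bound, which your proposal omits entirely.
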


\begin{proof}
As a first step we prove (\ref{G4E7a}). To this end we use the following
inequalities:
\begin{align*}
\left[  \left(  1+f\right)  \log\left(  1+f\right)  -f\log\left(  f\right)
\right]   & \leq Cf\ \ \text{if\ \ }f\geq1\\
\left[  \left(  1+f\right)  \log\left(  1+f\right)  -f\log\left(  f\right)
\right]   & \leq Cf\left\vert \log\left(  f\right)  \right\vert
\ \ \text{if\ \ }0\leq f\leq1
\end{align*}

We then can estimate $S\left[  f\right]  $ for any $f\in L^{\infty}\left(
\mathbb{R}^{+};\left(  1+\epsilon\right)  ^{\gamma}\right)  ,$ $\gamma>3$ as
follows. We have the inequalities:%
\begin{align*}
S\left[  f\right]   & =\int_{\left\{  f\geq1\right\}  }\left[  \cdot\cdot
\cdot\right]  \sqrt{\epsilon}d\epsilon+\int_{\left\{  f<1,\ \left\vert
\log\left(  f\right)  \right\vert <\sqrt{\epsilon}\right\}  }\left[
\cdot\cdot\cdot\right]  \sqrt{\epsilon}d\epsilon+\int_{\left\{
f<1,\ \left\vert \log\left(  f\right)  \right\vert \geq\sqrt{\epsilon
}\right\}  }\left[  \cdot\cdot\cdot\right]  \sqrt{\epsilon}d\epsilon\\
& \leq C\int f\sqrt{\epsilon}d\epsilon+C\int f\epsilon^{\frac{3}{2}}%
d\epsilon+C\int_{\left\{  f<\exp\left(  -\epsilon\right)  \right\}  }\sqrt
{f}\left(  \sqrt{f}\left\vert \log\left(  f\right)  \right\vert \right)
\sqrt{\epsilon}d\epsilon\\
& \leq C\left(  M+E\right)  +C\int\exp\left(  -\frac{\epsilon}{2}\right)
d\epsilon=C\left(  M,E\right)
\end{align*}
whence (\ref{G4E7a}) follows.

We now need a lower estimate for $f\left(  \epsilon,t\right)  ,\ t>0$ in order
to ensure the convergence of the integrals in (\ref{G2E6}). To this end we use
methods inspired in the ideas used by Carleman to prove the $H-$Theorem for
the classical Boltzmann equation (cf. \cite{Ca1}, Section 5). We claim that
for any $\omega>0$ and $0<T_{1}\leq T_{2},$ there exists $C_{T_{1}%
,T_{2},\omega}$ such that:%
\begin{equation}
f\left(  \epsilon,t\right)  \geq C_{T_{1},T_{2},\omega}\exp\left(
-\epsilon^{1+\omega}\right)  \ \ ,\ T_{1}\leq t\leq T_{2}\label{E1E3}%
\end{equation}
where the constant $C_{T_{1},T_{2},\omega}$ depends also on $f_{0}.$

In order to obtain (\ref{E1E3}), we first remark that the mild solutions of
(\ref{F3E2}), (\ref{F3E3}) satisfy (cf. \cite{EV1}, Lemma 3.7):%

\begin{align}
\partial_{t}f_{1}+\pi M\sqrt{\epsilon_{1}}f_{1}  & \geq\frac{8\pi^{2}}%
{\sqrt{2}}\int_{0}^{\infty}\int_{0}^{\infty}f_{3}f_{4}\left(  1+f_{1}%
+f_{2}\right)  Wd\epsilon_{3}d\epsilon_{4}\nonumber\\
& \geq\frac{8\pi^{2}}{\sqrt{2}}\int_{0}^{\infty}\int_{0}^{\infty}f_{3}%
f_{4}Wd\epsilon_{3}d\epsilon_{4}\ \label{H1E1}%
\end{align}
for $a.e.\ t\in\left[  T_{1},T_{2}\right]  $ and where $C_{0}>0.$

Since $f_{0}\not \equiv 0$ there exists an interval with the form $\left(
R_{1},\theta R_{1}\right)  $ with $R_{1}>0$ and $\theta>1$ perhaps close to
one, such that $\int_{R_{1}}^{\theta R_{1}}f_{0}d\epsilon>0.$ Moreover, due to
the continuity of the mild solutions in time in the weak topology in space it
follows that, assuming that $\bar{t}$ is small we have:%
\begin{equation}
\int_{R_{1}}^{\theta R_{1}}f\left(  t,\epsilon\right)  d\epsilon\geq
m_{1}>0\ \ ,\ \ 0\leq t\leq\bar{t}\label{H1E3}%
\end{equation}

Let us choose $\eta>0$ satisfying $\left(  2-\eta\right)  \theta
<2\ \ ,\ \left(  2-\eta\right)  >\theta>1.$ We can then derive a lower
estimate for $W $ if $\epsilon_{1}\in\left(  \left(  2-\eta\right)
R_{1},\left(  2-\eta\right)  \theta R_{1}\right)  .$ Combining (\ref{H1E1})
and (\ref{H1E3}) we obtain the estimate:
\begin{equation}
\partial_{t}f_{1}+\pi M\sqrt{\epsilon_{1}}f_{1}\geq K_{\eta,\theta}\left(
\int_{R_{1}}^{\theta R_{1}}fd\epsilon\right)  ^{2}\ \ ,\ \ \epsilon_{1}%
\in\left(  \theta R_{1},\left(  2-\eta\right)  \theta R_{1}\right)
\label{H1E3a}%
\end{equation}

Iterating this argument we obtain the chain of inequalities:%
\begin{equation}
\partial_{t}f_{1}+\pi M\sqrt{\epsilon_{1}}f_{1}\geq K_{\eta,\theta}\left(
\int_{R_{n}}^{\theta R_{n}}fd\epsilon\right)  ^{2}\ \ ,\ \ \epsilon_{1}%
\in\left(  \theta R_{n},\theta R_{n+1}\right) \label{H1E3b}%
\end{equation}
with:%
\begin{equation}
R_{n+1}=\left(  2-\eta\right)  R_{n}\ .\label{H1E3c}%
\end{equation}

Let us consider the sequence of positive times: $t_{n}=\bar t-\tau_{n}$ with
$\tau_{n}=\frac{b}{\sqrt{R_{n}}}$ with $0<b<1,$ independent of $n$. Using the
fact that $f\geq0$ we obtain, using Duhamel's formula:%
\begin{align}
& f\left(  t,\epsilon_{1}\right)  \geq K_{\eta,\theta}\int_{t_{n}}^{t}%
\exp\left(  -\pi M\sqrt{\epsilon_{1}}\left(  s-t_{n}\right)  \right)  \left(
\int_{R_{n}}^{\theta R_{n}}f\left(  s,\epsilon\right)  d\epsilon\right)
^{2}ds\nonumber\\
& \hbox{for}\,\,\,a.e.\,\,\epsilon_{1}\in\left(  \theta R_{n},\theta
R_{n+1}\right) \label{H1E3d}%
\end{align}
If we assume that $b$ is small, depending only on $\bar{t},$ we would obtain
that $t_{n}>0,$ for all $n\geq1.$ Using (\ref{H1E3d}) we then obtain:%
\begin{equation}
f\left(  t,\epsilon_{1}\right)  \geq K_{\eta,\theta}a\int_{t_{n}}^{t}\left(
\int_{R_{n}}^{\theta R_{n}}f\left(  s,\epsilon\right)  d\epsilon\right)
^{2}ds\ \ ,\ \ \epsilon_{1}\in\left(  \theta R_{n},\theta R_{n+1}\right)
\ \label{H1E3e}%
\end{equation}
for some $a>0$ independent on $n$. Applying (\ref{H1E3}) and (\ref{H1E3e})
with $n=1$ in the interval $\left(  R_{2},\theta R_{2}\right)  \subset\left(
\theta R_{1},\theta R_{2}\right)  ,$ we obtain:%
\begin{equation}
f\left(  t,\epsilon_{1}\right)  \geq K_{\eta,\theta}a\left(
t-t_{2}\right)  \left(  m_{1}\right)  ^{2}\ \ ,\ \ \epsilon_{1}\in\left(
R_{2},\theta R_{2}\right)  \ \ ,\ \ a.e.\text{\ }t\in\left[  t_{2},\bar
{t}\right] \label{H1E3g}%
\end{equation}

Integrating this inequality in $\left(  R_{2},\theta R_{2}\right)  $ we
obtain:%
\begin{equation}
\int_{R_{2}}^{\theta R_{2}}f\left(  t,\epsilon\right)  d\epsilon\geq
m_{2}=K_{\eta,\theta}a\left(  \theta-1\right)  R_{1}\left(  t-t_{2}\right)
\left(  m_{1}\right)  ^{2}\ \ ,\ \ a.e.\text{\ }t\in\left[  t_{2},\bar
{t}\right] \label{H1E3f}%
\end{equation}

Due to the definition of the sequence $\left\{  t_{n}\right\}  $ we have:%
\[
\left(  t_{3}-t_{2}\right)  =\left(  \bar{t}-\tau_{3}-t_{2}\right)  =\left(
\tau_{2}-\tau_{3}\right)  =\frac{b}{\sqrt{R_{2}}}\left(  1-\frac{1}%
{\sqrt{2-\eta}}\right)  \ge \frac{b\sqrt{2-\eta}}{\sqrt{R_{3}}}\left(  1-\frac
{1}{\sqrt{2}}\right)
\]

Using (\ref{H1E3f}) with $t=t_{3}$ it then follows that:%
\[
m_{2}\geq\bar{K}_{\eta,\theta,a,b}\frac{R_{1}}{\sqrt{R_{2}}} m_{1} ^{2}\geq
C_{\eta,\theta,a,b,R_{1}} m_{1}^{2}%
\]

Iterating the argument it then follows that:%
\[
m_{n+1}\geq C_{\eta,\theta,a,b,R_{1}}\left(  2-\eta\right)  ^{\frac{n-1}{2}%
}m_{n} ^{2}\ \ ,\ \ n=1,2,3,...
\]

Moreover, since the term $\left(  2-\eta\right)  ^{\frac{n}{2}}$ increases to
infinity we have that, after a finite number of steps the factor in front of
$m_{n} ^{2}$ becomes larger than one. We obtain lower estimates for $m_{n_{0}%
}$ in terms of the first term. Then:%
\[
m_{n+1}\geq m_{n} ^{2}\ \ ,\ \ n\geq n_{0}%
\]

Iterating we then obtain:%
\[
m_{n}\geq\left(  m_{n_{0}}\right)  ^{2^{n-n_{0}}}\ \ ,\ \ n\geq n_{0}%
\]

Since $m_{n_{0}}$ is a fixed number, in general smaller than one we obtain:%
\[
m_{n}\geq\exp\left(  -\tilde{\beta}_{1}2^{n-n_{0}}\right)
\]
with $\tilde{\beta}_{1}>0$ independent on $n.$ Since $R_{n}$ is of order
$\left(  2-\eta\right)  ^{n}$ this implies an estimate of the form:%
\[
m_{n}\geq C\exp\left(  -\tilde{\beta}R_{n}^{1+\omega}\right)
\ ,\ \ n=1,2,3,...
\]
for some $\tilde{\beta}>0\ ,\ C>0\, ,\omega>0$ independent on $n$, where
moreover $\omega$ is such that $1+\omega> \frac{log 2} {log (2-\eta)}$. Notice
that, since $\eta>0$ may be taken as small as we wish by taking $\theta$
sufficiently close to $1$, the value of $\omega$ may be as small as we need.

Using now (\ref{H1E3e}) as well as the fact that the union of the intervals
$\left(  R_{n},\theta R_{n}\right)  ,\ n\geq1$ cover the whole interval
$\left(  R_{1},\infty\right)  $ except a set of measure zero, we obtain:%
\begin{equation}
f\left(  t,\epsilon\right)  \geq C_{T_{1},T_{2},\omega}\exp\left(
-\epsilon^{1+\omega}\right)  \ \ ,\ \ \epsilon\geq R_{1}\ \ ,\ \ 0<T_{1}\leq
t\leq T_{2}\ \label{H1E4a}%
\end{equation}
where the value of $\omega$ may have been changed from  its previous
value in order to eliminate the constant $\tilde\beta$ from the exponential,
but still remaining as small as we need. The constant $C_{T_{1},T_{2},\omega}$
depends on $f_{0}.$

In order to obtain a lower bound for small values of $\varepsilon\in(0,
R_{1})$ we use again (\ref{H1E1}). By (\ref{H1E4a}), for any $\varepsilon
_{1}>0$ we have:
\begin{align}
\partial_{t}f_{1}+\pi M\sqrt{\epsilon_{1}}f_{1}  & \geq\frac{8\pi^{2}}%
{\sqrt{2}}\int_{0}^{\infty}\int_{0}^{\infty}f_{3}f_{4}Wd\epsilon_{3}%
d\epsilon_{4}\nonumber\\
& \geq \frac{8\pi^{2}}{\sqrt{2}}\int_{2R_{1}}^{\infty}\int_{0}^{\varepsilon
_{1} }f_{3}f_{4}Wd\epsilon_{3}d\epsilon_{4}\nonumber\\
& \geq \frac{C} {\sqrt{\varepsilon_{1}}}\int_{0}^{\varepsilon_{1}}%
\sqrt{\varepsilon_{3}}f_{3} d\varepsilon_{3}. \ \label{H1E1bis}%
\end{align}
Using now Proposition (\ref{lowMass}) we have, for all $\varepsilon_{1}\in(0,
\rho(E, M))$:
\begin{align}
\label{H1E1C}\int_{0}^{\varepsilon_{1}}\sqrt{\varepsilon_{3}}f_{3}
d\varepsilon_{3}\ge K\varepsilon_{1}^{3/2},\,\,\,\,\hbox{for}\,\,t\ge T_{0}.
\end{align}
From (\ref{H1E1bis}) and (\ref{H1E1C}) we deduce
\begin{align*}
\partial_{t}f_{1}+\pi M\sqrt{\epsilon_{1}}f_{1}  & \geq C \varepsilon_{1}
,\,\,\,\,\hbox{for}\,\,t\ge T_{0},
\end{align*}
and therefore, after integration in time from $T_{*}$ to $t$ for any
$T_{*}>T_{0}$:
\begin{align*}
f(t, \varepsilon_{1})\ge C\sqrt{\varepsilon_{1}}\left( 1-e^{-\pi M
\sqrt{\varepsilon_{1}}(t-T_{*})} \right)  ,\,\,\,\,\hbox{for}\,\,t> T_{*}.
\end{align*}
If moreover $(t-T_{*})<1$:
\begin{align*}
f(t, \varepsilon_{1})\ge C\varepsilon_{1} t\ge C(T_{1}, T_{2})\varepsilon_{1}.
\end{align*}
This shows that for some positive constant $C_{T_{1}, T_{2},\,\omega}$:
\begin{align*}
f(t, \varepsilon_{1})\ge C_{T_{1}, T_{2}, \omega}\,\varepsilon_{1},\,\,\,\,
\varepsilon_{1}\in(0, \rho), \,\,\, \,\,\frac{3T_{0}}{2}<T_{1}\le t < T_{2}.
\end{align*}
where $\rho$ is as in Proposition (\ref{lowMass}). Arguing as in the proof of
formula (6.15) above, we deduce after a finite number of iterations that for
some positive constant $C_{T_{1}, T_{2},\,\omega}$:
\begin{align}
\label{H1E1D}f(t, \varepsilon_{1})\ge C_{T_{1}, T_{2}, \omega}\,\varepsilon
_{1},\,\,\,\, \varepsilon_{1}\in(0, R_{1}), \,\,\, \,\,2T_{0}<T_{1}\le t <
T_{2},
\end{align}
where the constant $C_{T_{1},T_{2},\omega}$ might change from line to line.

We deduce then from (\ref{H1E4a}) and (\ref{H1E1D}):
\begin{align}
\label{H1E1E}f(t, \varepsilon_{1})\ge C_{T_{1}, T_{2}, \omega}\,\varepsilon
_{1}\exp\left(  -\epsilon^{1+\omega}\right)  ,\,\,\,\, \varepsilon_{1}>0,
\,\,\, \,\,T_{0}(E, M)<T_{1}\le t <T_{2}.
\end{align}

Using (\ref{H1E1E}) we obtain:%
\begin{equation}
\left\vert \log\left(  f\right)  \right\vert \leq C_{T_{1},T_{2},\omega
}\left(  1+|\log\varepsilon|+\epsilon^{1+\omega}\right)  \ \ ,\ \ \epsilon
\geq0\ \ ,\ \ 0<T_{1}\leq t\leq T_{2}.\label{H1E4}%
\end{equation}
We now use (\ref{H1E4}) to show that the different integral terms appearing in
the formula of the dissipation of the entropy (cf. (\ref{G2E6})) are finite.
Due to the boundedness of $f$ we just need to show that the following
integrals are finite:%
\begin{align*}
& \int\int\int f_{1}f_{2}\left(  \left\vert \log\left(  \frac{f_{1}}{1+f_{1}%
}\right)  \right\vert +\left\vert \log\left(  \frac{f_{2}}{1+f_{2}}\right)
\right\vert \right)  \Phi d\epsilon_{1}d\epsilon_{2}d\epsilon_{3}\\
& \int\int\int f_{1}f_{2}\left(  \left\vert \log\left(  \frac{f_{3}}{1+f_{3}%
}\right)  \right\vert +\left\vert \log\left(  \frac{f_{4}}{1+f_{4}}\right)
\right\vert \right)  \Phi d\epsilon_{1}d\epsilon_{2}d\epsilon_{3}\\
& \int\int\int f_{3}f_{4}\left(  \left\vert \log\left(  \frac{f_{1}}{1+f_{1}%
}\right)  \right\vert +\left\vert \log\left(  \frac{f_{2}}{1+f_{2}}\right)
\right\vert \right)  \Phi d\epsilon_{1}d\epsilon_{2}d\epsilon_{3}\\
& \int\int\int f_{3}f_{4}\left(  \left\vert \log\left(  \frac{f_{3}}{1+f_{3}%
}\right)  \right\vert +\left\vert \log\left(  \frac{f_{4}}{1+f_{4}}\right)
\right\vert \right)  \Phi d\epsilon_{1}d\epsilon_{2}d\epsilon_{3}%
\end{align*}
where $\epsilon_{4}=\epsilon_{1}+\epsilon_{2}-\epsilon_{3}.$ Replacing the
variable $\epsilon_{4}$ by the variable $\epsilon_{1}$ in the last two
integrals, and relabelling the number of the resulting integration variables
(namely $\epsilon_{2},\epsilon_{3},\epsilon_{4}$ to $\epsilon_{1},\epsilon
_{2},\epsilon_{3}$) we reduce the estimate of the last two integrals to the
first two ones. Using now the symmetry of the variables $\epsilon_{1}%
,\epsilon_{2}$ we are left only with the three different terms:%
\begin{align*}
I_{1}  & =\int\int\int f_{1}f_{2}\left\vert \log\left(  \frac{f_{1}}{1+f_{1}%
}\right)  \right\vert \Phi d\epsilon_{1}d\epsilon_{2}d\epsilon_{3}\\
I_{2}  & =\int\int\int f_{1}f_{2}\left\vert \log\left(  \frac{f_{3}}{1+f_{3}%
}\right)  \right\vert \Phi d\epsilon_{1}d\epsilon_{2}d\epsilon_{3},\\
\ I_{3}  & =\int\int\int f_{1}f_{2}\left\vert \log\left(  \frac{f_{4}}%
{1+f_{4}}\right)  \right\vert \Phi d\epsilon_{1}d\epsilon_{2}d\epsilon_{3}%
\end{align*}
Using (\ref{H1E4}) and the boundedness of $f$ we obtain:%
\begin{align*}
I_{1}  & \leq C\int\int\int f_{1}f_{2}\left(  1+\epsilon_{1}^{1+\omega
}\right)  \Phi d\epsilon_{1}d\epsilon_{2}d\epsilon_{3}\\
& =C\int\int\int_{\left\{  \epsilon_{1}\geq\epsilon_{2}\right\}  }\left[
\cdot\cdot\cdot\right]  \Phi d\epsilon_{1}d\epsilon_{2}d\epsilon_{3}+C\int
\int\int_{\left\{  \epsilon_{1}<\epsilon_{2}\right\}  }\left[  \cdot\cdot
\cdot\right]  \Phi d\epsilon_{1}d\epsilon_{2}d\epsilon_{3}%
\end{align*}

We now use that $\Phi\leq\sqrt{\epsilon_{2}}$ in the first integral and
$\Phi\leq\sqrt{\epsilon_{1}}$ in the second. Then, since $\epsilon_{3}%
\leq\epsilon_{1}+\epsilon_{2}$%
\begin{align*}
I_{1}  & \leq C\int\int_{\left\{  \epsilon_{1}\geq\epsilon_{2}\right\}  }%
f_{1}f_{2}\left(  1+\epsilon_{1}^{1+\omega}+|\log\varepsilon_{1}|\right)  \epsilon_{1}\sqrt
{\epsilon_{2}}d\epsilon_{1}d\epsilon_{2}+\\
& +C\int\int\int_{\left\{  \epsilon_{1}<\epsilon_{2}\right\}  }f_{1}%
f_{2}\left(  1+\epsilon_{1}^{1+\omega}+|\log\varepsilon_{1}|\right)  \epsilon_{2}\sqrt{\epsilon_{1}%
}d\epsilon_{1}d\epsilon_{2}%
\end{align*}

Since, at is has been indicated before, the value of $\omega$ in (\ref{H1E4})
may be chosen as small as we need, we will assume in the following that
$\omega<\left(  \gamma-3\right)  .$ Then, using that $f\leq\frac{C}{\left(
1+\epsilon\right)  ^{\gamma}}:$%
\begin{align*}
I_{1}  & \leq C\int\int_{\left\{  \epsilon_{1}\geq\epsilon_{2}\right\}
}\left(  1+\epsilon_{1}\right)  ^{-\gamma}\left(  1+\epsilon_{2}\right)
^{-\gamma}\left(  1+\epsilon_{1}^{1+\omega}+|\log\varepsilon_{1}|\right)  \epsilon_{1}\sqrt
{\epsilon_{2}}d\epsilon_{1}d\epsilon_{2}+\\
& +C\int\int\int_{\left\{  \epsilon_{1}<\epsilon_{2}\right\}  }\left(
1+\epsilon_{1}\right)  ^{-\gamma}\left(  1+\epsilon_{2}\right)  ^{-\gamma
}\left(  1+\epsilon_{1}^{1+\omega}+|\log\varepsilon_{1}|\right)  \epsilon_{2}\sqrt{\epsilon_{1}%
}d\epsilon_{1}d\epsilon_{2}\\
& \leq C
\end{align*}

In order to estimate $I_{2}$ and $I_{3}$ we use a symmetrization argument that
yields:%
\begin{align*}
I_{2}  & \leq C\int\int\int_{\left\{  \epsilon_{1}\geq\epsilon_{2}\right\}
}\left(  1+\epsilon_{1}\right)  ^{-\gamma}\left(  1+\epsilon_{2}\right)
^{-\gamma}\left(  1+\epsilon_{3}^{1+\omega}+|\log\varepsilon_{3}|\right)  \Phi
d\epsilon_{1}d\epsilon_{2}d\epsilon_{3}\\
& \leq C\int\int\int_{\left\{  \epsilon_{1}\geq\epsilon_{2}\right\}  }\left(
1+\epsilon_{1}\right)  ^{-\gamma}\left(  1+\epsilon_{2}\right)  ^{-\gamma
}\left(  1+\epsilon_{1}^{1+\omega}+|\log\varepsilon_{1}|\right)  \epsilon
_{1}\sqrt{\epsilon_{2}}d\epsilon_{1}d\epsilon_{2}\\
& \leq C
\end{align*}%
\begin{align*}
\ I_{3}  & \leq C\int\int\int_{\left\{  \epsilon_{1}\geq\epsilon_{2}\right\}
}\left(  1+\epsilon_{1}\right)  ^{-\gamma}\left(  1+\epsilon_{2}\right)
^{-\gamma}\left(  1+\epsilon_{4}^{1+\omega}+|\log\varepsilon_{4}|\right)  \Phi
d\epsilon_{1}d\epsilon_{2}d\epsilon_{3}\\
& \leq C\int\int_{\left\{  \epsilon_{1}\geq\epsilon_{2}\right\}  }\left(
1+\epsilon_{1}\right)  ^{-\gamma}\left(  1+\epsilon_{2}\right)  ^{-\gamma
}\left(  1+\epsilon_{1}^{1+\omega}+|\log\varepsilon_{1}|\right)  \sqrt
{\epsilon_{2}}d\epsilon_{1}d\epsilon_{2}\\
& \leq C
\end{align*}
where in the estimate of $I_{3}$ we have used that $\epsilon_{4}\leq
2\epsilon_{1}.$

In order to conclude the proof of (\ref{G4E7}) we need to use a symmetrization
argument. We will use that (cf. \cite{EV1}, Theorem 3.4):\
\[
\partial_{t}f_{1}=\frac{8\pi^{2}}{\sqrt{2}}\int\int_{D\left(  \epsilon
_{1}\right)  }W\left[  \left(  1+f_{1}\right)  \left(  1+f_{2}\right)
f_{3}f_{4}-\left(  1+f_{3}\right)  \left(  1+f_{4}\right)  f_{1}f_{2}\right]
d\epsilon_{3}d\epsilon_{4},\ a.e.\ t\in\left[  T_{1},T_{2}\right]
\]

Differentiating (\ref{G2E5}) and using the conservation of mass, we obtain the
following:%
\begin{align*}
\partial_{t}\left(  S\left[  f\right]  \right)   & =\int\int\int\log\left(
\frac{1+f_{1}}{f_{1}}\right)  \Phi\times\\
& \hskip0.6cm\times\left[  \left(  1+f_{1}\right)  \left(  1+f_{2}\right)
f_{3}f_{4}-\left(  1+f_{3}\right)  \left(  1+f_{4}\right)  f_{1}f_{2}\right]
d\epsilon_{1}d\epsilon_{3}d\epsilon_{4}\\
& =\int\int\int\log\left(  \frac{1+f_{1}}{f_{1}}\right)  \left(
1+f_{1}\right)  \left(  1+f_{2}\right)  \left(  1+f_{3}\right)  \left(
1+f_{4}\right)  \times\\
& \hskip5.4cm\times\left[  Q_{3,4}-Q_{1,2}\right]  \Phi d\epsilon_{1}%
d\epsilon_{3}d\epsilon_{4}%
\end{align*}

We now claim that for any function $f\in L^{\infty}\left(  \mathbb{R}%
_{+}:\left(  1+\epsilon\right)  ^{\gamma}\right)  $ satisfying (\ref{H1E4a})
the following identity holds:%
\begin{equation}
J_{1}=J_{2}\ \label{H1E4c}%
\end{equation}
where:%
\begin{align*}
J_{1}  & =\int\int\int\log\left(  \frac{1+f_{1}}{f_{1}}\right)  \left(
1+f_{1}\right)  \left(  1+f_{2}\right)  \left(  1+f_{3}\right)  \left(
1+f_{4}\right)  \times\\
& \hskip5.4cm\times\left[  Q_{3,4}-Q_{1,2}\right]  \Phi d\epsilon_{1}%
d\epsilon_{3}d\epsilon_{4}%
\end{align*}%
\begin{align*}
J_{2}  & =\frac{1}{4}\int\int\int\int\left[  \log\left(  Q_{3,4}\right)
-\log\left(  Q_{1,2}\right)  \right]  \left[  Q_{3,4}-Q_{1,2}\right]  \left(
1+f_{1}\right)  \left(  1+f_{2}\right)  \times\\
& \hskip6.5cm\times\left(  1+f_{3}\right)  \left(  1+f_{4}\right)  \Phi
d\epsilon_{1}d\epsilon_{2}d\epsilon_{3}%
\end{align*}

In order to prove (\ref{H1E4c}) suppose first that $f\in L^{\infty}\left(
\mathbb{R}_{+}:\left(  1+\epsilon\right)  ^{\gamma}\right)  \cap C\left(
\mathbb{R}_{+}\right)  $ satisfies (\ref{H1E4a}). Then:%
\begin{align*}
J_{1}  & =\frac{1}{4}\int\int\int\int\left[  \log\left(  Q_{3,4}\right)
-\log\left(  Q_{1,2}\right)  \right]  \left[  Q_{3,4}-Q_{1,2}\right]  \left(
1+f_{1}\right)  \left(  1+f_{2}\right)  \times\\
& \hskip3cm\times\left(  1+f_{3}\right)  \left(  1+f_{4}\right)  \Phi
\delta\left(  \epsilon_{1}+\epsilon_{2}-\epsilon_{3}-\epsilon_{4}\right)
d\epsilon_{1}d\epsilon_{2}d\epsilon_{3}d\epsilon_{4}\\
& =J_{2}%
\end{align*}
whence (\ref{H1E4c}) holds for continuous functions. For arbitrary functions
$f\in L^{\infty}\left(  \mathbb{R}_{+}:\left(  1+\epsilon\right)  ^{\gamma
}\right)  $ satisfying (\ref{H1E4a}) we can obtain (\ref{H1E4c}) approximating
$f$ by means of a sequence of continuous function $f_{n}$\ converging to $f$
at almost every $t\in\left[  T_{1},T_{2}\right]  .$
\end{proof}

\section{Reformulation of the criticality condition: A technical Lemma.}

\setcounter{equation}{0} \setcounter{theorem}{0}

We now prove an auxiliary result that reformulates the conditions in
Proposition \ref{crit} in a form that only depends on the values of the
equilibrium distributions for values of $\epsilon\geq R>0,$ with $R$ small. We
define a class of auxiliary functions:%
\begin{equation}
f_{s}\left(  \epsilon;\alpha,\beta\right)  =\frac{1}{\exp\left(  \beta\left(
\epsilon+\alpha\right)  \right)  -1}\ \ ,\ \ \epsilon>-\alpha\ \ ,\ \ \beta
>0,\ \ \alpha\in\mathbb{R}\label{frad}%
\end{equation}

The following result holds:

\begin{proposition}
\label{MassCrit}Given $E_{\ast}>0,$ and $\delta>0$ there exist $R_{0}\left(
E_{\ast},\delta\right)  >0,\ L_{0}\left(  E_{\ast},\delta\right)  >0$ such
that, if $f_{s}$ is one of the functions in (\ref{frad}), $0\leq R\leq
R_{0}\left( E_{\ast}, \delta\right)  ,\ L\geq L_{0}\left( E_{\ast},
\delta\right)  $, $\alpha\geq-\frac{R}{2}$ and
\[
E=4\pi\int_{R}^{L}f_{s}\left(  \epsilon;\alpha,\beta\right)  \sqrt
{2\epsilon^{3}}d\epsilon\ \ ,\ \ E\leq E_{\ast}%
\]
then:%
\begin{equation}
4\pi\int_{R}^{L}f_{s}\left(  \epsilon;\alpha,\beta\right)  \sqrt{2\epsilon
}d\epsilon\leq\frac{\zeta\left(  \frac{3}{2}\right)  }{\left(  \zeta\left(
\frac{5}{2}\right)  \right)  ^{\frac{3}{5}}}\left(  \frac{4\pi}{3}\right)
^{\frac{3}{5}}E^{\frac{3}{5}}+\delta\label{E2E6}%
\end{equation}

\end{proposition}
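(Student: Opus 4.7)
The plan is to argue by contradiction via a compactness argument on the parameter space. Suppose the conclusion fails for some fixed $E_{\ast},\delta>0$: then there exist sequences $R_n\downarrow 0$, $L_n\uparrow\infty$, $\alpha_n\ge -R_n/2$, $\beta_n>0$ such that, writing
\[
M_n=4\pi\int_{R_n}^{L_n}\!f_s(\epsilon;\alpha_n,\beta_n)\sqrt{2\epsilon}\,d\epsilon,\qquad E_n=4\pi\int_{R_n}^{L_n}\!f_s(\epsilon;\alpha_n,\beta_n)\sqrt{2\epsilon^3}\,d\epsilon,
\]
we have $E_n\le E_{\ast}$ but $M_n>CE_n^{3/5}+\delta$, with $C:=\zeta(3/2)\zeta(5/2)^{-3/5}(4\pi/3)^{3/5}$ the critical constant appearing in (\ref{Crit}). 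Since $\alpha_n\ge -R_n/2\to 0$, every limit of $\alpha_n$ is nonnegative; passing to a subsequence we may assume $(\alpha_n,\beta_n)\to(\alpha_\infty,\beta_\infty)\in[0,\infty]\times[0,\infty]$. The crucial uniform estimate used throughout is $f_s(\epsilon;\alpha_n,\beta_n)\le(e^{\beta_n\epsilon/2}-1)^{-1}$ for $\epsilon\ge R_n$, which follows from $\alpha_n\ge -R_n/2\ge -\epsilon/2$.

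The principal case is $\beta_\infty\in(0,\infty)$ and $\alpha_\infty\in[0,\infty)$. The dominating bound above, integrable against $\sqrt\epsilon$ and $\epsilon^{3/2}$ uniformly in $n$, justifies dominated convergence so that $M_n\to\widetilde M$ and $E_n\to\widetilde E$, where $\widetilde M,\widetilde E$ are the full integrals of $f_s(\,\cdot\,;\alpha_\infty,\beta_\infty)$ over $(0,\infty)$. Proposition \ref{crit} applied to the Bose-Einstein distribution $F_{BE}(\,\cdot\,;\alpha_\infty,\beta_\infty,0)$ yields $\widetilde M\le C\widetilde E^{3/5}$, contradicting $M_n-CE_n^{3/5}>\delta$ in the limit. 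If $\beta_\infty=+\infty$, the same bound forces $f_s\to 0$ with an integrable majorant, so $M_n,E_n\to 0$; similarly if $\alpha_\infty=+\infty$ and $\beta_\infty\in(0,\infty)$, then $\beta_n\alpha_n\to\infty$ and $f_s\to 0$ uniformly on $[R_n,L_n]$. In both sub-cases $M_n-CE_n^{3/5}\to 0$, contradicting the strict lower bound $\delta$.

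The remaining case $\beta_\infty=0$ is handled via the rescaling $u=\beta_n\epsilon$, $\nu_n=\beta_n\alpha_n\ge-\beta_n R_n/2\to 0$, $r_n=\beta_n R_n\to 0$, $\ell_n=\beta_n L_n$, which gives $E_n=(4\pi\sqrt 2)\beta_n^{-5/2}I_2$ and $M_n=(4\pi\sqrt 2)\beta_n^{-3/2}I_1$, with $I_k=\int_{r_n}^{\ell_n}u^{(2k-1)/2}/(e^{u+\nu_n}-1)\,du$. Passing to a further subsequence, $\nu_n\to\nu_\infty\in[0,\infty]$. If $\nu_\infty<\infty$, the integrand is uniformly bounded below by a positive constant on any fixed compact subinterval of $(0,\infty)$, and a direct estimate (treating the sub-cases $\ell_n\to\infty$, $\ell_n\to\ell_\infty\in(0,\infty)$, and $\ell_n\to 0$ separately, using $(e^x-1)^{-1}\ge1/(2x)$ on $(0,1)$ in the last one) forces $E_n\to\infty$, contradicting $E_n\le E_{\ast}$. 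If $\nu_\infty=\infty$, the exponential tail estimate $f_s\le 2e^{-(u+\nu_n)}$ yields $I_1\sim e^{-\nu_n}\Gamma(3/2)$ and $I_2\sim e^{-\nu_n}\Gamma(5/2)$, so $I_1-C'I_2^{3/5}\sim -C'\Gamma(5/2)^{3/5}e^{-3\nu_n/5}<0$ eventually (with $C':=C(4\pi\sqrt 2)^{-2/5}$), whence $M_n-CE_n^{3/5}\le 0$ for large $n$, again a contradiction.

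The principal obstacle is the last case, where the rates of $R_n,L_n,\alpha_n,\beta_n$ must be tracked simultaneously. The hypothesis $\alpha_n\ge -R_n/2$ is exactly what both keeps $f_s$ nonsingular on $[R_n,L_n]$ and yields the uniform lower bound $\epsilon+\alpha_n\ge\epsilon/2$ underlying the dominating function; the hypothesis $E_n\le E_{\ast}$ is precisely what excludes the degenerate regime $\beta_n\to 0$ with $\beta_n\alpha_n$ bounded.
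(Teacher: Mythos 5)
Your proposal is correct in substance but follows a genuinely different route from the paper's proof. The paper first proves (by a symmetrization argument on the implicit function $\alpha\mapsto M_{E,R,L}(\alpha)$, at fixed $E$) that $dM_{E,R,L}/d\alpha<0$, which reduces the whole problem to the single boundary value $\alpha=-R/2$; it then isolates a small-$E$ regime (disposed of directly via Lebesgue), shows $\beta_{E,R,L}(-R/2)$ lies in a compact interval $[\beta_*,\beta^*]$ governed by $E_*$ and the small-$E$ cutoff, and proves \emph{uniform} convergence on that interval of the truncated integrals $\Phi_R,\Psi_R$ to the Bose integrals $\Phi_0,\Psi_0$, finally closing with Proposition~\ref{crit}. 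You skip the monotonicity step entirely and instead run a contradiction/compactness argument over the full parameter family $(\alpha_n,\beta_n,R_n,L_n)$, passing to a convergent subsequence in $[0,\infty]^2$ and treating the degenerate limits by case analysis and rescaling. Both arguments hinge on Proposition~\ref{crit} in the end. Your version is softer (no explicit rate is extracted, and it does not reveal that the worst case is $\alpha=-R/2$), while the paper's version is more quantitative; the trade-off is that you must handle all degenerate combinations of $(\alpha_\infty,\beta_\infty)$ separately.

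One point where the write-up is too terse to be fully rigorous: in the $\beta_\infty=0$, $\nu_\infty=\infty$ sub-case you assert $I_1\sim e^{-\nu_n}\Gamma(3/2)$ and $I_2\sim e^{-\nu_n}\Gamma(5/2)$, but these asymptotics require $r_n\to0$ \emph{and} $\ell_n=\beta_n L_n\to\infty$, and the latter is not automatic (since $\beta_n\to0$ while $L_n\to\infty$, $\ell_n$ may tend to a finite limit or to $0$). The conclusion survives: when $\ell_n\not\to\infty$ one instead gets $I_k\approx e^{-\nu_n}\ell_n^{(2k+1)/2}$, hence $M_n\approx c\,e^{-\nu_n}L_n^{3/2}=O(E_n/L_n)\to0$ while $E_n\le E_*$, so $M_n-CE_n^{3/5}\le 0$ eventually. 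But the sub-cases of $\ell_n$ need to be enumerated here just as you do in the $\nu_\infty<\infty$ branch. Likewise, within the $\nu_\infty<\infty$, $\ell_n\to0$ branch, the estimate $(e^x-1)^{-1}\ge 1/(2x)$ on $(0,1)$ is the right tool only when $\nu_\infty=0$; when $\nu_\infty>0$ one should instead use that $(e^{u+\nu_n}-1)^{-1}\to(e^{\nu_\infty}-1)^{-1}>0$ uniformly for $u\le\ell_n\to0$, giving $E_n\sim cL_n^{5/2}\to\infty$. These are expository gaps, not structural ones — the contradiction goes through in every branch — but they should be filled in for the argument to be complete.
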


\begin{proof}
We will assume in all the following that $R\leq1,\ L\geq2.$ Suppose that
$E\leq E_{\ast}$. We define a family of functions function $\beta
_{E,R,L}:\left[  -\frac{R}{2},\infty\right)  \rightarrow\mathbb{R}^{+}$ by
means of the relation:
\[
E=4\pi\int_{R}^{L}f_{s}\left(  \epsilon;\alpha,\beta_{E,R,L}\left(
\alpha\right)  \right)  \sqrt{2\epsilon^{3}}d\epsilon
\]

Since the functions $f_{s}\left(  \epsilon;\alpha,\beta\right)  $ are strictly
decreasing with respect to $\beta$ and, for any given $\alpha\geq-\frac{R}{2}$
we have:%
\begin{align*}
4\pi\lim_{\beta\rightarrow0}\int_{R}^{L}f_{s}\left(  \epsilon;\alpha
,\beta\right)  \sqrt{2\epsilon^{3}}d\epsilon & =\infty\\
4\pi\lim_{\beta\rightarrow\infty}\int_{R}^{L}f_{s}\left(  \epsilon
;\alpha,\beta\right)  \sqrt{2\epsilon^{3}}d\epsilon & =0
\end{align*}
it follows that the function $\beta_{E,R,L}$ is well defined. We now define
the functions $M_{E,R;L}:\left[  -\frac{R}{2},\infty\right)  \rightarrow
\mathbb{R}^{+}$ as:%
\[
M_{E,R,L}\left(  \alpha\right)  =4\pi\int_{R}^{L}f_{s}\left(  \epsilon
;\alpha,\beta_{E,R,L}\left(  \alpha\right)  \right)  \sqrt{2\epsilon}d\epsilon
\]

We now claim that
\begin{equation}
\frac{dM_{E,R,L}\left(  \alpha\right)  }{d\alpha}<0\label{E2E2}%
\end{equation}
for $\alpha\in\left[  -\frac{R}{2},\infty\right)  .$ Indeed, differentiating
$M_{E,R,L}$ we obtain $\frac{dM_{E,R,L}\left(  \alpha\right)  }{d\alpha}%
=\frac{\Delta_{1}\left(  \alpha\right)  }{\Delta_{2}\left(  \alpha\right)  },$
where:%
\begin{align*}
\Delta_{1}\left(  \alpha\right)   & =8\pi\int_{R}^{L}dx\int_{R}^{L}%
dy\frac{e^{\beta\left(  x+\alpha\right)  }e^{\beta\left(  y+\alpha\right)  }%
}{\left(  e^{\beta\left(  x+\alpha\right)  }-1\right)  ^{2}\left(
e^{\beta\left(  y+\alpha\right)  }-1\right)  ^{2}}\times\\
& \times\left[  \sqrt{xy^{3}}\left(  x+\alpha\right)  -\sqrt{yx^{3}}\left(
x+\alpha\right)  \right]
\end{align*}%
\[
\Delta_{2}\left(  \alpha\right)  =\int_{R}^{L}dx\frac{e^{\beta\left(
x+\alpha\right)  }\sqrt{2x^{3}}\left(  x+\alpha\right)  }{\left(
e^{\beta\left(  x+\alpha\right)  }-1\right)  ^{2}}>0
\]

Symmetrizing the variables $x,y$ in $\Delta_{1}\left(  \alpha\right)  $ we
obtain:%
\[
\Delta_{1}\left(  \alpha\right)  =-8\pi\int_{R}^{L}dx\int_{R}^{L}%
dy\frac{e^{\beta\left(  x+\alpha\right)  }e^{\beta\left(  y+\alpha\right)
}\sqrt{xy}\left(  x-y\right)  ^{2}}{\left(  e^{\beta\left(  x+\alpha\right)
}-1\right)  ^{2}\left(  e^{\beta\left(  y+\alpha\right)  }-1\right)  ^{2}}<0
\]
whence (\ref{E2E2}) follows. Therefore:%
\begin{equation}
M_{E,R,L}\left(  \alpha\right)  \leq M_{E,R,L}\left(  -\frac{R}{2}\right)
=4\pi\int_{R}^{\infty}f_{s}\left(  \epsilon;-\frac{R}{2},\beta_{E,R,L}\left(
-\frac{R}{2}\right)  \right)  \sqrt{2\epsilon}d\epsilon\label{G8E1}%
\end{equation}
where $\beta_{E,R,L}\left(  -\frac{R}{2}\right)  $ satisfies:%
\begin{equation}
E=4\pi\int_{R}^{L}f_{s}\left(  \epsilon;-\frac{R}{2},\beta_{E,R,L}\left(
-\frac{R}{2}\right)  \right)  \sqrt{2\epsilon^{3}}d\epsilon\label{G8E2}%
\end{equation}

Then, since $R\leq1,\ L\geq2:$%
\begin{equation}
E\geq4\pi\int_{1}^{2}f_{s}\left(  \epsilon;-\frac{R}{2},\beta_{E,R,L}\left(
-\frac{R}{2}\right)  \right)  \sqrt{2\epsilon^{3}}d\epsilon>\frac{4\pi
}{e^{2\beta_{E,R,L}\left(  -\frac{R}{2}\right) } }\int_{1}^{2}\frac
{\sqrt{2\epsilon^{3}}d\epsilon}{\epsilon}\label{E4E7}%
\end{equation}
It then follows from the fact that $E\leq E_{\ast}$ that there exists
$\beta_{\ast}=\beta_{\ast}\left(  E_{\ast}\right)  >0$ such that:%
\begin{equation}
\beta_{E,R,L}\left(  -\frac{R}{2}\right)  \geq\beta_{\ast}\label{E2E3}%
\end{equation}

Using (\ref{E2E3}) it follows that, for any $\varepsilon_{0}>0,$ there exists
$\bar{L}_{0}=\bar{L}_{0}\left(  E_{\ast},\varepsilon_{0}\right)  >0$ such
that, if $L\geq\bar{L}_{0}:$%
\begin{equation}
4\pi\int_{L}^{\infty}f_{s}\left(  \epsilon;-\frac{R}{2},\beta_{E,R,L}\left(
-\frac{R}{2}\right)  \right)  \sqrt{2\epsilon^{3}}d\epsilon<\varepsilon
_{0}\label{E2E4}%
\end{equation}

We define the functions:%
\begin{equation}
\Phi_{R}\left(  \beta\right)  =4\pi\int_{R}^{\infty}f_{s}\left(
\epsilon;-\frac{R}{2},\beta\right)  \sqrt{2\epsilon^{3}}d\epsilon\label{E2E5}%
\end{equation}

We define $\tilde{E}$ by means of:%
\[
\tilde{E}=E+4\pi\int_{L}^{\infty}f_{s}\left(  \epsilon;-\frac{R}{2}%
,\beta_{E,R,L}\left(  -\frac{R}{2}\right)  \right)  \sqrt{2\epsilon^{3}%
}d\epsilon
\]

Using (\ref{E2E4}) we obtain $E\leq\tilde{E}\leq E+\varepsilon_{0}.$ Notice
that due to (\ref{G8E2}) we have $\Phi_{R}\left(  \beta_{E,R,L}\left(
-\frac{R}{2}\right)  \right)  =\tilde{E}.$

We now claim that there exists $\varepsilon_{1}>0$ depending only on $E_{\ast
},\ \delta$ such that if $E\leq\varepsilon_{1}$ and $\alpha\geq-\frac{R}{2}$
we have $M_{E,R,L}\left(  \alpha\right)  \leq\delta.$ Indeed, (\ref{E4E7})
implies that $\beta_{E,R,L}\left(  -\frac{R}{2}\right)  $ can be made
arbitrarily large if $\varepsilon_{1}$ is small. Then (\ref{frad}) and
(\ref{G8E1}) implies that $M_{E,R,L}\left(  \alpha\right)  $ can be made small
due to Lebesgue's dominated convergence Theorem. Then (\ref{E2E6}) would
follow in this case. We remark that $\varepsilon_{1}$ is independent on $R.$

We will assume then in the following that $E>\varepsilon_{1}.$ We claim that
there exists $\beta^{\ast}=\beta^{\ast}\left(  E_{\ast},\varepsilon
_{1}\right)  ,$ such that $\beta_{E,R,L}\left(  -\frac{R}{2}\right)  \leq
\beta^{\ast}.$ Indeed, we can estimate $\Phi_{R}\left(  \beta\right)  $ as:%
\[
\Phi_{R}\left(  \beta\right)  \leq4\pi\int_{\frac{R}{2}}^{\infty}\frac
{\sqrt{2\left(  \epsilon+\frac{R}{2}\right)  ^{3}}}{\left(  e^{\beta\epsilon
}-1\right)  }d\epsilon\leq8\pi\int_{0}^{\infty}\frac{\sqrt{\epsilon^{3}}%
}{\left(  e^{\beta\epsilon}-1\right)  }d\epsilon
\]
and the right-hand side of this formula converges to zero as $\beta
\rightarrow\infty$ due to Lebesgue's dominated convergence Theorem. Then,
$\beta_{E,R,L}\left(  -\frac{R}{2}\right)  \leq\beta^{\ast}=\beta^{\ast
}\left(  E_{\ast},\varepsilon_{1}\right)  $ if $E\geq\varepsilon_{1}>0.$ We
now claim that the functions $\Phi_{R}\left(  \beta\right)  $ converge
uniformly\ in the interval $\left[  \beta_{\ast},\beta^{\ast}\right]  $ as
$R\rightarrow0$ to the function:%
\begin{equation}
\Phi_{0}\left(  \beta\right)  =4\pi\int_{0}^{\infty}f_{s}\left(
\epsilon;0,\beta\right)  \sqrt{2\epsilon^{3}}d\epsilon=4\pi\int_{0}^{\infty
}\frac{\sqrt{2\epsilon^{3}}}{\left(  e^{\beta\epsilon}-1\right)  }%
d\epsilon\label{E3E2}%
\end{equation}

Moreover, we have also uniform convergence in the interval $\left[
\beta_{\ast},\beta^{\ast}\right]  $ of the derivatives $\Phi_{R}^{\prime
}\left(  \beta\right)  $ to $\Phi_{0}^{\prime}\left(  \beta\right)  $ as
$R\rightarrow0.$

Indeed, this just follows from the inequalities:
\[
\left\vert \Phi_{R}\left(  \beta\right)  -\Phi_{0}\left(  \beta\right)
\right\vert \leq C\int_{\frac{R}{2}}^{\infty}\frac{\left[  \sqrt{\left(
\epsilon+\frac{R}{2}\right)  ^{3}}-\sqrt{\epsilon^{3}}\right]  }{\epsilon
}e^{-\beta_{\ast}\epsilon}d\epsilon+C\int_{0}^{\frac{R}{2}}\sqrt{\epsilon
}d\epsilon
\]%
\[
\left\vert \Phi_{R}^{\prime}\left(  \beta\right)  -\Phi_{0}^{\prime}\left(
\beta\right)  \right\vert \leq C\int_{\frac{R}{2}}^{\infty}\frac{\left[
\sqrt{\left(  \epsilon+\frac{R}{2}\right)  ^{3}}-\sqrt{\epsilon^{3}}\right]
}{\epsilon}e^{-\beta_{\ast}\epsilon}d\epsilon+C\int_{0}^{\frac{R}{2}}%
\sqrt{\epsilon}d\epsilon
\]

Since $\Phi_{0}^{\prime}\left(  \beta\right)  $ is strictly negative for
$\beta\in\left[  \beta_{\ast},\beta^{\ast}\right]  $ we then obtain that, if
$E\geq\varepsilon_{1},$ $\beta_{E,R,L}\left(  -\frac{R}{2}\right)  $
satisfies:%
\begin{equation}
\left\vert \beta_{E,R,L}\left(  -\frac{R}{2}\right)  -\bar{\beta}\left(
E\right)  \right\vert \leq C\varepsilon_{0}\ \label{G8E3}%
\end{equation}
if $R$ is sufficiently small, where $\bar{\beta}\left(  E\right)  $ is the
unique solution of:%
\[
\Phi_{0}\left(  \bar{\beta}\left(  E\right)  \right)  =E
\]

Similar computations yield the uniform convergence in the interval $\beta
\in\left[  \beta_{\ast},\beta^{\ast}\right]  $ of the functions $\Psi
_{R}\left(  \beta\right)  $ defined by means of:%
\[
\Psi_{R}\left(  \beta\right)  =4\pi\int_{R}^{\infty}f_{s}\left(
\epsilon;-\frac{R}{2},\beta\right)  \sqrt{2\epsilon}d\epsilon
\]
to:%
\begin{equation}
\Psi_{0}\left(  \beta\right)  =4\pi\int_{0}^{\infty}\frac{\sqrt{2\epsilon}%
}{\left(  e^{\beta\epsilon}-1\right)  }d\epsilon\ \label{E3E1}%
\end{equation}
if $E\geq\varepsilon_{1}.$ Using (\ref{G8E1}) we can write:%
\begin{equation}
M_{E,R,L}\left(  -\frac{R}{2}\right)  =\left[  \Psi_{R}\left(  \beta
_{E,R,L}\left(  -\frac{R}{2}\right)  \right)  -\Psi_{0}\left(  \beta
_{E,R,L}\left(  -\frac{R}{2}\right)  \right)  \right]  +\Psi_{0}\left(
\beta_{E,R,L}\left(  -\frac{R}{2}\right)  \right) \label{E2E8}%
\end{equation}

Due to (\ref{G8E3}) and the uniform convergence of the functions $\Psi_{R}$ to
$\Psi_{0}$ stated above, it follows that the term between brackets in
(\ref{E2E8}) can be made arbitrarily small if $R$ is small enough. On the
other hand, using again (\ref{G8E3}) we can make $\Psi_{0}\left(
\beta_{E,R,L}\left(  -\frac{R}{2}\right)  \right)  $ arbitrarily close to
$\Psi_{0}\left(  \bar{\beta}\left(  E\right)  \right)  $ if $\varepsilon_{0}$
is small. Therefore, if $R$ is small we obtain:%
\begin{equation}
M_{E,R,L}\left(  -\frac{R}{2}\right)  \leq\Psi_{0}\left(  \bar{\beta}\left(
E\right)  \right)  +\delta\label{G8E4}%
\end{equation}
Using the definitions of $\Phi_{0}$ and $\Psi_{0}$ in (\ref{E3E2}),
(\ref{E3E1}) as well as Proposition \ref{crit} we have:%
\[
\Psi_{0}\left(  \bar{\beta}\left(  E\right)  \right)  \leq\frac{\zeta\left(
\frac{3}{2}\right)  }{\left(  \zeta\left(  \frac{5}{2}\right)  \right)
^{\frac{3}{5}}}\left(  \frac{4\pi}{3}\right)  ^{\frac{3}{5}}\left(  \Phi
_{0}\left(  \bar{\beta}\left(  E\right)  \right)  \right)  ^{\frac{3}{5}%
}=\frac{\zeta\left(  \frac{3}{2}\right)  }{\left(  \zeta\left(  \frac{5}%
{2}\right)  \right)  ^{\frac{3}{5}}}\left(  \frac{4\pi}{3}\right)  ^{\frac
{3}{5}}\left(  E\right)  ^{\frac{3}{5}}%
\]

Combining this estimate with (\ref{G8E4}) the result follows.
\end{proof}

\section{Mass concentration in the region of small energies for a sequence of
times $t_{n}\to+\infty$}

\setcounter{equation}{0} \setcounter{theorem}{0}

We will prove now that if the solutions of (\ref{F3E2}) are globally bounded
and (\ref{G1E2}) holds, the corresponding functions $g\left(  t,\cdot\right)
$ would have a significant amount of mass in the regions where $\epsilon$ is
small. The main result of this Section is the following.

\begin{proposition}
\label{ConcMass}Suppose that $f_{0},f$ are as in Theorem \ref{Main}. Let us
assume that $T_{\max}=\infty.$ Then, there exists $m_{\ast}>0$ and $\rho>0,$
both of them depending only on $M,\ E$ such that, for any $0<R<\rho$\ there
exists a sequence $\left\{  t_{n}\right\}  $ with $t_{n}\rightarrow\infty$ as
$n\rightarrow\infty$ such that:%
\begin{equation}
\int_{0}^{R}g\left(  t_{n},\epsilon\right)  d\epsilon=4\pi\int_{0}^{R}%
\sqrt{2\epsilon}f\left(  t_{n},\epsilon\right)  d\epsilon\geq m_{\ast
}\label{G4E1}%
\end{equation}
for any $n.$
\end{proposition}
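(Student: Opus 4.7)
The plan combines three ingredients: (i) global-in-time integrability of the entropy dissipation, which produces a sequence along which $D[f(t_n)]\to 0$; (ii) weak-$\ast$ compactness of the measures $g(t_n,\cdot)$, yielding a limit $g_\infty = \mu_0\,\delta_0 + \tilde g$ whose regular part $\tilde g$ is to be identified with a Bose--Einstein profile on $(0,\infty)$; (iii) the supercritical condition (\ref{G1E2}), which via Propositions \ref{crit} and \ref{MassCrit} forces $\mu_0$ to be bounded below by a quantity $m_\ast(M,E)>0$.

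First, since $T_{\max}=\infty$ and $|S[f(t)]|\le C(E,M)$ for all $t\ge 0$ by Proposition \ref{Entropy}, the entropy identity (\ref{G4E7}) integrated from $2T_0$ to $T$, together with $D\ge 0$, gives
\[
\int_{2T_0}^{\infty} D[f(t)]\,dt \;=\; \lim_{T\to\infty}\bigl(S[f](T)-S[f](2T_0)\bigr)\;\le\; 2C(E,M)<\infty,
\]
so a sequence $t_n\to\infty$ can be chosen with $D[f(t_n,\cdot)]\to 0$. From the conservation laws $\int g(t_n,\cdot)\,d\epsilon=M$ and $\int \epsilon\,g(t_n,\cdot)\,d\epsilon=E$, together with the Markov tail estimate $g_n([L,\infty))\le E/L$, the family $\{g_n\}$ is tight on $[0,\infty)$. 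Up to extraction, $g_n \rightharpoonup g_\infty$ in the weak-$\ast$ sense with no loss at infinity: $g_\infty([0,\infty))=M$ and $\int_{(0,\infty)}\epsilon\,dg_\infty=E$. Write $g_\infty = \mu_0\,\delta_0 + \tilde g$ with $\mu_0\ge 0$ and $\tilde g$ supported in $(0,\infty)$.

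The crux is the identification of $\tilde g$ as $4\pi\sqrt{2\epsilon}\,f_s(\epsilon;\alpha,\beta)\,d\epsilon$ with $\alpha\ge 0,\ \beta\in(0,\infty]$ (cf.\ (\ref{frad})). The pointwise lower bound (\ref{H1E1E}) keeps $f(t_n,\cdot)$ uniformly bounded away from zero on compact subsets $K\Subset(0,\infty)$, preventing the logarithmic factor in (\ref{G2E6}) from degenerating; together with an additional compactness step yielding a.e.\ convergence along a further subsequence, one passes to the limit in the integrand of $D$ via Fatou's lemma. Since $(x-y)(\log x - \log y)\ge 0$ with equality iff $x=y$, the limit forces $Q_{1,2}(f_\infty)=Q_{3,4}(f_\infty)$ almost everywhere on the constraint $\epsilon_1+\epsilon_2=\epsilon_3+\epsilon_4$, so $\log\bigl(f_\infty/(1+f_\infty)\bigr)$ is affine in $\epsilon$ on the support of $\tilde g$; this is exactly the Bose--Einstein form, with $\beta>0$ by finiteness of the energy and $\alpha\ge 0$ to preclude a non-integrable singularity inside $(0,\infty)$. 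Applied to this profile, Proposition \ref{crit} (equivalently (\ref{E2E6}) of Proposition \ref{MassCrit}, after letting $R\downarrow 0$ and $L\uparrow\infty$) gives $M-\mu_0\le C_\ast E^{3/5}$ with $C_\ast=\zeta(3/2)\,\zeta(5/2)^{-3/5}\,(4\pi/3)^{3/5}$. The supercriticality (\ref{G1E2}) then yields $\mu_0 \ge M-C_\ast E^{3/5}>0$, depending only on $M$ and $E$; set $m_\ast:=\tfrac12(M-C_\ast E^{3/5})$.

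To conclude, for every $R>0$ the set $[0,R)$ is open in $[0,\infty)$, so weak-$\ast$ lower semicontinuity gives
\[
\liminf_{n\to\infty}\int_0^R g(t_n,\epsilon)\,d\epsilon \;\ge\; g_\infty([0,R))\;\ge\;\mu_0\;\ge\;2m_\ast,
\]
so that $\int_0^R g(t_n,\epsilon)\,d\epsilon\ge m_\ast$ for all $n$ large; passing to the corresponding tail of $\{t_n\}$ (depending on $R$) yields the sequence required by the proposition. Any $\rho>0$ may be chosen, for instance $\rho=\rho_1(E,M)$ from Proposition \ref{lowMass}. The main obstacle is the identification step: with no a priori uniform $L^\infty$ control on $f(t_n,\cdot)$ as $t_n\to\infty$, one has to work to justify the strong (or at least almost everywhere) subsequential convergence of $f(t_n,\cdot)$ on compact subsets of $(0,\infty)$ and to exhibit an integrable majorant for the dissipation integrand, so that $D[f(t_n)]\to 0$ really forces $Q_{1,2}=Q_{3,4}$ for the limit profile and rules out spurious atoms of $\tilde g$ inside $(0,\infty)$.
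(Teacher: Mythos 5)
Your high-level strategy matches the paper's: entropy dissipation integrable in time yields $D[f(t_n)]\to 0$ along a sequence $t_n\to\infty$; the limit ``should'' be a Bose--Einstein equilibrium; supercriticality then forces a macroscopic mass fraction near $\epsilon=0$. But the technical route you propose has a genuine gap that you yourself flag, and it is precisely where the paper's work lies.

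The core problem is that weak-$\ast$ convergence $g(t_n,\cdot)\rightharpoonup g_\infty$ as measures gives essentially no control on the densities $f(t_n,\cdot)$ themselves, and hence no way to pass to the limit in the pointwise nonlinear integrand of $D$ (which involves $f$, $\log(f/(1+f))$, products at four energy nodes, etc.). You would need some form of a.e.\ subsequential convergence of $f(t_n,\cdot)$ plus a dominating integrand, and you do not have either. In particular your appeal to (\ref{H1E1E}) for a uniform lower bound on $f(t_n,\cdot)$ does not work: the constant $C_{T_1,T_2,\omega}$ there depends on the compact time window $[T_1,T_2]$, so it gives no uniformity as $t_n\to\infty$. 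Without an a.e.\ limit of $f(t_n,\cdot)$, the Fatou argument ``$D\to 0$ forces $Q_{1,2}=Q_{3,4}$'' is not available, and you also cannot rule out atoms of $\tilde g$ sitting at positive energies.

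The paper circumvents all of this by switching to $Q=f/(1+f)\in[0,1]$. Because $Q(t_n,\cdot)$ is uniformly bounded, one has $L^\infty$-weak-$\ast$ compactness to a bona fide \emph{function} $Q_*$ (Lemma \ref{LemmaSubSeq}), and the limit identification is done through a Cauchy-type functional-equation argument (Lemma \ref{maxwellian}): from (\ref{G4E5}) one extracts that $Q(t_n,\epsilon_*)\,Q(t_n,\epsilon_1+\epsilon_2-\epsilon_*)-Q(t_n,\epsilon_1)Q(t_n,\epsilon_2)\to 0$ in $L^1$, integrates once more to get an ODE-like structure for $G_n(x)=\int_x^\infty Q(t_n,\epsilon_*+\xi)\,d\xi$, and solves it explicitly to conclude $Q_*(\epsilon)=e^{-\beta_*(\epsilon+\alpha)}$, which also yields $L^1$ convergence of $Q$ (and then of $f$ away from the set where $Q_*$ is close to $1$, Lemma \ref{LemmL1}). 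Crucially, the paper also has a case distinction you do not: if $Q_*\equiv 0$ on $[R/2,\infty)$ then Lemma \ref{R1R2} already forces the mass in $(R,L)$ to vanish, and you are done without any BE identification at all; only in the second case does the BE/Proposition \ref{MassCrit} argument kick in. Your ``weak-$\ast$ of $g_n$ plus Fatou'' proposal lacks all of this machinery, so the identification step --- and therefore the conclusion $\mu_0\ge M-C_*E^{3/5}>0$ --- does not follow.
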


In order to prove Proposition \ref{ConcMass} we need several Lemmas. We begin
deriving an estimate for the number of particles with large energy.

\begin{lemma}
\label{LemmaL}Suppose that $f_{0}$ and $f$ are as in Theorem \ref{Main}. Then,
for any $\varepsilon_{0}>0,$ there exists $L=L\left(  E,\varepsilon
_{0}\right)  $ such that%
\[
\int_{0}^{L}g\left(  t,\epsilon\right)  d\epsilon=4\pi\int_{0}^{L}%
\sqrt{2\epsilon}f\left(  t,\epsilon\right)  d\epsilon\geq M-\varepsilon_{0}%
\]
for $t\in\left[  0,T_{\max}\right]  .$
\end{lemma}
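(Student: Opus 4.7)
The plan is to exploit the conservation of energy together with a Markov (Chebyshev) inequality. Since the statement is just a tightness property saying that high energies carry arbitrarily little mass, no delicate analysis of the collision operator is needed; only the two a priori conserved quantities play a role.

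First, I would invoke the conservation identities (\ref{F3E6c}) from Theorem \ref{localExistence}, which in terms of the mass density $g(t,\epsilon)=4\pi\sqrt{2\epsilon}\,f(t,\epsilon)$ translate to
\begin{equation*}
\int_{\mathbb{R}^+} g(t,\epsilon)\,d\epsilon = M, \qquad \int_{\mathbb{R}^+} \epsilon\, g(t,\epsilon)\,d\epsilon = E, \qquad 0\le t <T_{\max}.
\end{equation*}
These hold because a mild solution is also a weak solution (Lemma \ref{der17Bis}) and the two moments are preserved under the evolution.

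Next, for any $L>0$ the Markov inequality gives
\begin{equation*}
\int_L^\infty g(t,\epsilon)\,d\epsilon \;\le\; \frac{1}{L}\int_L^\infty \epsilon\, g(t,\epsilon)\,d\epsilon \;\le\; \frac{E}{L}.
\end{equation*}
Choosing $L=L(E,\varepsilon_0):=E/\varepsilon_0$ yields $\int_L^\infty g(t,\epsilon)\,d\epsilon\le\varepsilon_0$, and subtracting from the total mass gives
\begin{equation*}
\int_0^L g(t,\epsilon)\,d\epsilon \;=\; M-\int_L^\infty g(t,\epsilon)\,d\epsilon \;\ge\; M-\varepsilon_0,
\end{equation*}
uniformly in $t\in[0,T_{\max})$. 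This is exactly the desired conclusion.

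There is essentially no obstacle here: the only thing to be careful about is that (\ref{F3E6c}) is proved only on the existence interval of the mild solution, but that is precisely the range of $t$ in the statement. If one wanted the conclusion up to $t=T_{\max}$ (including the endpoint) in a stronger sense, one would use that the moment identities extend by continuity of the mass and energy as a consequence of the weak formulation; but as stated in the lemma (on $[0,T_{\max}]$ understood as $[0,T_{\max})$), nothing more than Markov's inequality and mass/energy conservation is needed.
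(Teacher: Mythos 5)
Your proof is correct and is essentially identical to the paper's: both rest on the conservation of the moments $\int g\,d\epsilon = M$, $\int \epsilon g\,d\epsilon = E$ and the Markov-type bound $\int_L^\infty g\,d\epsilon \leq E/L$, followed by the choice $L \geq E/\varepsilon_0$. The extra remark about the endpoint $t=T_{\max}$ is a harmless elaboration of a point the paper leaves implicit.
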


\begin{proof}
It is just a consequence from the conservation of energy $E$ as well as the
inequality:%
\[
\int_{L}^{\infty}g\left(  t,\epsilon\right)  d\epsilon\leq\frac{1}{L}\int
_{L}^{\infty}g\left(  t,\epsilon\right)  \epsilon d\epsilon \le \frac{E}{L}%
\]

Choosing $L\geq\frac{E}{\varepsilon_{0}}$ the result follows.
\end{proof}

We define the following auxiliary function:
\[
Q\left(  t,\epsilon\right)  =\frac{f\left(  t,\epsilon\right)  }{1+f\left(
t,\epsilon\right)  }%
\]

Then:%
\begin{equation}
f\left(  t,\epsilon\right)  =\frac{Q\left(  t,\epsilon\right)  }{1-Q\left(
t,\epsilon\right)  }\label{G4E2a}%
\end{equation}

Notice that:%
\begin{equation}
0\leq Q \left(  t,\epsilon\right)  \leq1\ \ \ \ ,\ \ Q\left(  t,\epsilon
\right)  \leq f\left(  t,\epsilon\right)  \ \ ,\ \ \epsilon\geq0,\ \ t\in
\left[  0,T_{\max}\right] \label{G4E2}%
\end{equation}

Then:%
\[
4\pi\int_{L}^{\infty}\sqrt{2\epsilon}f\left(  t,\epsilon\right)  d\epsilon
\leq\varepsilon_{0}\ \ ,
\]
with $L=L\left(  E,\varepsilon_{0}\right)  $ as in Lemma \ref{LemmaL}.
Moreover
\begin{equation}
4\pi\int_{0}^{\infty}\sqrt{2\epsilon}Q\left(  t,\epsilon\right)  d\epsilon\leq
M\ \ ,\ \ 4\pi\int_{0}^{\infty}\sqrt{2\epsilon^{3}}Q\left(  t,\epsilon\right)
d\epsilon\leq E\label{G4E3}%
\end{equation}

We define also:%
\begin{equation}
\Psi\left(  s\right)  =s\log\left(  1+s\right) \label{G4E4}%
\end{equation}

We will use the following concept of weak convergence.

\begin{definition}
We say that a sequence $\left\{  Q_{n}\right\}  \subset L^{\infty}\left(
\mathbb{R}^{+}\right)  $ converges weakly to $Q_{\ast},$ and we will write in
this case $Q_{n}\rightharpoonup Q_{\ast}$ iff:%
\begin{equation}
\lim_{n\rightarrow\infty}\int_{\mathbb{R}^{+}}Q_{n}\varphi d\epsilon
=\int_{\mathbb{R}^{+}}Q_{\ast}\varphi d\epsilon\ \label{G4E4a}%
\end{equation}
for any test function $\varphi\in C_{0}\left[  0,\infty\right)  .$
\end{definition}

\begin{remark}
\label{RemTest}If the sequence of functions $\left\{  Q_{n}\right\}  $
satisfies $0\leq Q_{n}\leq1,$ we can apply a density argument to show that
(\ref{G4E4a}) holds for any $\varphi\in L^{1}\left(  0,\infty\right)  .$
\end{remark}

We have the following result:

\begin{lemma}
\label{LemmaSubSeq}Suppose that $f_{0},f$ are as in Proposition \ref{ConcMass}%
. There exists a sequence $\left\{  t_{n}\right\}  ,$ $t_{n}\rightarrow\infty$
as $n\rightarrow\infty$ such that:%
\begin{equation}
\int_{\mathbb{R}^{+}}\int_{\mathbb{R}^{+}}\int_{\mathbb{R}^{+}}f\left(
t_{n},\epsilon_{1}\right)  f\left(  t_{n},\epsilon_{2}\right)  \Psi\left(
\frac{Q\left(  t_{n},\epsilon_{3}\right)  Q\left(  t_{n},\epsilon_{4}\right)
}{Q\left(  t_{n},\epsilon_{1}\right)  Q\left(  t_{n},\epsilon_{2}\right)
}-1\right)  \Phi d\epsilon_{1}d\epsilon_{2}d\epsilon_{3}\rightarrow
0\label{G4E5}%
\end{equation}
as $n\to\infty$, where $\epsilon_{4}=\epsilon_{1}+\epsilon_{2}-\epsilon_{3}$.
Moreover:%
\begin{equation}
Q\left(  t_{n},\cdot\right)  \rightharpoonup Q_{\ast}\left(  \cdot\right)
\ \ \text{as\ \ }n\rightarrow\infty\ \label{G4E6}%
\end{equation}
where $Q_{\ast}\in L^{\infty}\left(  \mathbb{R}^{+}\right)  $ and $0\leq
Q_{\ast}\left(  \epsilon\right)  \leq1\ \ ,\ \ \epsilon\geq0.$
\end{lemma}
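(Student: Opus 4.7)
My plan is to produce the sequence $\{t_{n}\}$ via finite entropy dissipation, to pointwise-dominate the target integrand by the dissipation integrand, and then to extract a weakly convergent subsequence of $Q(t_{n},\cdot)$ by Banach--Alaoglu.

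First I would invoke Proposition \ref{Entropy} under the standing assumption $T_{\max}=\infty$. The uniform bound $|S[f(\cdot,t)]|\leq C(E,M)$, combined with
\[
S[f](T_{2})-S[f](T_{1})=\int_{T_{1}}^{T_{2}}D[f(\cdot,t)]\,dt,\qquad 2T_{0}<T_{1}<T_{2},
\]
and $D\geq 0$, shows that $t\mapsto S[f(\cdot,t)]$ is non-decreasing and bounded on $(2T_{0},\infty)$, hence admits a finite limit as $t\to\infty$. This yields $\int_{2T_{0}}^{\infty}D[f(\cdot,t)]\,dt<\infty$, and Chebyshev's inequality then produces a sequence $t_{n}\uparrow\infty$ with $D[f(\cdot,t_{n})]\to 0$.

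Next I would compare integrands. Setting $Q_{j}=f_{j}/(1+f_{j})$, so $Q_{j,k}=Q_{j}Q_{k}$, and using $f_{j}=(1+f_{j})Q_{j}$ together with $\Psi(s)=s\log(1+s)$, one has
\[
f_{1}f_{2}\,\Psi\!\left(\tfrac{Q_{3}Q_{4}}{Q_{1}Q_{2}}-1\right)=(1+f_{1})(1+f_{2})\bigl(Q_{1,2}-Q_{3,4}\bigr)\log\tfrac{Q_{1,2}}{Q_{3,4}},
\]
interpreted as $0$ where $Q_{1,2}=0$. The right-hand side is nonnegative (two factors of the same sign), and since $(1+f_{3})(1+f_{4})\geq 1$, the target integrand is pointwise dominated by the integrand of $D[f]$. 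Hence
\[
0\leq \iiint f_{1}f_{2}\,\Psi\!\left(\tfrac{Q_{3}Q_{4}}{Q_{1}Q_{2}}-1\right)\Phi\,d\epsilon_{1}d\epsilon_{2}d\epsilon_{3}\leq D[f(\cdot,t)],
\]
and specializing to $t=t_{n}$ gives (\ref{G4E5}).

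Finally, for (\ref{G4E6}) I would observe that $0\leq Q(t,\cdot)\leq 1$ places $\{Q(t_{n},\cdot)\}$ in the closed unit ball of $L^{\infty}(\mathbb{R}^{+})=(L^{1}(\mathbb{R}^{+}))^{*}$. By Banach--Alaoglu and separability of $L^{1}(\mathbb{R}^{+})$, a further subsequence (still called $\{t_{n}\}$) converges weak-$*$ to some $Q_{\ast}\in L^{\infty}$ with $0\leq Q_{\ast}\leq 1$ a.e.; since compactly supported test functions in $C_{0}[0,\infty)$ lie in $L^{1}(\mathbb{R}^{+})$, this is precisely (\ref{G4E6}). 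The only subtle point in the whole argument is justifying the pointwise dominance at the degenerate locus $\{Q_{1,2}=0\}\cup\{Q_{3,4}=0\}$, which is handled by the equivalence $f_{j}=0\Leftrightarrow Q_{j}=0$ together with the convention $x\log(1/x)=0$ at $x=0$; the genuinely heavy lifting has already been carried out in establishing the dissipation of entropy identity in Proposition \ref{Entropy}.
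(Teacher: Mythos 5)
Your proof is correct and follows essentially the same route as the paper: invoke Proposition \ref{Entropy} to get $\int_{T_1}^{\infty}D[f]\,dt<\infty$, extract $\{t_n\}$ with $D[f(\cdot,t_n)]\to 0$, dominate the target integrand by the dissipation integrand (this is exactly the paper's inequality (\ref{G4E8})), and extract a weak-$*$ limit $Q_*$ from the uniformly bounded sequence $0\le Q(t_n,\cdot)\le 1$. Your explicit algebraic verification of the pointwise domination and your remark on the degenerate locus are nice clarifications that the paper leaves implicit, but they do not change the structure of the argument.
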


\begin{proof}
Notice that (\ref{G2E6}) yields:%
\begin{equation}
D\left[  f\right]  \geq\int_{\mathbb{R}^{+}}\int_{\mathbb{R}^{+}}%
\int_{\mathbb{R}^{+}}f\left(  \epsilon_{1}\right)  f\left(  \epsilon
_{2}\right)  \Psi\left(  \frac{Q\left(  \epsilon_{3}\right)  Q\left(
\epsilon_{4}\right)  }{Q\left(  \epsilon_{1}\right)  Q\left(  \epsilon
_{2}\right)  }-1\right)  \Phi d\epsilon_{1}d\epsilon_{2}d\epsilon
_{3}\label{G4E8}%
\end{equation}

Due to (\ref{G4E7a}) we have $\left\vert S\left[  f\right]  \left(
T_{2}\right)  \right\vert +\left\vert S\left[  f\right]  \left(  T_{1}\right)
\right\vert \leq2C\left(  E,M\right)  .$\ Then, since $T_{\max}=\infty,$ and
$\left\vert S\left[  f\right]  \left(  T_{2}\right)  \right\vert <\infty,$ we
can use (\ref{G4E7}) to obtain:%
\[
\int_{T_{1}}^{\infty}D\left[  f\left(  \cdot,t\right)  \right]  dt<\infty
\]

Therefore, there exists a sequence $\left\{  t_{n}\right\}  ,$ $t_{n}%
\rightarrow\infty$ such that $D\left[  f\left(  \cdot,t_{n}\right)  \right]
\rightarrow0$ as $n\rightarrow\infty.$ Using then (\ref{G4E8}) we obtain
(\ref{G4E5}). Using (\ref{G4E2}) and classical compactness results for
measures in the weak topology, we can then extract a subsequence of $\left\{
t_{n}\right\}  $ (which will be denoted in the same way) for which
(\ref{G4E6}) holds.
\end{proof}

\begin{lemma}
\label{R1R2}Suppose that $f_{0},f$ are as in Lemma \ref{LemmaSubSeq}. Let us
assume that $Q_{\ast}=0$ in (\ref{G4E6}) for $0<R_{1}\leq\epsilon\leq
R_{2}<\infty$\thinspace. Then:%
\begin{equation}
\int_{R_{1}}^{R_{2}}Q\left(  t_{n},\epsilon\right)  \sqrt{\epsilon}%
d\epsilon\rightarrow0\text{ }\ \ ,\ \ \int_{R_{1}}^{R_{2}}f\left(
t_{n},\epsilon\right)  \sqrt{\epsilon}d\epsilon\rightarrow0\ \ \text{as
}n\rightarrow\infty\label{G4E9}%
\end{equation}

\end{lemma}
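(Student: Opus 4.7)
My approach is to prove the first identity by direct appeal to the weak convergence (\ref{G4E6}), and to handle the second by splitting $[R_1,R_2]$ into a region where $Q(t_n,\cdot)$ is bounded away from $1$ (on which $f=Q/(1-Q)$ is controlled by $Q$) and a region where $Q(t_n,\cdot)$ is close to $1$ (on which $f$ may be very large and, a priori, may concentrate). On the latter region I would invoke the entropy--dissipation constraint (\ref{G4E5}) to rule out a Dirac-type concentration.

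For the first identity, since $0\le Q(t_n,\cdot)\le 1$, Remark \ref{RemTest} allows me to apply (\ref{G4E6}) to $\varphi(\epsilon)=\sqrt{\epsilon}\,\chi_{[R_1,R_2]}(\epsilon)\in L^1(\mathbb{R}^+)$, which together with $Q_{*}\equiv 0$ on $[R_1,R_2]$ yields $\int_{R_1}^{R_2}Q(t_n,\epsilon)\sqrt{\epsilon}\,d\epsilon\to 0$. For the second identity, I would fix $\eta\in(0,1)$ and split $[R_1,R_2]=A_n^\eta\cup B_n^\eta$, where $A_n^\eta:=\{\epsilon\in[R_1,R_2]:Q(t_n,\epsilon)\ge 1-\eta\}$. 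On $B_n^\eta$ we have $f\le Q/\eta$ and hence $\int_{B_n^\eta}f\sqrt{\epsilon}\,d\epsilon\le\eta^{-1}\int_{R_1}^{R_2}Q\sqrt{\epsilon}\,d\epsilon\to 0$, while the first identity also forces $|A_n^\eta|\to 0$ via $(1-\eta)\sqrt{R_1}\,|A_n^\eta|\le\int_{A_n^\eta}Q\sqrt{\epsilon}\,d\epsilon$. The delicate point is to show that $\int_{A_n^\eta}f\sqrt{\epsilon}\,d\epsilon\to 0$: the toy scenario $Q(t_n,\epsilon)=1-m_n/c$ on an interval of length $m_n\to 0$ (yielding $f\sim c/m_n$) shows that weak convergence $Q\rightharpoonup 0$ together with the mass bound is compatible with $f(t_n)\sqrt{\epsilon}\,d\epsilon$ concentrating as a Dirac of size $c\sqrt{R_1}>0$ on the shrinking set $A_n^\eta$.

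To exclude such concentration I would argue by contradiction using (\ref{G4E5}). Assuming $\int_{A_n^\eta}f(t_n)\sqrt{\epsilon}\,d\epsilon\ge c>0$ along a subsequence, I would extract further subsequences so that $f(t_n)\sqrt{\epsilon}\,d\epsilon$ converges weakly-$*$ on $[R_1,R_2]$ to a non-negative measure $\mu$ of total mass $\ge c$, singular with respect to Lebesgue, and so that $Q(t_n,\cdot)\to 0$ a.e. on $[R_1,R_2]$ (possible because the first identity gives $L^1(\sqrt{\epsilon}\,d\epsilon)$ convergence). Then I would choose a small neighborhood $U$ of $\operatorname{supp}\mu$ and a Lebesgue-positive set $V\subset[R_1,R_2]\setminus\operatorname{supp}\mu$ with $\epsilon_4=\epsilon_1+\epsilon_2-\epsilon_3\notin\operatorname{supp}\mu$ whenever $(\epsilon_1,\epsilon_2)\in U\times U$ and $\epsilon_3\in V$. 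On $U\times U$ the product $f_1 f_2$ carries positive limit mass $\mu\otimes\mu$, and $Q_1 Q_2\ge(1-\eta)^2$ on $A_n^\eta\times A_n^\eta$; on $V$ the a.e.\ convergence $Q\to 0$ gives $Q_3 Q_4\to 0$, so $\Psi\bigl(Q_3 Q_4/(Q_1 Q_2)-1\bigr)\to+\infty$ pointwise, since $\Psi(s)=s\log(1+s)\to+\infty$ as $s\to-1^+$. A Fatou-type argument should yield a strictly positive lower bound for the triple integral in (\ref{G4E5}), contradicting its vanishing along $\{t_n\}$.

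The main obstacle is precisely this last concentration--compactness step: identifying the atomic/singular structure of $\mu$, extracting a joint subsequence along which all required weak limits coexist simultaneously with the a.e.\ convergence $Q\to 0$, and quantitatively controlling the rate at which $Q_3 Q_4/(Q_1 Q_2)-1\to -1^+$ so that the logarithmic blow-up of $\Psi$ does produce a nontrivial positive contribution once it is integrated against the (possibly singular) concentrating product $f_1 f_2\,\Phi\,d\epsilon_1 d\epsilon_2$ on $U\times U$.
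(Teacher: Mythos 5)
Your treatment of the first limit in (\ref{G4E9}) is correct and is essentially what the paper does. For the second limit, your split of $[R_1,R_2]$ into $A_n^\eta$ (where $Q$ is close to $1$) and its complement, the bound $f\leq Q/\eta$ off $A_n^\eta$, and the observation that $|A_n^\eta|\to 0$ are all sound and agree with the paper's setup (which uses the threshold $1/2$). However, your proposal breaks down exactly where you say it does: the concentration--compactness argument on $A_n^\eta$ is not carried out, and the difficulties you list --- extracting a singular limit measure $\mu$, arranging joint subsequences with a.e.\ convergence of $Q$, and controlling the rate at which $\Psi$ blows up against the concentrating product $f_1f_2$ --- are genuine gaps, not technicalities.

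The paper circumvents all of this by \emph{not} relying on the divergence $\Psi(s)\to+\infty$ as $s\to-1^+$. It only uses the much cruder fact that $\Psi$ is decreasing on $(-1,-\tfrac12)$, so $\Psi\geq\Psi(-\tfrac12)>0$ whenever its argument is $\leq-\tfrac12$. Concretely, setting $\mathcal{U}_n=\{\epsilon\in[R_1,R_2]:Q(t_n,\epsilon)\geq\tfrac12\}$, one restricts the triple integral in (\ref{G4E5}) to $\epsilon_1,\epsilon_2\in\mathcal{U}_n$ and $\epsilon_3\in[\tfrac{R_1}{2},\tfrac{3R_1}{2}]$ with $Q(t_n,\epsilon_3)\leq\tfrac18$. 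There $Q_1Q_2\geq\tfrac14$, $Q_4\leq 1$, so $\frac{Q_3Q_4}{Q_1Q_2}-1\leq 4Q_3-1\leq-\tfrac12$, giving $\Psi\bigl(\tfrac{Q_3Q_4}{Q_1Q_2}-1\bigr)\geq\Psi(-\tfrac12)$; and $\Phi\geq\sqrt{R_1/2}$. The $\epsilon_3$-integral is over a set whose Lebesgue measure tends to $R_1$ (by the first part of the lemma applied near $R_1$), while the $\epsilon_1,\epsilon_2$-integrals factor as $\bigl(\int_{\mathcal{U}_n}f(t_n,\cdot)\,d\epsilon\bigr)^2$. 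This yields
\[
\sqrt{\tfrac{R_1}{2}}\,\Psi(-\tfrac12)\,\bigl|\{Q(t_n,\cdot)\leq\tfrac18\}\cap[\tfrac{R_1}{2},\tfrac{3R_1}{2}]\bigr|\left(\int_{\mathcal{U}_n}f(t_n,\epsilon)\,d\epsilon\right)^2\leq D\text{-type integral in }(\ref{G4E5})\to 0,
\]
so $\int_{\mathcal{U}_n}f(t_n,\cdot)\,d\epsilon\to 0$ without any weak-$*$ limit, Fatou argument, or identification of a singular measure. Together with $f\leq 2Q$ off $\mathcal{U}_n$ this gives the second limit. So the missing idea in your proposal is that the squared dependence $\bigl(\int_{\mathcal{U}_n}f\bigr)^2$ is already built into the structure of the dissipation via the $f_1f_2$ factor, and only a uniform positive lower bound on $\Psi$ (not its blow-up) is needed to exploit it.
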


\begin{proof}
Choosing the test function $\varphi$ such that $\varphi\left(  \epsilon
\right)  =1$ for $\epsilon\in\left[  R_{1},R_{2}\right]  $ and $\varphi\left(
\epsilon\right)  =0$ for $\epsilon\notin\left[  R_{1},R_{2}\right]  $ (cf.
Remark \ref{RemTest}), we obtain, using the nonnegativity of $Q,$ that for $n$
large enough:%
\[
\int_{R_{1}}^{R_{2}}Q\left(  t_{n},\epsilon\right)  \sqrt{\epsilon}%
d\epsilon\rightarrow0\ \ \text{as }n\rightarrow\infty
\]

In order to prove the second formula in (\ref{G4E9}) we define the set:%
\[
\mathcal{U}_{n}\left(  R_{1},R_{2}\right)  =\left\{  R_{1}\leq\epsilon\leq
R_{2}:Q\left(  t_{n},\epsilon\right)  \geq\frac{1}{2}\right\}
\]

Then:%
\begin{align*}
& \int_{\mathbb{R}^{+}}\int_{\mathbb{R}^{+}}\int_{\mathbb{R}^{+}}f\left(
t_{n},\epsilon_{1}\right)  f\left(  t_{n},\epsilon_{2}\right)  \Psi\left(
\frac{Q\left(  t_{n},\epsilon_{3}\right)  Q\left(  t_{n},\epsilon_{4}\right)
}{Q\left(  t_{n},\epsilon_{1}\right)  Q\left(  t_{n},\epsilon_{2}\right)
}-1\right)  \Phi d\epsilon_{1}d\epsilon_{2}d\epsilon_{3}\\
& \geq\int_{\mathcal{U}_{n}\left(  R_{1},R_{2}\right)  }\int_{\mathcal{U}%
_{n}\left(  R_{1},R_{2}\right)  }\int_{\left\{  Q\left(  t_{n},\epsilon
_{3}\right)  \leq\frac{1}{8}\right\}  }f\left(  t_{n},\epsilon_{1}\right)
f\left(  t_{n},\epsilon_{2}\right)  \times\\
& \hskip4cm\times\Psi\left(  \frac{Q\left(  t_{n},\epsilon_{3}\right)
Q\left(  t_{n},\epsilon_{4}\right)  }{Q\left(  t_{n},\epsilon_{1}\right)
Q\left(  t_{n},\epsilon_{2}\right)  }-1\right)  \Phi d\epsilon_{3}
d\epsilon_{1}d\epsilon_{2}%
\end{align*}

Using that $Q\left(  t_{n},\epsilon_{4}\right)  \leq1,$ $Q\left(
t_{n},\epsilon_{j}\right)  \geq\frac{1}{2}$ if $\epsilon_{j}\in\mathcal{U}%
_{n}\left(  R_{1},R_{2}\right)  $ for some $j=1,2,$ as well as the fact that
the function $\Psi\left(  s\right)  $ is decreasing for $s\in\left(
-1,-\frac{1}{2}\right)  $ we obtain:%
\begin{align*}
& \int_{\mathbb{R}^{+}}\int_{\mathbb{R}^{+}}\int_{\mathbb{R}^{+}}f\left(
t_{n},\epsilon_{1}\right)  f\left(  t_{n},\epsilon_{2}\right)  \Psi\left(
\frac{Q\left(  t_{n},\epsilon_{3}\right)  Q\left(  t_{n},\epsilon_{4}\right)
}{Q\left(  t_{n},\epsilon_{1}\right)  Q\left(  t_{n},\epsilon_{2}\right)
}-1\right)  \Phi d\epsilon_{1}d\epsilon_{2}d\epsilon_{3}\\
& \geq\int_{\mathcal{U}_{n}\left(  R_{1},R_{2}\right)  }\int_{\mathcal{U}%
_{n}\left(  R_{1},R_{2}\right)  }\int_{\left\{  Q\left(  t_{n},\epsilon
_{3}\right)  \leq\frac{1}{8}\right\}  \cap\left\{  \frac{R_{1}}{2}\leq
\epsilon_{3}\leq\frac{3R_{1}}{2}\right\}  }f\left(  t_{n},\epsilon_{1}\right)
f\left(  t_{n},\epsilon_{2}\right) \times\\
&  \hskip 6cm \times\Psi\left(  4Q\left(  t_{n},\epsilon_{3}\right)
-1\right)  \Phi d\epsilon_{3} d\epsilon_{1}d\epsilon_{2}\\
& \geq\sqrt{\frac{R_{1}}{2}}\left(  \int_{\mathcal{U}_{n}\left(  R_{1}%
,R_{2}\right)  }f\left(  t_{n},\epsilon\right)  d\epsilon\right)  ^{2}\times\\
& \hskip 3cm \times\left[  \int_{\left\{  Q\left(  t_{n},\epsilon_{3}\right)
\leq\frac{1}{8}\right\}  \cap\left\{  \frac{R_{1}}{2}\leq\epsilon_{3}\leq
\frac{3R_{1}}{2}\right\}  }\Psi\left(  4Q\left(  t_{n},\epsilon_{3}\right)
-1\right)  d\epsilon_{3}\right] \\
& \geq\sqrt{\frac{R_{1}}{2}}\Psi\left(  -\frac{1}{2}\right)  \left(
\int_{\mathcal{U}_{n}\left(  R_{1},R_{2}\right)  }f\left(  t_{n}%
,\epsilon\right)  d\epsilon\right)  ^{2}\left[  \int_{\left\{  Q\left(
t_{n},\epsilon_{3}\right)  \leq\frac{1}{8}\right\}  \cap\left\{  \frac{R_{1}%
}{2}\leq\epsilon_{3}\leq\frac{3R_{1}}{2}\right\}  }d\epsilon_{3}\right]
\end{align*}

Due to the first formula in (\ref{G4E9})\ we have that
\[
\left\vert \left\{  Q\left(  t_{n},\epsilon_{3}\right)  >\frac{1}{8}\right\}
\cap\left\{  \frac{R_{1}}{2}\leq\epsilon_{3}\leq\frac{3R_{1}}{2}\right\}
\right\vert \rightarrow0
\]
as $n\rightarrow\infty.$ Then:%
\[
\lim_{n\rightarrow\infty}\int_{\left\{  Q\left(  t_{n},\epsilon_{3}\right)
\leq\frac{1}{8}\right\}  \cap\left\{  \frac{R_{1}}{2}\leq\epsilon_{3}\leq
\frac{3R_{1}}{2}\right\}  }d\epsilon_{3}=\int_{\left\{  \frac{R_{1}}{2}%
\leq\epsilon_{3}\leq\frac{3R_{1}}{2}\right\}  }d\epsilon_{3}=R_{1}%
\]

Therefore, using (\ref{G4E5}):%
\[
\lim_{n\rightarrow\infty}\int_{\mathcal{U}_{n}\left(  R_{1},R_{2}\right)
}f\left(  t_{n},\epsilon\right)  d\epsilon=0
\]

We then have, taking into account (\ref{G4E2a}) and the first limit in
(\ref{G4E9}):
\begin{align*}
& \lim_{n\rightarrow\infty}\int_{R}^{\infty}f\left(  t_{n},\epsilon\right)
\sqrt{\epsilon}d\epsilon=\lim_{n\rightarrow\infty}\int_{\left[  R,\infty
\right)  \setminus\mathcal{U}_{n}\left(  R_{1},R_{2}\right)  }f\left(
t_{n},\epsilon\right)  \sqrt{\epsilon}d\epsilon+\\
& +\lim_{n\rightarrow\infty}\int_{\mathcal{U}_{n}\left(  R_{1},R_{2}\right)
}f\left(  t_{n},\epsilon\right)  d\epsilon\leq2\lim_{n\rightarrow\infty}%
\int_{\left[  R,\infty\right)  \setminus\mathcal{U}_{n}\left(  R_{1}%
,R_{2}\right)  }Q\left(  t_{n},\epsilon\right)  \sqrt{\epsilon}d\epsilon=0.
\end{align*}
\hfill
\end{proof}

\begin{lemma}
\label{FstarNonzero}Suppose that $f_{0},f$ are as in Lemma \ref{LemmaSubSeq}.
Let us assume that $Q_{\ast}\not \equiv 0$ for $\epsilon\geq R_{1}$ in
(\ref{G4E6}). Suppose that $0<R_{1}\leq1.$ Then:%
\begin{equation}
\int_{R_{1}}^{\infty}\int_{R_{1}}^{\infty}\int_{\frac{R_{1}}{2}}^{\frac
{3R_{1}}{2}}\left\vert Q\left(  t_{n},\epsilon_{3}\right)  Q\left(
t_{n},\epsilon_{4}\right)  -Q\left(  t_{n},\epsilon_{1}\right)  Q\left(
t_{n},\epsilon_{2}\right)  \right\vert d\epsilon_{1}d\epsilon_{2}d\epsilon
_{3}\rightarrow0\text{ as }n\rightarrow\infty\label{G5E2}%
\end{equation}

\end{lemma}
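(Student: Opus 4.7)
The plan is to combine the entropy dissipation bound (\ref{G4E5}) from Lemma \ref{LemmaSubSeq} with the elementary inequality $(b-a)(\log b - \log a) \geq 4(\sqrt{b}-\sqrt{a})^2$ valid for $a,b>0$, and with Cauchy--Schwarz. The crucial geometric observation is that on the integration domain
\begin{equation*}
D := \{(\epsilon_1,\epsilon_2,\epsilon_3) : \epsilon_1 \geq R_1,\ \epsilon_2 \geq R_1,\ R_1/2 \leq \epsilon_3 \leq 3R_1/2\}
\end{equation*}
one has $\epsilon_4 = \epsilon_1 + \epsilon_2 - \epsilon_3 \geq R_1/2$, so the weight $\Phi$ satisfies $\Phi \geq \sqrt{R_1/2}$ uniformly on $D$.

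First I rewrite the integrand of (\ref{G4E5}) using $f_j = Q_j/(1-Q_j)$ and $\Psi(s) = s\log(1+s)$:
\begin{equation*}
f_1 f_2\, \Psi\!\left(\tfrac{Q_3 Q_4}{Q_1 Q_2} - 1\right) = \frac{(Q_3 Q_4 - Q_1 Q_2)\bigl(\log(Q_3 Q_4) - \log(Q_1 Q_2)\bigr)}{(1-Q_1)(1-Q_2)}.
\end{equation*}
Since $(1-Q_1)(1-Q_2) \leq 1$, the elementary inequality above applied with $a = Q_1 Q_2$ and $b = Q_3 Q_4$ gives $f_1 f_2\, \Psi(\cdot) \geq 4(\sqrt{Q_3 Q_4} - \sqrt{Q_1 Q_2})^2$. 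Combining this with (\ref{G4E5}) and restricting to $D$ where $\Phi \geq \sqrt{R_1/2}$, I obtain
\begin{equation*}
\int_D \bigl(\sqrt{Q(t_n,\epsilon_3)Q(t_n,\epsilon_4)} - \sqrt{Q(t_n,\epsilon_1)Q(t_n,\epsilon_2)}\bigr)^2 \, d\epsilon_1 d\epsilon_2 d\epsilon_3 \to 0\quad \text{as } n\to\infty.
\end{equation*}

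To pass to the $L^1$ estimate (\ref{G5E2}) I factor $Q_3 Q_4 - Q_1 Q_2 = (\sqrt{Q_3 Q_4} - \sqrt{Q_1 Q_2})(\sqrt{Q_3 Q_4} + \sqrt{Q_1 Q_2})$ and apply Cauchy--Schwarz. It remains to show that $\int_D (\sqrt{Q_3 Q_4} + \sqrt{Q_1 Q_2})^2 \leq 2\int_D (Q_3 Q_4 + Q_1 Q_2)$ is bounded uniformly in $n$. Using $Q \leq f$ together with the conservation (\ref{F3E6c}), I have $\int_{R_1}^\infty f(t_n,\epsilon) d\epsilon \leq C M / \sqrt{R_1}$ and, via Cauchy--Schwarz, $\int_0^\infty f(t_n,\epsilon) \epsilon \, d\epsilon \leq C \sqrt{M E}$. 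The $\int_D Q_1 Q_2$ contribution is thus immediate; for the $\int_D Q_3 Q_4$ contribution I use $Q_3 \leq 1$ and change variables $(\epsilon_1,\epsilon_2) \mapsto (\epsilon_1,\epsilon_4)$ with $\epsilon_4 = \epsilon_1 + \epsilon_2 - \epsilon_3$, which has unit Jacobian and leads to a bound of the form $C(M,E,R_1)$.

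The hypothesis $Q_\ast \not\equiv 0$ on $[R_1, \infty)$ is complementary to that of Lemma \ref{R1R2}: in fact (\ref{G5E2}) is reached by the argument above regardless of this hypothesis, but it is only informative in the case $Q_\ast \not\equiv 0$ (in the opposite case, Lemma \ref{R1R2} yields the stronger statement $\int_{R_1}^\infty f(t_n,\epsilon)\sqrt{\epsilon}\,d\epsilon \to 0$, which makes (\ref{G5E2}) trivial). The only slightly subtle step is the uniform boundedness of the Cauchy--Schwarz remainder across the unbounded $(\epsilon_1,\epsilon_2)$ region; this is where the energy conservation---and not just the mass conservation---is essential.
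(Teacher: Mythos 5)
Your proof is correct, and it takes a genuinely different route from the paper's. The paper argues via Jensen's inequality: it bounds $f_1 f_2 \Psi(\cdot)$ below by $Q_1 Q_2 \Psi(\cdot)$, introduces the symmetrized convex function $\bar\Psi(s) = \Psi(|s|)$, and applies Jensen with respect to the (normalized) measure $Q_1 Q_2\,d\epsilon_1 d\epsilon_2 d\epsilon_3$ on $D$; this yields
\begin{equation*}
R_1\Bigl(\int_{R_1}^\infty Q(t_n)\,d\epsilon\Bigr)^{2}\sqrt{\tfrac{R_1}{2}}\;\bar\Psi\Bigl(\frac{\int_D |Q_3 Q_4 - Q_1 Q_2|}{R_1(\int_{R_1}^\infty Q(t_n))^2}\Bigr)\;\longrightarrow\;0,
\end{equation*}
and the hypothesis $Q_\ast\not\equiv 0$ is used precisely so that the normalizing factor $\int_{R_1}^\infty Q(t_n)\,d\epsilon$ is bounded away from zero, which lets one invert $\bar\Psi$ and conclude. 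Your pointwise inequality $(b-a)(\log b - \log a) \ge 4(\sqrt b - \sqrt a)^2$ replaces Jensen entirely, converts (\ref{G4E5}) directly into an $L^2(D)$ bound for $\sqrt{Q_3Q_4}-\sqrt{Q_1Q_2}$, and Cauchy--Schwarz does the rest. The price is that you must check uniform boundedness of $\int_D(Q_3Q_4+Q_1Q_2)$; your treatment of the $Q_1Q_2$ piece is immediate, and the $(\epsilon_1,\epsilon_2)\mapsto(\epsilon_1,\epsilon_4)$ change of variables for the $Q_3Q_4$ piece (combined with $\int Q \le 1 + CM/\sqrt{L}$ and $\int f\epsilon \le C\sqrt{ME}$) closes the argument cleanly. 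The payoff is, as you observe, that the hypothesis $Q_\ast\not\equiv 0$ becomes superfluous in your version, whereas it is structurally necessary in the paper's Jensen route to keep the prefactor from degenerating; both proofs are correct, and yours is arguably tidier and more local.
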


\begin{proof}
Using (\ref{G4E2}) we obtain:%
\begin{align*}
& \int_{\mathbb{R}^{+}}\int_{\mathbb{R}^{+}}\int_{\mathbb{R}^{+}}Q\left(
t_{n},\epsilon_{1}\right)  Q\left(  t_{n},\epsilon_{2}\right)  \Psi\left(
\frac{Q\left(  t_{n},\epsilon_{3}\right)  Q\left(  t_{n},\epsilon_{4}\right)
}{Q\left(  t_{n},\epsilon_{1}\right)  Q\left(  t_{n},\epsilon_{2}\right)
}-1\right)  \Phi d\epsilon_{1}d\epsilon_{2}d\epsilon_{3}\\
& \leq\int_{\mathbb{R}^{+}}\int_{\mathbb{R}^{+}}\int_{\mathbb{R}^{+}}f\left(
t_{n},\epsilon_{1}\right)  f\left(  t_{n},\epsilon_{2}\right)  \Psi\left(
\frac{Q\left(  t_{n},\epsilon_{3}\right)  Q\left(  t_{n},\epsilon_{4}\right)
}{Q\left(  t_{n},\epsilon_{1}\right)  Q\left(  t_{n},\epsilon_{2}\right)
}-1\right)  \Phi d\epsilon_{1}d\epsilon_{2}d\epsilon_{3}%
\end{align*}

Then:%
\begin{align*}
& \int_{\mathbb{R}^{+}}\int_{\mathbb{R}^{+}}\int_{\mathbb{R}^{+}}f\left(
t_{n},\epsilon_{1}\right)  f\left(  t_{n},\epsilon_{2}\right)  \Psi\left(
\frac{Q\left(  t_{n},\epsilon_{3}\right)  Q\left(  t_{n},\epsilon_{4}\right)
}{Q\left(  t_{n},\epsilon_{1}\right)  Q\left(  t_{n},\epsilon_{2}\right)
}-1\right)  \Phi d\epsilon_{1}d\epsilon_{2}d\epsilon_{3}\\
& \geq\int_{R_{1}}^{\infty}\int_{R_{1}}^{\infty}\int_{\frac{R_{1}}{2}}%
^{\frac{3R_{1}}{2}}Q\left(  t_{n},\epsilon_{1}\right)  Q\left(  t_{n}%
,\epsilon_{2}\right)  \Psi\left(  \frac{Q\left(  t_{n},\epsilon_{3}\right)
Q\left(  t_{n},\epsilon_{4}\right)  }{Q\left(  t_{n},\epsilon_{1}\right)
Q\left(  t_{n},\epsilon_{2}\right)  }-1\right)  \Phi d\epsilon_{1}%
d\epsilon_{2}d\epsilon_{3}\\
& \geq\sqrt{\frac{R_{1}}{2}}\int_{R_{1}}^{\infty}\int_{R_{1}}^{\infty}%
\int_{\frac{R_{1}}{2}}^{\frac{3R_{1}}{2}}Q\left(  t_{n},\epsilon_{1}\right)
Q\left(  t_{n},\epsilon_{2}\right)  \Psi\left(  \frac{Q\left(  t_{n}%
,\epsilon_{3}\right)  Q\left(  t_{n},\epsilon_{4}\right)  }{Q\left(
t_{n},\epsilon_{1}\right)  Q\left(  t_{n},\epsilon_{2}\right)  }-1\right)
d\epsilon_{1}d\epsilon_{2}d\epsilon_{3}%
\end{align*}

We define a convex function $\bar{\Psi}\left(  s\right)  $ as:%
\[
\bar{\Psi}\left(  s\right)  =\Psi\left(  s\right)  \ \text{if\ }s\geq0,\text{
and }\bar{\Psi}\left(  s\right)  =\bar{\Psi}\left(  -s\right)
\]

Notice that $\bar{\Psi}\left(  s\right)  $ is bounded as $C\left\vert
s\right\vert $ for large $\left\vert s\right\vert .$ Using Jensen's inequality
we obtain:%
\begin{align*}
& R_{1}\left(  \int_{R_{1}}^{\infty}Q\left(  t_{n},\epsilon\right)
d\epsilon\right)  ^{2}\sqrt{\frac{R_{1}}{2}}\bar{\Psi}\times\\
& \times\left(  \frac{\int_{R_{1}}^{\infty}\int_{R_{1}}^{\infty}\int
_{\frac{R_{1}}{2}}^{\frac{3R_{1}}{2}}\left\vert Q\left(  t_{n},\epsilon
_{3}\right)  Q\left(  t_{n},\epsilon_{4}\right)  -Q\left(  t_{n},\epsilon
_{1}\right)  Q\left(  t_{n},\epsilon_{2}\right)  \right\vert d\epsilon
_{1}d\epsilon_{2}d\epsilon_{3}}{\left(  \int_{R_{1}}^{\infty}Q\left(
t_{n},\epsilon\right)  d\epsilon\right)  ^{2}R_{1}}\right) \\
& \leq\int_{\mathbb{R}^{+}}\int_{\mathbb{R}^{+}}\int_{\mathbb{R}^{+}}f\left(
t_{n},\epsilon_{1}\right)  f\left(  t_{n},\epsilon_{2}\right)  \Psi\left(
\frac{Q\left(  t_{n},\epsilon_{3}\right)  Q\left(  t_{n},\epsilon_{4}\right)
}{Q\left(  t_{n},\epsilon_{1}\right)  Q\left(  t_{n},\epsilon_{2}\right)
}-1\right)  \Phi d\epsilon_{1}d\epsilon_{2}d\epsilon_{3}%
\end{align*}

Taking into account (\ref{G4E5}) as well as the fact that $\bar{\Psi}\left(
s\right)  $ vanishes only for $s=0$ we obtain (\ref{G5E2}).
\end{proof}

\begin{lemma}
\label{maxwellian}Given$\ 0<R_{1}<R_{2}<\infty,$ suppose that $f_{0}$ as in
the statement of Proposition \ref{ConcMass}, let $\left\{  t_{n}\right\}  $ as
in Lemma \ref{LemmaSubSeq}. Suppose that $Q_{\ast}\not \equiv 0$ in the
interval $\left[  R_{1},R_{2}\right]  .$ Then $Q_{\ast}\left(  \epsilon
\right)  =\exp\left(  -\beta_{\ast}\left(  \epsilon+\alpha\right)  \right)  $
for $\epsilon\in$ $\left[  R_{1},R_{2}\right]  $ with $\beta_{\ast}>0$ and
$\alpha\geq-R_{1}.$ Moreover, there exists a subsequence of $\left\{
t_{n}\right\}  $ which will be denoted with the same indexes such that:%
\begin{equation}
Q\left(  t_{n},\cdot\right)  \rightarrow Q_{\ast}\left(  \cdot\right)
\ \ \text{in\ }L^{1}\left(  R_{1},R_{2}\right)  \ \ \text{as\ \ }%
n\rightarrow\infty\ \ .\label{G6E4}%
\end{equation}

\end{lemma}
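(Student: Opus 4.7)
The proof proceeds in three stages. In stage one I extract a functional equation for $Q_*$. Lemma \ref{FstarNonzero} provides strong $L^1$ vanishing of $Q(t_n,\epsilon_3)Q(t_n,\epsilon_4)-Q(t_n,\epsilon_1)Q(t_n,\epsilon_2)$ on the slab $\mathcal{S}=\{\epsilon_1,\epsilon_2\ge R_1,\ \epsilon_3\in[R_1/2,3R_1/2]\}$ with $\epsilon_4=\epsilon_1+\epsilon_2-\epsilon_3$. Since $0\le Q(t_n,\cdot)\le 1$, tensor-product density in $L^1(\mathbb{R}_+^2)$ upgrades the weak-$*$ convergence $Q(t_n,\cdot)\rightharpoonup Q_*(\cdot)$ to weak-$*$ convergence $Q(t_n,\epsilon_i)Q(t_n,\epsilon_j)\rightharpoonup Q_*(\epsilon_i)Q_*(\epsilon_j)$ in $L^\infty$ of the product domain; the change of variables $(\epsilon_1,\epsilon_2,\epsilon_3)\mapsto(\epsilon_1,\epsilon_3,\epsilon_1+\epsilon_2-\epsilon_3)$ (Jacobian one) handles the cross term $Q(t_n,\epsilon_3)Q(t_n,\epsilon_4)$. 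Passing to the limit I deduce
\[
Q_*(\epsilon_1)Q_*(\epsilon_2)=Q_*(\epsilon_3)Q_*(\epsilon_4)\quad\text{a.e. on }\mathcal{S}.
\]

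In stage two I solve this detailed-balance equation. On the positivity set of $Q_*$, logarithms convert it to Cauchy's additive equation under the constraint $\epsilon_1+\epsilon_2=\epsilon_3+\epsilon_4$; measurability of $Q_*$ then forces $\log Q_*$ to be affine, yielding $Q_*(\epsilon)=\exp(-\beta_*(\epsilon+\alpha))$ a.e. on $[R_1,R_2]$, the positivity set being all of $[R_1,R_2]$ since exponentials have no zeros. The finite energy bound (\ref{G4E3}), transferred to the limit by weak lower semicontinuity on all of $[R_1,\infty)$ (where the functional equation still holds by Lemma \ref{FstarNonzero}'s hypothesis), forces $\int_{R_1}^\infty \epsilon^{3/2}\exp(-\beta_*(\epsilon+\alpha))\,d\epsilon<\infty$ and hence $\beta_*>0$. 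The bound $Q_*(R_1)\le 1$ then gives $\alpha\ge -R_1$.

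In stage three I upgrade weak to strong $L^1$ convergence on $[R_1,R_2]$. Since $0\le Q(t_n,\cdot)\le 1$, Dunford--Pettis yields weak sequential compactness in $L^1([R_1,R_2])$ with identified limit $Q_*$; to conclude strong convergence, it suffices to show $\int_{R_1}^{R_2}Q(t_n,\epsilon)^2\,d\epsilon\to\int_{R_1}^{R_2}Q_*(\epsilon)^2\,d\epsilon$ along a subsequence, as weak plus norm convergence in $L^2$ gives strong $L^2$ convergence, hence $L^1$. To produce this norm convergence I would average the approximate identity (\ref{G5E2}) over a thin neighborhood of the diagonal $\{\epsilon_1=\epsilon_2\}$, using both the two-dimensional weak product convergence established in stage one (which identifies the right-hand side's limit with $Q_*(\epsilon_3)Q_*(\epsilon_4)$) and the continuity together with strict positivity of the exponential $Q_*$ to recover $\int Q_*^2$ from the restriction to the diagonal.

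The main obstacle is stage three. The algebraic relation $Q_*(\epsilon_1)Q_*(\epsilon_2)=Q_*(\epsilon_3)Q_*(\epsilon_4)$ determines $Q_*$ uniquely yet does not obviously prevent pointwise oscillations of $Q(t_n,\cdot)$. The key technical input will be a mollification argument converting the algebraic rigidity into a quantitative $L^2$ norm estimate on a thin slab around the diagonal, using crucially that $Q_*$ is bounded away from zero on $[R_1,R_2]$ so that division permits an effective inversion of the product structure.
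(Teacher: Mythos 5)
Your three-stage plan takes a genuinely different route from the paper, and while stages one and two can probably be made to work with more care, stage three has a real gap that your own argument cannot close.

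Your stage one is fine in outline: the weak-$*$ tensor convergence $Q(t_n,\epsilon_i)Q(t_n,\epsilon_j)\rightharpoonup Q_*(\epsilon_i)Q_*(\epsilon_j)$ in $L^\infty(\mathbb{R}_+^2)$ does follow from density of tensor test functions and the uniform bound $0\le Q\le 1$, and combined with the strong $L^1$ smallness in (\ref{G5E2}) this identifies the two weak limits and yields the detailed-balance equation for $Q_*$ almost everywhere on the slab. Stage two is where matters start to be delicate: you invoke Cauchy's equation \emph{on the positivity set} of $Q_*$ and then assert that the positivity set is all of $[R_1,R_2]$ ``since exponentials have no zeros.'' That is circular; the hypothesis is only $Q_*\not\equiv 0$ on $[R_1,R_2]$, and you must first propagate positivity (e.g.\ via a Steinhaus-type argument) before the logarithm is legitimate. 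The paper handles precisely this subtlety through the threshold $R_*=\sup\{r:\int_{R_1}^rQ_*=0\}$ and the shifted lower endpoint $\bar R=\max\{R_1,\theta R_*\}$, and only \emph{afterwards} shows $R_*=R_1$ by contradiction with the explicit exponential form.

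The serious problem is stage three, as you partly anticipate. The 2D weak-$*$ convergence $Q(t_n)\otimes Q(t_n)\rightharpoonup Q_*\otimes Q_*$ simply does not see the diagonal, which has 2D Lebesgue measure zero. A standard oscillating example (e.g.\ $Q_n=\tfrac{1}{2}(1+\sin n\epsilon)$) satisfies the 2D tensor weak convergence to $\tfrac14$ yet $\int Q_n^2\to\tfrac38\ne\tfrac14$; so weak tensor convergence alone cannot produce the norm convergence $\int Q(t_n)^2\to\int Q_*^2$ you need for Radon--Riesz. Your proposed fix is to squeeze the approximate identity (\ref{G5E2}) onto a diagonal slab of thickness $\delta$ and normalize by $1/\delta$, but (\ref{G5E2}) is an $L^1$ bound over the \emph{whole} slab; it gives no control, uniform in $\delta$, on the $\delta$-normalized restriction, and the two limits $\delta\to 0$ and $n\to\infty$ do not commute for free. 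Filling this in would essentially require a separate quantitative estimate, and it is not clear your route supplies one.

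The paper sidesteps all three difficulties at once. Instead of passing to a functional equation for the limit $Q_*$, it keeps working with the prelimit $Q(t_n,\cdot)$. It locates a single $\epsilon_*$ where $Q(t_n,\epsilon_*)\to\eta>0$, freezes $\epsilon_3=\epsilon_*$ in (\ref{G5E2}), and integrates in $\epsilon_2$: the primitive $G_n(x)=\int_x^\infty Q(t_n,\epsilon_*+\xi)\,d\xi$ then satisfies a first-order linear ODE $G_n'+\beta_nG_n=\lambda_n$ with $\lambda_n\to 0$ in $L^1$ and $\beta_n\to\beta_*>0$. The explicit Duhamel solution of this ODE gives uniform convergence of $G_n$ to an exponential, hence strong $L^1$ convergence of $H_n=-G_n'$, hence of $Q(t_n,\cdot)$ --- the strong convergence you were struggling to obtain in stage three comes out automatically, and the exponential form of $Q_*$ is identified in the same step rather than solved from a Cauchy equation. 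If you want to complete your argument you would do better to adopt this ODE mechanism than to pursue the diagonal-mollification route.
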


\begin{proof}
Let us define:%
\begin{equation}
R_{\ast}=\sup\left\{  r\in\left[  R_{1},R_{2}\right]  :\int_{R_{1}}^{r}%
Q_{\ast}d\epsilon=0\right\} \label{G5E3}%
\end{equation}

Since $Q_{\ast}\not \equiv 0$ in the interval $\left[  R_{1},R_{2}\right]  $
we have $R_{\ast}<R_{2}.$ We then define also:
\[
\mathcal{Q}_{n}\left(  \epsilon_{3}\right)  =\int_{\bar{R}}^{\infty}\int
_{\bar{R}}^{\infty}\left\vert Q\left(  t_{n},\epsilon_{3}\right)  Q\left(
t_{n},\epsilon_{1}+\epsilon_{2}-\epsilon_{3}\right)  -Q\left(  t_{n}%
,\epsilon_{1}\right)  Q\left(  t_{n},\epsilon_{2}\right)  \right\vert
d\epsilon_{1}d\epsilon_{2}%
\]
where $\bar{R}=\max\left\{  R_{1},\theta R_{\ast}\right\}  ,\ $where
$\theta<1$ is very close to one, to be determined later.

Using Lemma \ref{FstarNonzero}, and more precisely (\ref{G5E2}), with
$R_{1}=\bar{R},$ it follows that there exists a set $\mathcal{V}\subset\left[
\frac{\bar{R}}{2},\frac{3\bar{R}}{2}\right]  $ with measure $\left\vert
\mathcal{V}\right\vert =\bar{R}$ and a subsequence of $\left\{  t_{n}\right\}
,$ which will be labelled with the same indexes, such that:%
\begin{equation}
\mathcal{Q}_{n}\left(  \epsilon_{3}\right)  \rightarrow
0\ \ \,,\ \ n\rightarrow\infty\ \ \text{for any }\epsilon_{3}\in
\mathcal{V}\label{G5E8}%
\end{equation}

The definition of $R_{\ast}$ and the assumptions of the Lemma imply the
existence of a $\delta>0$ such that
\begin{equation}
\int_{R_{\ast}+\delta}^{R_{2}}Q_{\ast}\left(  \epsilon\right)  d\epsilon
>0\label{G5E8a}%
\end{equation}

We now claim that there exists $\epsilon_{\ast}\in\mathcal{V}\cap\left\{
R_{\ast}\leq\epsilon\leq R_{\ast}+\delta\right\}  $ and a subsequence of
$\left\{  t_{n}\right\}  $, labelled with the same indexes such that:%
\begin{equation}
\lim_{n\rightarrow\infty}Q\left(  t_{n},\epsilon_{\ast}\right)  =\eta
>0\label{G5E4}%
\end{equation}

Indeed, otherwise we would have $\lim_{n\rightarrow\infty}Q\left(
\epsilon,t_{n}\right)  =0$ for any $\epsilon\in\mathcal{V}\cap\left\{
R_{\ast}\leq\epsilon\leq R_{\ast}+\delta\right\}  $. Lebesgue's dominated
convergence theorem, combined with (\ref{G4E2}) would imply that $Q_{\ast
}\left(  \epsilon\right)  =0$ $a.e.\ \epsilon\in\left[  \frac{\bar{R}}%
{2},R_{\ast}+\delta\right]  ,$ but this would contradict the definition of
$R_{\ast}$ in (\ref{G5E3}).

Therefore:%
\[
\int_{\bar{R}}^{\infty}\int_{\bar{R}}^{\infty}\left\vert Q\left(
t_{n},\epsilon_{\ast}\right)  Q\left(  t_{n},\epsilon_{1}+\epsilon
_{2}-\epsilon_{\ast}\right)  -Q\left(  t_{n},\epsilon_{1}\right)  Q\left(
t_{n},\epsilon_{2}\right)  \right\vert d\epsilon_{1}d\epsilon_{2}%
\rightarrow0\ \ \text{as\ \ }n\rightarrow\infty
\]

Writing $x=\epsilon_{1}-\epsilon_{\ast},$ $y=\epsilon_{2}-\epsilon_{\ast}$ and
$H_{n}\left(  x\right)  =Q\left(  t_{n},\epsilon_{\ast}+x\right)  $ it then
follows that:%
\begin{equation}
\int_{\bar{R}-\epsilon_{\ast}}^{\infty}\int_{\bar{R}-\epsilon_{\ast}}^{\infty
}\left\vert Q\left(  t_{n},\epsilon_{\ast}\right)  H_{n}\left(  x+y\right)
-H_{n}\left(  x\right)  H_{n}\left(  y\right)  \right\vert dxdy\rightarrow
0\ \ \text{as\ \ }n\rightarrow\infty\label{G5E5}%
\end{equation}

We define:%
\begin{equation}
G_{n}\left(  x\right)  =\int_{x}^{\infty}H_{n}\left(  \xi\right)
d\xi\label{G5E6}%
\end{equation}

Note that this integral is finite for any $n,$ due to (\ref{G4E2}). Moreover,
they are uniformly bounded by a constant $C=C\left(  M,R_{1}\right)  <\infty.$
Then:\
\begin{align*}
& \int_{\bar{R}-\epsilon_{\ast}}^{\infty}\left\vert Q\left(  t_{n}%
,\epsilon_{\ast}\right)  G_{n}\left(  x\right)  -H_{n}\left(  x\right)
G_{n}\left(  0\right)  \right\vert dx\\
& \leq\int_{\bar{R}-\epsilon_{\ast}}^{\infty}\int_{0}^{\infty}\left\vert
Q\left(  t_{n},\epsilon_{\ast}\right)  H_{n}\left(  x+y\right)  -H_{n}\left(
x\right)  H_{n}\left(  y\right)  \right\vert dydx\\
& \leq\int_{\bar{R}-\epsilon_{\ast}}^{\infty}\int_{\bar{R}-\epsilon_{\ast}%
}^{\infty}\left\vert Q\left(  t_{n},\epsilon_{\ast}\right)  H_{n}\left(
x+y\right)  -H_{n}\left(  x\right)  H_{n}\left(  y\right)  \right\vert dy
dx\rightarrow0
\end{align*}
where we use that $\epsilon_{\ast}\geq R_{\ast}\geq\bar{R}.$

Defining:%
\begin{equation}
\tilde{\lambda}_{n}\left(  x\right)  =Q\left(  t_{n},\epsilon_{\ast}\right)
G_{n}\left(  x\right)  -H_{n}\left(  x\right)  G_{n}\left(  0\right)
\ \label{G5E7}%
\end{equation}
it then follows that:%
\begin{equation}
\int_{\bar{R}-\epsilon_{\ast}}^{\infty}\left\vert \tilde{\lambda}_{n}\left(
x\right)  \right\vert dx\rightarrow0\ \ \text{as\ \ }n\rightarrow
\infty\label{G5E7a}%
\end{equation}

Due to (\ref{G4E2}) we have $G_{n}\in W^{1,\infty}\left(  R_{\ast}%
-\epsilon_{\ast},\infty\right)  .$ Then (\ref{G5E7}) implies:%
\begin{equation}
\tilde{\lambda}_{n}\left(  x\right)  =Q\left(  t_{n},\epsilon_{\ast}\right)
G_{n}\left(  x\right)  +G_{n}\left(  0\right)  G_{n}^{\prime}\left(  x\right)
\ \ ,\ \ a.e.\ x\geq\bar{R}-\epsilon_{\ast}\label{G5E9}%
\end{equation}
Notice also that the weak convergence of the sequence $Q\left(  \cdot
,t_{n}\right)  $ implies that $G_{n}\left(  0\right)  \rightarrow
\int_{\epsilon_{\ast}}^{\infty}Q_{\ast}\left(  \epsilon\right)  d\epsilon$ as
$n\rightarrow\infty.$ Due to (\ref{G5E8a}) we have $\int_{\epsilon_{\ast}%
}^{\infty}Q_{\ast}\left(  \epsilon\right)  d\epsilon>0.$ Let us write
$\lambda_{n}\left(  x\right)  =\frac{\tilde{\lambda}_{n}\left(  x\right)
}{G_{n}\left(  0\right)  },$ $\beta_{n}=\frac{Q\left(  t_{n},\epsilon_{\ast
}\right)  }{G_{n}\left(  0\right)  }.$ Due to (\ref{G5E4}), (\ref{G5E7a}) and
the fact that $\lim_{n\rightarrow\infty}G_{n}\left(  0\right)  >0$, we obtain:%
\begin{equation}
\int_{\bar{R}-\epsilon_{\ast}}^{\infty}\left\vert \lambda_{n}\left(  x\right)
\right\vert dx\rightarrow0\ \ ,\ \ \beta_{n}\rightarrow\beta_{\ast
}>0\ \ \text{as\ \ }n\rightarrow\infty\ \ \ \label{G6E1}%
\end{equation}%
\begin{equation}
G_{n}^{\prime}\left(  x\right)  +\beta_{n}G_{n}\left(  x\right)  =\lambda
_{n}\left(  x\right)  \ \label{G6E2}%
\end{equation}

Integrating (\ref{G6E2}):%
\begin{equation}
G_{n}\left(  x\right)  =A_{n}e^{-\beta_{n}x}+\int_{x}^{\infty}e^{-\beta
_{n}\left(  x-y\right)  }\lambda_{n}\left(  y\right)  dy\ \ ,\ \ x\geq\bar
{R}-\epsilon_{\ast}\ \label{G6E3}%
\end{equation}
for suitable constants $A_{n}\in\mathbb{R}$. Due to (\ref{G6E1}) the integral
term in (\ref{G6E3}) converges to zero, uniformly in the set $\left[
0,\infty\right)  .$ Since $G_{n}\left(  0\right)  $ is uniformly bounded it
then follows that the sequence $\left\{  A_{n}\right\}  $ is bounded. Taking a
new subsequence if needed, it then follows that $A_{n}\rightarrow A_{\ast}$ as
$n\rightarrow\infty,$ whence:%
\[
G_{n}\left(  x\right)  \rightarrow A_{\ast}e^{-\beta_{\ast}x}\ \ ,\ \ x\geq
\bar{R}-\epsilon_{\ast}%
\]

Using (\ref{G6E2}) we obtain that $H_{n}\rightarrow A_{\ast}\beta_{\ast
}e^{-\beta_{\ast}x}$ in $L^{1}\left(  \bar{R}-\epsilon_{\ast},\bar{L}\right)
$ for any $\bar{L}$ fixed, sufficiently large. Therefore
\begin{equation}
Q\left(  t_{n},\epsilon\right)  \rightarrow\exp\left(  -\beta_{\ast}\left(
\epsilon+\alpha\right)  \right)  \ \ \text{as\ \ }n\rightarrow\infty
\ \ ,\ \ \alpha\in\mathbb{R\ }\label{G6E5}%
\end{equation}
in $L^{1}\left(  \bar{R},L\right)  $ with $L$ large. We now consider two
cases. If $R_{\ast}=R_{1},$ (\ref{G6E5}) would imply (\ref{G6E4}). Otherwise,
we choose $\theta<1$ sufficiently close to one to have $R_{1}\leq\theta
R_{\ast}<R_{\ast}.$ Then (\ref{G6E5}) would imply $Q_{\ast}\left(
\epsilon\right)  =\exp\left(  -\beta_{\ast}\left(  \epsilon+\alpha\right)
\right)  $ and this would contradict the definition of $R_{\ast}$ in
(\ref{G5E3}). Therefore $\bar{R}=R_{\ast}=R_{1}.$ Using that $Q_{\ast}\left(
\epsilon\right)  \leq1$ for $\epsilon\geq R_{1},$ it follows that $\alpha
\geq-R_{1}$ and this concludes the Proof.
\end{proof}

\begin{lemma}
\label{LemmL1}Suppose that $f_{0}$ as in the statement of Proposition
\ref{ConcMass}. Let $0<R_{1}\leq1<R_{2}<\infty$ and $\left\{  t_{n}\right\}  $
as in Lemma \ref{maxwellian}. Let us write $f_{\ast}\left(  \epsilon\right)
=\frac{1}{\exp\left(  \beta_{\ast}\left(  \epsilon+\alpha\right)  \right)
-1}$ where $\alpha,\ \beta_{\ast}$ are as in Lemma \ref{maxwellian}. Then, for
a suitable subsequence of $\left\{  t_{n}\right\}  $ which will be labelled
with the same indexes and for any $\delta>0$ small:%
\[
f\left(  t_{n},\cdot\right)  \rightarrow f_{\ast}\left(  \cdot\right)
\ \ \text{in\ }L^{1}\left(  \left(  R_{1},R_{2}\right)  \cap\left\{  Q_{\ast
}<1-5\delta\right\}  \right)  \ \ \text{as\ \ }n\rightarrow\infty
\]
with $Q_{\ast}$ as in Lemma \ref{maxwellian}.
\end{lemma}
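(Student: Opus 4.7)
The goal is to promote the $L^{1}(R_{1},R_{2})$ convergence $Q(t_{n})\to Q_{*}$ obtained in Lemma \ref{maxwellian} to an $L^{1}$ convergence of the full distribution $f(t_{n})=Q(t_{n})/(1-Q(t_{n}))$ to $f_{*}=Q_{*}/(1-Q_{*})$ on the truncated set
$$ A_{*}\ :=\ (R_{1},R_{2})\cap\{Q_{*}<1-5\delta\}, $$
where $Q_{*}$ is uniformly bounded away from $1$ (hence $f_{*}\leq (1-5\delta)/(5\delta)<1/(5\delta)$ on $A_{*}$). First I would extract a further subsequence, still denoted $\{t_{n}\}$, along which $Q(t_{n},\epsilon)\to Q_{*}(\epsilon)$ pointwise for a.e.\ $\epsilon\in(R_{1},R_{2})$. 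Since the map $\phi(Q)=Q/(1-Q)$ is continuous on $[0,1)$ and $Q_{*}(\epsilon)<1-5\delta$ on $A_{*}$, this yields $f(t_{n},\epsilon)\to f_{*}(\epsilon)$ for a.e.\ $\epsilon\in A_{*}$.

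Next I would split $A_{*}=A_{n}\sqcup B_{n}$, where $A_{n}:=A_{*}\cap\{Q(t_{n})\leq 1-4\delta\}$ and $B_{n}:=A_{*}\setminus A_{n}$. On $B_{n}$ one has $|Q(t_{n})-Q_{*}|\geq\delta$, hence
$$ \delta\,|B_{n}|\ \leq\ \int_{B_{n}}|Q(t_{n})-Q_{*}|\,d\epsilon\ \leq\ \|Q(t_{n})-Q_{*}\|_{L^{1}(R_{1},R_{2})}\ \longrightarrow 0, $$
so $|B_{n}|\to 0$. On $A_{n}$, $f(t_{n})\leq (1-4\delta)/(4\delta)<1/(4\delta)$ uniformly, and for a.e.\ $\epsilon\in A_{*}$ one has $\epsilon\in A_{n}$ for all sufficiently large $n$ (because $Q(t_{n},\epsilon)\to Q_{*}(\epsilon)<1-5\delta$). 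Applied to $f(t_{n})\mathbf{1}_{A_{n}}$, dominated convergence with dominant $1/\delta$ gives both
$$ \int_{A_{n}}|f(t_{n})-f_{*}|\,d\epsilon\ \longrightarrow 0\qquad\text{and}\qquad \int_{B_{n}}f_{*}\,d\epsilon\ \longrightarrow 0. $$

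The remaining and central task is to show $\int_{B_{n}}f(t_{n})\,d\epsilon\to 0$. To handle this I would use a truncation $\chi_{K}(x):=\min\{x,K\}$ and the splitting
$$ \int_{A_{*}}|f(t_{n})-f_{*}|\,d\epsilon\ \leq\ \int_{A_{*}}|\chi_{K}(f(t_{n}))-f_{*}|\,d\epsilon\ +\ \int_{A_{*}}(f(t_{n})-K)_{+}\,d\epsilon. $$
For $K>1/(5\delta)$ we have $\chi_{K}(f_{*})=f_{*}$ on $A_{*}$, and because $\chi_{K}\circ\phi$ extends to a bounded continuous function on $[0,1]$ (with value $K$ at $Q=1$), the first integral tends to $0$ as $n\to\infty$ by dominated convergence based on the a.e.\ convergence of $Q(t_{n})$. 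Thus the proof reduces to proving the equi-integrability tail estimate
$$ \lim_{K\to\infty}\sup_{n}\int_{A_{*}}(f(t_{n})-K)_{+}\,d\epsilon\ =\ 0. $$

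The hardest step is precisely this equi-integrability at infinity, because the sets $B_{n}$ could in principle carry nonnegligible mass of $f(t_{n})$ while shrinking in measure. To establish it I would combine three ingredients: (i) conservation of mass, giving $\int_{0}^{\infty}\sqrt{\epsilon}\,f(t_{n})\,d\epsilon$ bounded uniformly and hence $\|f(t_{n})\|_{L^{1}(A_{*})}\leq C(M,R_{1})$; (ii) the uniform entropy bound $|S[f(t_{n})]|\leq C(M,E)$ from Proposition \ref{Entropy}, which controls $\int \log(1+f(t_{n}))\sqrt{\epsilon}\,d\epsilon$ uniformly and so yields $|\{f(t_{n})>K\}\cap A_{*}|\to 0$ as $K\to\infty$, uniformly in $n$; and (iii) the vanishing of the dissipation $D[f(t_{n})]\to 0$ ensured by the choice of the sequence (Lemma \ref{LemmaSubSeq}), which forces $f(t_{n})$ to be close to the Maxwellian $f_{*}$ on $A_{*}$ in an integrated sense and rules out concentration of mass of $f(t_{n})$ on the shrinking sets $B_{n}$. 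Combining (i)-(iii) with the a.e.\ pointwise convergence from Step~1 produces the required tail bound, and the proof is then concluded by Brezis-Lieb (or directly through the truncation splitting above). The technical heart of the argument is extracting from the small-dissipation information enough uniform pointwise control on $f(t_{n})$ over $A_{*}$ to close the equi-integrability estimate.
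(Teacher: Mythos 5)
Your overall structure matches the paper's: split $A_{*}$ according to whether $Q(t_{n})$ is close to $1$ or not, show the ``bad'' set $B_{n}$ has vanishing measure, show $\int_{B_{n}}f_{*}\to0$, and reduce everything to the one hard claim $\int_{B_{n}}f(t_{n})\,d\epsilon\to0$. That reduction is carried out correctly. But the hard claim itself is not proved; it is only asserted.

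Your ingredient (ii) cannot do the job on its own: the entropy bound controls $\int\log(1+f(t_{n}))\sqrt{\epsilon}\,d\epsilon$, and by Chebyshev this only gives $|\{f(t_{n})>K\}\cap A_{*}|\lesssim C/\log K$, a bound on the \emph{measure} of the super-level set, not on the mass it carries. That is precisely the failure mode you worry about (mass of $f(t_{n})$ concentrating on a shrinking set), and nothing in (i)+(ii) excludes it. Ingredient (iii) is the right tool but you never extract a quantitative estimate from it; ``rules out concentration'' is the conclusion restated, not a proof. The actual mechanism in the paper is a quadratic lower bound on the dissipation: restrict the integral in (\ref{G4E5}) to $\epsilon_{1},\epsilon_{2}\in\mathcal{I}\cap\mathcal{B}_{n}$ (where $Q(t_{n},\epsilon_{1}),Q(t_{n},\epsilon_{2})\ge1-\delta$) and $\epsilon_{3}$ in a fixed set $\mathcal{A}$ obtained from Egoroff applied to the $L^{1}$-convergence of Lemma \ref{maxwellian} (on which $Q(t_{n},\epsilon_{3})\le1-4\delta$ for $n$ large, uniformly). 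On that region the argument of $\Psi$ satisfies
\[
\frac{Q(t_{n},\epsilon_{3})Q(t_{n},\epsilon_{4})}{Q(t_{n},\epsilon_{1})Q(t_{n},\epsilon_{2})}-1
\ \le\ \frac{1-4\delta}{(1-\delta)^{2}}-1\ \le\ -c\delta\ <\ 0,
\]
so $\Psi(\cdot)\ge c_{0}(\delta)>0$, and after controlling $\Phi$ from below one obtains $c_{0}|\mathcal{A}|\bigl(\int_{\mathcal{I}\cap\mathcal{B}_{n}}f(t_{n})\,d\epsilon\bigr)^{2}\to0$, hence $\int_{\mathcal{I}\cap\mathcal{B}_{n}}f(t_{n})\,d\epsilon\to0$. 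Without this (or some equivalent quantitative use of small dissipation), your ``equi-integrability at infinity'' step remains unproved, and the truncation/Brezis--Lieb wrapper around it does not fill the gap.
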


\begin{remark}
Notice that the restriction $\left\{  Q_{\ast}<1-5\delta\right\}  $ applies
only to the limit distribution $Q_{\ast},$ and not to the sequence $Q\left(
t_{n},\cdot\right)  $.
\end{remark}

\begin{proof}
Taking a subsequence if needed we can assume that the convergence $Q\left(
t_{n},\cdot\right)  \rightarrow Q_{\ast}\left(  \cdot\right)  $ in Lemma
\ref{maxwellian} takes place for $a.e.\ \epsilon\in\left(  R_{1},R_{2}\right)
.$ Let us denote as $\mathcal{I}$ the set \ $\mathcal{I}=\left(  R_{1}%
,R_{2}\right)  \cap\left\{  \epsilon\, ;\, Q_{\ast}\left(  \epsilon\right)
<1-5\delta\right\}  .$ We estimate the $L^{1}$ norm of $f\left(  t_{n}%
,\cdot\right)  -f_{\ast}\left(  \cdot\right)  $ as follows:%
\[
\int_{\mathcal{I}}\left\vert f\left(  t_{n},\epsilon\right)  -f_{\ast}\left(
\epsilon\right)  \right\vert d\epsilon=J_{1,n}+J_{2,n}%
\]%
\begin{align}
& J_{1,n}=\int_{\mathcal{I}\cap\mathcal{B}_{n}}\left\vert f\left(
t_{n},\epsilon\right)  -f_{\ast}\left(  \epsilon\right)  \right\vert
d\epsilon\label{G7E2a}\\
& J_{2,n}=\int_{\mathcal{I}\cap\mathcal{B}_{n}^{c} }\left\vert f\left(
t_{n},\epsilon\right)  -f_{\ast}\left(  \epsilon\right)  \right\vert
d\epsilon\label{G7E2b}\\
& \mathcal{B}_{n}= \left\{  \epsilon>0\, ;\, Q\left(  t_{n},\epsilon\right)
\geq1-\delta\right\}
\end{align}

The sequence $J_{2,n}$ converges to zero as $n\rightarrow\infty$ due to
(\ref{G4E2a}) as well as the fact that the function $\frac{1}{\left(
1-Q\left(  t_{n},\epsilon\right)  \right)  \left(  1-Q_{\ast}\left(
\epsilon\right)  \right)  }$ is bounded in the corresponding integration region.

To estimate $J_{1,n}$ we use Lemma \ref{LemmaSubSeq}. Then:%
\[
\int_{\mathcal{I}\cap\mathcal{B}_{n} }\int_{\mathcal{I}\cap\mathcal{B}_{n}
}\int_{\mathcal{I}}f\left(  t_{n},\epsilon_{1}\right)  f\left(  t_{n}%
,\epsilon_{2}\right)  \Psi\left(  \frac{Q\left(  t_{n},\epsilon_{3}\right)
Q\left(  t_{n},\epsilon_{4}\right)  }{Q\left(  t_{n},\epsilon_{1}\right)
Q\left(  t_{n},\epsilon_{2}\right)  }-1\right)  \Phi d\epsilon_{3}
d\epsilon_{1}d\epsilon_{2}\rightarrow0\ \ \text{as\ \ }n\rightarrow\infty\
\]

Notice that $\Phi$ can be estimated from below uniformly in $n$ if, say,
\begin{equation}
R_{1}\leq\epsilon_{3}\leq\frac{2}{3}\left(  \epsilon_{1}+\epsilon_{2}\right)
\ \label{G7E1}%
\end{equation}

On the other hand, since we integrate in $\epsilon_{3}$ in $\mathcal{I}$ we
need to ensure that the domain where (\ref{G7E1}) holds has an intersection
with $\mathcal{I}$ whose measure can be estimated from below. This can be seen
because the values of $\epsilon_{1},\ \epsilon_{2}$ must be also in the
interval $\mathcal{I}$. Notice that $\mathcal{I}=\left(  \bar{\epsilon}%
,R_{2}\right)  $ for some $\bar{\epsilon}$ depending on $\delta,R_{1}.$
Therefore we obtain that the region of integration for $\epsilon_{3}$ can be
replaced by a set $\left(  \bar{\epsilon},\frac{4\bar{\epsilon}}{3}\right)  .$
Notice that this set is contained in $\mathcal{I}$. Then:%
\[
\int_{\mathcal{I}\cap\mathcal{B}_{n} }\int_{\mathcal{I}\cap\mathcal{B}_{n}
}\int_{\left(  \bar{\epsilon},\frac{4\bar{\epsilon}}{3}\right)  }f\left(
t_{n},\epsilon_{1}\right)  f\left(  t_{n},\epsilon_{2}\right)  \Psi\left(
\frac{Q\left(  t_{n},\epsilon_{3}\right)  Q\left(  t_{n},\epsilon_{4}\right)
}{Q\left(  t_{n},\epsilon_{1}\right)  Q\left(  t_{n},\epsilon_{2}\right)
}-1\right)  d\epsilon_{3} d\epsilon_{1}d\epsilon_{2}\rightarrow
0\ \ \text{as\ \ }n\rightarrow\infty
\]
due to the fact that we have a lower estimate for $\Phi$ independent on $n$
for $\epsilon_{3}\in\left(  \bar{\epsilon},\frac{4\bar{\epsilon}}{3}\right)
.$ We now use the convergence of $Q\left(  t_{n},\cdot\right)  $ to $Q_{\ast
}\left(  \cdot\right)  $ in $L^{1}\left(  \bar{\epsilon},\frac{4\bar{\epsilon
}}{3}\right)  $ which is a Corollary of Lemma \ref{maxwellian}. Egoroff's
Theorem shows that there exists a set $\mathcal{A}\subset\left(  \bar
{\epsilon},\frac{4\bar{\epsilon}}{3}\right)  $ with a measure arbitrarily
close to $\frac{\bar{\epsilon}}{3}$ where $Q\left(  t_{n},\cdot\right)
\rightarrow Q_{\ast}\left(  \cdot\right)  $ uniformly. Then:%
\[
\int_{\mathcal{I}\cap\mathcal{B}_{n} }\int_{\mathcal{I}\cap\mathcal{B}_{n}%
}\int_{\mathcal{A}}f\left(  t_{n},\epsilon_{1}\right)  f\left(  t_{n}%
,\epsilon_{2}\right)  \Psi\left(  \frac{Q\left(  t_{n},\epsilon_{3}\right)
Q\left(  t_{n},\epsilon_{4}\right)  }{Q\left(  t_{n},\epsilon_{1}\right)
Q\left(  t_{n},\epsilon_{2}\right)  }-1\right)  d\epsilon
_{1}d\epsilon_{2}d \epsilon_{3}\rightarrow0\ \ \text{as\ \ }n\rightarrow\infty
\]
and assuming that $n$ is sufficiently large we would have:%
\[
\frac{Q\left(  t_{n},\epsilon_{3}\right)  Q\left(  t_{n},\epsilon_{4}\right)
}{Q\left(  t_{n},\epsilon_{1}\right)  Q\left(  t_{n},\epsilon_{2}\right)
}\leq\frac{\left(  1-5\delta\right)  }{\left(  1-\delta\right)  ^{2}}%
\leq\left(  1-\delta\right)
\]
if $\delta$ is small. Then, $\Psi\left(  \frac{Q\left(  t_{n},\epsilon
_{3}\right)  Q\left(  t_{n},\epsilon_{4}\right)  }{Q\left(  t_{n},\epsilon
_{1}\right)  Q\left(  t_{n},\epsilon_{2}\right)  }-1\right)  \geq c_{0}>0$
independent of $n,$ whence:%
\[
c_{0}\left\vert \mathcal{A}\right\vert \left(  \int_{\mathcal{I}%
\cap\mathcal{B}_{n} }f\left(  t_{n},\epsilon\right)  d\epsilon\right)
^{2}\rightarrow0\ \ \text{as\ \ }n\rightarrow\infty
\]

Then:%
\begin{equation}
\int_{\mathcal{I}\cap\mathcal{B}_{n} }f\left(  t_{n},\epsilon\right)
d\epsilon\rightarrow0\ \ \text{as\ \ }n\rightarrow\infty\label{G7E3}%
\end{equation}

On the other hand, since $Q_{\ast}\left(  \epsilon\right)  \leq\left(
1-5\delta\right)  $ for $\epsilon\in\mathcal{I}$, it follows that:%
\begin{equation}
\left\vert \mathcal{I}\cap\mathcal{B}_{n} \right\vert \rightarrow0\ \ \text{as
}n\rightarrow\infty\ \label{G7E4}%
\end{equation}
due to Lebesgue's Theorem. We can take the limit of the sequence $J_{1,n}$ as:%
\[
J_{1,n}\leq\int_{\mathcal{I}\cap\mathcal{B}_{n} }f\left(  t_{n},\epsilon
\right)  d\epsilon+\int_{\mathcal{I}\cap\mathcal{B}_{n} }f_{\ast}\left(
\epsilon\right)  d\epsilon\rightarrow0
\]
due to the fact that $f_{\ast}\left(  \epsilon\right)  $ is bounded in
$\mathcal{I}$ and using (\ref{G7E3}), (\ref{G7E4}).
\end{proof}

\begin{proof}
[Proof of Proposition \ref{ConcMass}]Given $f_{0}$ as in the statement of
Proposition \ref{ConcMass}, we have that the stationary solution
$F_{BE}\left(  p;\alpha,\beta,m_{0}\right)  $ in Proposition \ref{crit} having
the number of particles $M$ and energy $E$ satisfies $m_{0}>0.$ We define
$m_{\ast}=\frac{m_{0}}{2}.$ Due to the Lemma \ref{LemmaL} we can select $L>0$
such that:%
\begin{equation}
\int_{L}^{\infty}g\left(  t,\epsilon\right)  d\epsilon=4\pi\int_{L}^{\infty
}f\left(  t,\epsilon\right)  \sqrt{2\epsilon}d\epsilon\leq\frac{m_{0}}%
{10}\ \label{G5E1}%
\end{equation}
for any $t\in\left[  0,T_{\max}\right]  .$

We now apply Lemma \ref{LemmaSubSeq}. Then (\ref{G4E6}) holds. We have now two
possibilities. Suppose first that $Q_{\ast}\equiv0$ in $\left[  \frac{R}%
{2},\infty\right)  $ $.$ Then Lemma \ref{R1R2} with $R_{1}=\frac{R}{2}$ and
$R_{2}=L,$ combined with (\ref{G5E1}) imply that:%
\[
\int_{R}^{\infty}g\left(  t_{n},\epsilon\right)  d\epsilon=\int_{R}%
^{L}g\left(  t_{n},\epsilon\right)  d\epsilon+\int_{L}^{\infty}g\left(
t_{n},\epsilon\right)  d\epsilon\leq\int_{\frac{R}{2}}^{L}g\left(
t_{n},\epsilon\right)  d\epsilon+\frac{m_{0}}{10}\leq\frac{m_{0}}{5}%
\]
if $n$ is sufficiently large. Then:%
\[
\int_{0}^{R}g\left(  t_{n},\epsilon\right)  d\epsilon\geq\frac{4m_{0}}%
{5}>m_{\ast}%
\]
whence the result follows in this case.

Suppose now that $Q_{\ast}\not \equiv 0$ in $\left[  \frac{R}{2}%
,\infty\right)  .$ We then set $R_{1}=\frac{R}{2},$ $R_{2}=L.$ We choose
$\delta>0$ small and apply Lemma \ref{LemmL1} to obtain:%
\begin{equation}
\int_{\left(  R,L\right)  \cap\left\{  Q_{\ast}<1-5\delta\right\}  }g\left(
t_{n},\epsilon\right)  d\epsilon\rightarrow4\pi\int_{\left(  R,L\right)
\cap\left\{  Q_{\ast}<1-5\delta\right\}  }f_{\ast}\left(  \epsilon\right)
\sqrt{2\epsilon}d\epsilon\ \label{G7E5}%
\end{equation}%
\[
\int_{\left(  R,L\right)  \cap\left\{  Q_{\ast}<1-5\delta\right\}  }g\left(
t_{n},\epsilon\right)  \epsilon d\epsilon\rightarrow4\pi\int_{\left(
R,L\right)  \cap\left\{  Q_{\ast}<1-5\delta\right\}  }f_{\ast}\left(
\epsilon\right)  \sqrt{2\epsilon^{3}}d\epsilon
\]
as $n\rightarrow\infty.$ (Notice that the integrations are made in the
interval $\left(  R,L\right)  $ in spite of the fact that the convergence in
Lemma \ref{maxwellian} is in $\epsilon\geq\frac{R}{2}.$ In particular
$\alpha\geq-\frac{R}{2}$).

We now claim that the right-hand side of (\ref{G7E5}) can be made smaller than
$M$ plus some small error term if $\delta$ and $R$ are small. Actually taking
$\delta$ much smaller than $\frac{R}{2}$ we would obtain that $\left(
R,L\right)  \cap\left\{  Q_{\ast}<1-5\delta\right\}  =\left(  R,L\right)  .$
Then:%
\begin{equation}
\int_{\left(  R,L\right)  }g\left(  t_{n},\epsilon\right)  d\epsilon
\rightarrow4\pi\int_{\left(  R,L\right)  }f_{\ast}\left(  \epsilon\right)
\sqrt{2\epsilon}d\epsilon\label{G9E1}%
\end{equation}%
\[
\int_{\left(  R,L\right)  }g\left(  t_{n},\epsilon\right)  \epsilon
d\epsilon\rightarrow4\pi\int_{\left(  R,L\right)  }f_{\ast}\left(
\epsilon\right)  \sqrt{2\epsilon^{3}}d\epsilon
\]

We now use that $\int_{\left(  R,L\right)  }g\left(  t_{n},\epsilon\right)
\epsilon d\epsilon\leq E,$ whence:%
\[
4\pi\int_{\left(  R,L\right)  }f_{\ast}\left(  \epsilon\right)  \sqrt
{2\epsilon^{3}}d\epsilon\leq E
\]

On the other hand, using (\ref{G5E1}) as well as the fact that $\int
gd\epsilon=M$ we obtain that:%
\[
\int_{\left(  0,R\right)  }g\left(  t_{n},\epsilon\right)  d\epsilon
+\int_{\left(  R,L\right)  }g\left(  t_{n},\epsilon\right)  d\epsilon\geq
M-\frac{m_{0}}{10}%
\]

Using (\ref{G9E1}) as well as Proposition \ref{MassCrit} we obtain, assuming
that $n$ is large and taking $\delta=\frac{m_{0}}{10}$ in Proposition
\ref{MassCrit}:%
\[
\int_{\left(  R,L\right)  }g\left(  t_{n},\epsilon\right)  d\epsilon\leq
\frac{\zeta\left(  \frac{3}{2}\right)  }{\left(  \zeta\left(  \frac{5}%
{2}\right)  \right)  ^{\frac{3}{5}}}\left(  \frac{4\pi}{3}\right)  ^{\frac
{3}{5}}E^{\frac{3}{5}}+\frac{m_{0}}{10}%
\]

Using now the hypothesis (\ref{G1E2}) in Theorem \ref{Main} as well as the
fact that $M-\frac{\zeta\left(  \frac{3}{2}\right)  }{\left(  \zeta\left(
\frac{5}{2}\right)  \right)  ^{\frac{3}{5}}}\left(  \frac{4\pi}{3}\right)
^{\frac{3}{5}}E^{\frac{3}{5}}=m_{0}$ we obtain:%
\begin{equation}
\int_{\left(  0,R\right)  }g\left(  t_{n},\epsilon\right)  d\epsilon
\geq\left[  M-\frac{\zeta\left(  \frac{3}{2}\right)  }{\left(  \zeta\left(
\frac{5}{2}\right)  \right)  ^{\frac{3}{5}}}\left(  \frac{4\pi}{3}\right)
^{\frac{3}{5}}E^{\frac{3}{5}}\right]  -\frac{m_{0}}{5}=\frac{4m_{0}}%
{5}\label{G9E2}%
\end{equation}
if $R$ is sufficiently small, depending only on $E,\ M.$ We now use
Proposition \ref{lowMass} combined with (\ref{G9E2}) and Theorem \ref{main}
implies that $f$ must become unbounded in finite time.\ Notice that we apply
Theorem \ref{main} taking as starting time $T_{1}=t_{n}.$ The Theorem
\ref{main} can be applied there due to the invariant of the problem under
translations in time. Note also that given a mild solution of (\ref{F3E2}),
(\ref{F3E3}) defines also a mild solution in any time interval $\left[
T_{1},T_{2}\right]  \subset\left[  0,T_{\max}\right)  .$ Indeed, we just need
to split the integral in time and using the remaining terms as new initial
data for the solution using the semigroup property of the exponential term.
\end{proof}

\section{End of the Proof of Theorem \ref{Main}}

\setcounter{equation}{0} \setcounter{theorem}{0}

\begin{proof}
[End of the Proof of Theorem \ref{Main}]It is just a consequence of Theorem
\ref{main}, Proposition \ref{lowMass} and Proposition \ref{ConcMass}. Notice
that the blow-up condition (\ref{main}) can be applied starting at any time
$T\geq0$ and not necessarily at $T=0,$ due to the invariance of
(\ref{F3E2}) under translations in time.
\end{proof}

\section{Finite time condensation.}

\begin{proof}[Proof of Theorem \protect\ref{Cond1}] Let  $g_{0}\in\mathcal{M}_{+}\left(  \mathbb{R}^{+};\left(  1+\epsilon\right)\right)  $ be the weak solution of  (\ref{F3E4}), (\ref{F3E5}) with initial data $g_0$, whose existence is proved in  \cite{Lu2}.  It is also proved in the same article that, when $t\to +\infty$,   $g(t)$ converges  in the  sense of measures to the unique steady state $F _{ BE }$ of the family (\ref{St1}) with the same mass $M$ and energy $E$.  Since $M$ and $E$ satisfy (\ref{G1E2}), that steady state contains a Dirac measure at the origin. We deduce that there exists  $t_n\to +\infty$,  $\rho_n \to 0$ and a constant $m>0$ such that for all $n$:
$$
\int_{0}^{\rho_n}g (t_n, \epsilon) \left( \epsilon \right)
d\epsilon\geq m. $$ 
It then follows that $g(t_n , \cdot)$ satisfies:
\begin{equation}
\int _0^\rho g(t_n, \epsilon) d\epsilon\ge K^*\rho_n ^{\theta_*}
\end{equation}
if $n$ is sufficiently large. On the other hand, by Proposition \ref{lowMass}, if $t_n>T_1(E, M)$, $g(t_n)$ also satisfies 
\begin{equation}
\int_{0}^{R}g\left(t_n, \epsilon\right)  d\epsilon\geq KR^{\frac{3}{2}%
}\ \ \text{for any }0<R\leq\rho_{1}(E, M), 
\end{equation}
and then, choosing $\nu < K$ we have:
\begin{equation}
\int _0^R g_0(\epsilon) d\epsilon\ge \nu R^{3/2}\ \ \text{for any }0<R\leq\rho_{1}(E, M).
\end{equation}
Using that $t_n\to \infty$ as $n\to \infty$, we deduce that for if $n$ is sufficiently large, $g(t_n)$ satisfies condition (\ref{Z1E6}). Therefore, since for all $t>t_n $, the function $g$ may be seen as a weak solution of (\ref{F3E4}), (\ref{F3E5}) with initial data $g(t_n )$ 
we deduce from Theorem \ref{Cond1} that there exists $T_0\ge t_n $ such that property (\ref{Z1E7}) holds. 
\end{proof}

\section{Blow-up and Condensation for subcritical data.}

\setcounter{equation}{0} \setcounter{theorem}{0}

It is worth to notice that the blow-up and condensation conditions ((\ref{Condicion}) and 
(\ref{Z1E6}) respectively) in \cite{EV1} are
 purely
local. As a consequence it is possible to find initial data with values of the
particle density and the energy $\left(  M,E\right)  $ in the subcritical
region, but on the other hand yielding blow-up in finite time or condensation. We formulate
the results as the two following Theorems due to their independent interest. 

\begin{theorem}
\label{BlowupLocal} Given $f_{0}\in L^{\infty}\left(  \mathbb{R}^{+};\left(  1+\epsilon\right)
^{\gamma}\right)  $ with $\gamma>3.$ Let us denote as $M,\ E$ the numbers:%
\[
4\pi\int_{0}^{\infty}f_{0}\left(  \epsilon\right)  \sqrt{2\epsilon}%
d\epsilon=M\ \ ,\ \ 4\pi\int_{0}^{\infty}f_{0}\left(  \epsilon\right)
\sqrt{2\epsilon^{3}}d\epsilon=E
\]
Let us denote as $f\in L_{loc}^{\infty}\left(  \left[  0,T_{\max}\right)
;L^{\infty}\left(  \mathbb{R}^{+};\left(  1+\epsilon\right)  ^{\gamma}\right)
\right)  $ the mild solution of (\ref{F3E2}), (\ref{F3E3}) in Theorem
\ref{localExistence} where $T_{\max}$ is the maximal existence time. There
exist funtions $f_{0}\in L^{\infty}\left(  \mathbb{R}^{+};\left(
1+\epsilon\right)  ^{\gamma}\right)  $ with $\gamma>3$ such that:%
\begin{equation}
M<\frac{\zeta\left(  \frac{3}{2}\right)  }{\left(  \zeta\left(  \frac{5}%
{2}\right)  \right)  ^{\frac{3}{5}}}\left(  \frac{4\pi}{3}\right)  ^{\frac
{3}{5}}E^{\frac{3}{5}}\ \label{G9E5}%
\end{equation}
for which $T_{\max}<\infty.$
\end{theorem}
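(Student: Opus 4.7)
The plan is to exploit the locality of the blow-up condition (\ref{Condicion}) in Theorem \ref{main}, which constrains $f_0$ only on the small interval $[0,\rho_0]$, together with the observation that the moments $M$ and $E$ can be adjusted by placing additional mass far from the origin without affecting this local behaviour. First, I would fix a target pair $(M,E)$ satisfying (\ref{G9E5}) and some $\nu>0$; Theorem \ref{main} then provides constants $\rho_0=\rho_0(M,E,\nu)$, $K^{\ast}=K^{\ast}(M,E,\nu)$ and $\theta_{\ast}>0$. I would then construct $f_0$ of the form
\[
f_0(\epsilon) = a\,\mathbf{1}_{[0,\rho]}(\epsilon) + b\,\mathbf{1}_{[L,L+\sigma]}(\epsilon),
\]
for appropriate parameters $0<\rho\le\rho_0$, $L>\rho_0$ and $a,b,\sigma>0$. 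The role of the first term is to trigger (\ref{Condicion}), while the high-energy bump is tuned so that the total mass and energy agree with $M$ and $E$.

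For the local part I would take $a:=\max\{3\nu/2,\,(3K^{\ast}/2)\rho^{\theta_{\ast}-3/2}\}$. Then for every $R\le\rho$ one has $\int_0^R f_0\sqrt{\epsilon}\,d\epsilon \ge (2a/3)R^{3/2}\ge \nu R^{3/2}$, while $\int_0^{\rho} f_0\sqrt{\epsilon}\,d\epsilon\ge (2a/3)\rho^{3/2}\ge K^{\ast}\rho^{\theta_{\ast}}$, so (\ref{Condicion}) is satisfied. The mass and energy contributed by this local part are
\[
M_{\mathrm{loc}} = \frac{8\pi\sqrt{2}}{3}\,a\rho^{3/2},\qquad
E_{\mathrm{loc}} = \frac{8\pi\sqrt{2}}{5}\,a\rho^{5/2}.
\]
The key observation is that $a\rho^{3/2}\le (3\nu/2)\rho^{3/2}+(3K^{\ast}/2)\rho^{\theta_{\ast}}$, which tends to $0$ as $\rho\to 0^{+}$ whatever the sign of $\theta_{\ast}-3/2$ (since $\theta_{\ast}>0$); hence $\rho$ can be chosen small enough that $M_{\mathrm{loc}}<M$ and $E_{\mathrm{loc}}<E$.

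Next I would tune the high-energy bump to supply the remaining mass $m:=M-M_{\mathrm{loc}}$ and energy $e:=E-E_{\mathrm{loc}}$. Both are strictly positive, and for $\sigma\ll L$ a bump $b\,\mathbf{1}_{[L,L+\sigma]}$ contributes mass $\approx 4\pi\sqrt{2L}\,b\sigma$ and energy approximately $L$ times that quantity; the two moment conditions therefore force $L\approx e/m$ (automatically large once $\rho$ is small) and determine $b\sigma$ up to lower-order corrections, leaving enough freedom in the pair $(b,\sigma)$ to meet both constraints exactly (this amounts to solving the explicit system $\tfrac{2b}{3}[(L+\sigma)^{3/2}-L^{3/2}]=m/(4\pi\sqrt{2})$ and $\tfrac{2b}{5}[(L+\sigma)^{5/2}-L^{5/2}]=e/(4\pi\sqrt{2})$, which admits a solution whenever $L<e/m$). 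The resulting $f_0$ lies in $L^{\infty}(\mathbb{R}^{+};(1+\epsilon)^{\gamma})$ for any $\gamma>3$, has moments $(M,E)$ in the subcritical region and satisfies (\ref{Condicion}), so Theorem \ref{main} yields $T_{\max}<\infty$.

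The argument is essentially routine once the locality of (\ref{Condicion}) is taken for granted, and the only delicate point is the uniform smallness of $M_{\mathrm{loc}}$ and $E_{\mathrm{loc}}$ as $\rho\to 0$. This reduces to the estimate $a\rho^{3/2}=O(\rho^{\min(3/2,\theta_{\ast})})$, which is immediate from the definition of $a$ and, crucially, does not require any explicit knowledge of how $K^{\ast}$ and $\rho_0$ depend on their arguments; this is what makes the construction work for any prescribed subcritical pair $(M,E)$.
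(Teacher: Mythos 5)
Your construction is correct and realizes in explicit form what the paper's very terse proof merely indicates: exploit the locality of the blow-up condition (\ref{Condicion}) by concentrating mass near $\epsilon=0$, and adjust the rest of the distribution so that the moments $(M,E)$ land in the subcritical region. Where the paper simply invokes a construction from \cite{EV1} and says one may modify a given $f_0$ near the origin ``changing $M$ and $E$ as little as wished,'' you pin $(M,E)$ down exactly, which is actually the cleaner route: since $\rho_0$ and $K^{\ast}$ in Theorem \ref{main} depend on $(M,E,\nu)$ and no continuity in $(M,E)$ is asserted, fixing the target pair first, requesting the corresponding $\rho_0,K^{\ast},\theta_{\ast}$, and then building $f_0$ with precisely those moments sidesteps any ambiguity. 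The key uniformity step, $a\rho^{3/2}\le(3\nu/2)\rho^{3/2}+(3K^{\ast}/2)\rho^{\theta_{\ast}}=O\bigl(\rho^{\min(3/2,\theta_{\ast})}\bigr)\to0$, is right and is exactly what lets you shrink $M_{\mathrm{loc}},E_{\mathrm{loc}}$ below any prescribed pair without knowing how $K^{\ast}$ scales; and the solvability of the $(b,L,\sigma)$ system is sound, since for fixed $L$ the ratio of the two bump moments is a continuous strictly decreasing function of $\sigma$ running from $1/L$ (as $\sigma\to0^{+}$) to $0$, so a solution exists precisely when $L<e/m$, which is compatible with $L>\rho$ once $\rho$ is small. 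One small inaccuracy in the prose: $L\approx e/m$ is not ``automatically large'' as $\rho\to0$ --- it converges to the fixed number $E/M$ --- but nothing in the argument uses largeness of $L$, only $\rho<L<e/m$, so this does not affect correctness.
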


\begin{proof}
It is just a consequence of the fact that (\ref{C1}), (\ref{Condicion}) can be
obtained for initial data satisfying (\ref{G9E5}). This can be done  modifying
locally $f_{0}$ in a region $\epsilon\leq R$, with $R$ small  in order to obtain   (\ref{C1}), (\ref{Condicion}).
This can be made, as in the Proof of Proposition (2.4) in \cite{EV1}, changing $M$ and $E$ as little as wished.
\end{proof}

\begin{theorem}
\label{CondLocal}
There exists weak solutions $g$ of (\ref{F3E4}), (\ref{F3E5})  in the sense of Definition \ref{weak} with initial data
$g_{0}\in\mathcal{%
M}_{+}\left( \mathbb{R}^{+};\left( 1+\epsilon\right) \right) $ satisfying$\ $%
\begin{eqnarray*}
&&4\pi\sqrt{2}\int_{\mathbb{R}^{+}}g_{0}\left( \epsilon\right) d\epsilon  =M\
,\ \ 4\pi\sqrt{2}\int_{\mathbb{R}^{+}}g_{0}\left( \epsilon\right) \epsilon
d\epsilon=E,
\end{eqnarray*}
with $M$ and $E$ like in  (\ref{G9E5}),  such that  (\ref{Z1E7})  holds for some finite $T>0$.
\end{theorem}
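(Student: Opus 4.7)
The strategy mirrors that of Theorem \ref{BlowupLocal}, but with the finite-time condensation criterion of \cite{EV1} (Section 4) playing the role that the blow-up criterion of Theorem \ref{main} plays there. The key observation, as noted by the authors after stating the theorem, is that condition (\ref{Z1E6}) on $g_{0}$ is purely local near the origin: it only prescribes the behaviour of $g_{0}$ on a small interval $[0,\rho]$. Consequently one can engineer initial data whose bulk sits comfortably inside the subcritical region (\ref{G9E5}) while their small-energy tail is manufactured so as to force condensation.

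Concretely, fix any subcritical pair $(M,E)$ satisfying (\ref{G9E5}) strictly, pick $\nu>0$, and let $\theta_{\ast}>0$, $\rho_{0}=\rho_{0}(M,E,\nu)$, $K^{\ast}=K^{\ast}(M,E,\nu)$ and $T_{0}=T_{0}(M,E)$ be the constants produced by the condensation criterion of \cite{EV1}. I would take $g_{0}=g_{0}^{\mathrm{sp}}+g_{0}^{\mathrm{bk}}$, where the spike
\[
g_{0}^{\mathrm{sp}}(\epsilon)=c\sqrt{\epsilon}\,\chi_{[0,\rho]}(\epsilon),\qquad c=\max\Bigl\{\tfrac{3\nu}{2},\,\tfrac{3K^{\ast}}{2}\rho^{\theta_{\ast}-3/2}\Bigr\},
\]
is chosen so that $\int_{0}^{R}g_{0}^{\mathrm{sp}}d\epsilon=\tfrac{2c}{3}R^{3/2}\ge \nu R^{3/2}$ for every $R\in(0,\rho]$ and $\int_{0}^{\rho}g_{0}^{\mathrm{sp}}d\epsilon\ge K^{\ast}\rho^{\theta_{\ast}}$, so that (\ref{Z1E6}) is fulfilled by $g_{0}^{\mathrm{sp}}$ alone. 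The spike contributes mass $M_{s}=\tfrac{2c}{3}\rho^{3/2}$ and energy $E_{s}=\tfrac{2c}{5}\rho^{5/2}$; both tend to $0$ as $\rho\downarrow 0$ (since $\theta_{\ast}>0$), so by choosing $\rho$ small enough I can ensure $M_{s}+E_{s}$ is as small as desired.

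For the bulk, I would pick any measure $g_{0}^{\mathrm{bk}}\in\mathcal{M}_{+}(\mathbb{R}^{+};1+\epsilon)$ supported in $[\rho,\infty)$ with $\int g_{0}^{\mathrm{bk}}d\epsilon=\frac{M}{4\pi\sqrt{2}}-M_{s}$ and $\int\epsilon g_{0}^{\mathrm{bk}}d\epsilon=\frac{E}{4\pi\sqrt{2}}-E_{s}$; such a measure (e.g. a two-point measure, or a bounded density on a bounded interval) exists for $\rho$ sufficiently small since $\frac{E}{4\pi\sqrt{2}}-E_{s}>\rho\bigl(\frac{M}{4\pi\sqrt{2}}-M_{s}\bigr)$ automatically. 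The resulting $g_{0}$ satisfies the normalisations in (\ref{Z1E5N}) for $(M,E)$ and the local estimate (\ref{Z1E6}), with the parameters $\rho_{0},K^{\ast},\theta_{\ast}$ fixed by $(M,E,\nu)$. Applying the condensation criterion of \cite{EV1} to this $g_{0}$ yields a weak solution $g$ of (\ref{F3E4}), (\ref{F3E5}) satisfying (\ref{Z1E7}) for $T=T_{0}$, proving the theorem. The only subtlety worth underlining is that the parameters $\rho_{0},K^{\ast}$ depend on $(M,E,\nu)$, but since $(M,E,\nu)$ is chosen first and the construction of $g_{0}$ is performed afterwards, there is no circularity; this is the same observation that underlies the proof of Theorem \ref{BlowupLocal}, and the compatibility check is routine.
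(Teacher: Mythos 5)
Your proposal is correct and takes essentially the same route as the paper: the paper's proof (by reduction to Theorem~\ref{BlowupLocal}) consists in modifying a subcritical datum locally near $\epsilon=0$ so that the purely local condensation criterion (\ref{Z1E6}) holds, while changing $M$ and $E$ as little as wished, and then invoking the condensation theorem from \cite{EV1} together with Lu's existence of a global weak solution. You have simply made the local modification explicit (spike plus bulk decomposition, with the spike $c\sqrt{\epsilon}\chi_{[0,\rho]}$ engineered to saturate (\ref{Z1E6}) and its mass/energy vanishing as $\rho\to 0$), which is exactly the construction the paper delegates to Proposition~2.4 of \cite{EV1}.
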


\begin{proof}
The Proof of Theorem  \ref{CondLocal} is completely similar to that of Theorem  \ref{BlowupLocal} with  condition  (\ref{Z1E6})  instead of  (\ref{C1}), (\ref{Condicion})
and using $g_0$ instead of $f_0$.
\end{proof}

\begin{acknowledgement}
This work has been supported by DGES Grant 2011-29306-C02-00, IT-305-07 and
the Hausdorff Center for Mathematics of the University of Bonn.
\end{acknowledgement}

\end{document}